 \newcommand{\bs}{\bigskip}
 \newcommand{\ms}{\medskip}
 \newcommand{\n}{\noindent}
 \newcommand{\s}{\smallskip}
 \newcommand{\hs}[1]{\hspace*{ #1 mm}}
 \newcommand{\vs}[1]{\vspace*{ #1 mm}}
 \newcommand{\setempty}{\mathrm{\O}}
 \newcommand{\real}{\mathbb{R}}
 \newcommand{\nat}{\mathbb{N}}
 \newcommand{\integer}{\mathbb{Z}}
 \newcommand{\rational}{\mathbb{Q}}
 \newcommand{\complex}{\mathbb{C}}
 \newcommand{\algebraic}{\mathbb{A}}
 \newcommand{\prob}{{\mathrm{Prob}}}
 \newcommand{\co}{\mathrm{co}\mbox{-}}
 \newcommand{\ie}{\textrm{i.e.},\hspace*{2mm}}
 \newcommand{\eg}{\textrm{e.g.},\hspace*{2mm}}
 \newcommand{\etalc}{\textrm{et al.}}
 \newcommand{\CC}{{\cal C}}
 \newcommand{\cequalp}{\mathrm{C}_{=}\mathrm{P}}
 \newcommand{\reg}{\mathrm{REG}}
 \newcommand{\lmatrices}[4]{\left[ \begin{array}{cc} #1 & #2 \\%
      #3 & #4   \end{array}\right]}
 \newcommand{\ninematrices}[9]{\left[ \begin{array}{ccc} #1 & #2 & #3 \\%
      #4 & #5 & #6 \\  #7 & #8 & #9   \end{array}\right]}
 \newcommand{\twothreematrices}[6]{\left[ \begin{array}{ccc} #1 & #2 & #3 \\%
       #4 & #5 & #6   \end{array}\right]}
\theoremstyle{plain}
 \newtheorem{theorem}{Theorem}[section]
 \newtheorem{lemma}[theorem]{Lemma}
 \newtheorem{proposition}[theorem]{Proposition}
 \newtheorem{corollary}[theorem]{Corollary}
 \newtheorem{claim}{Claim}
 \newenvironment{proof}{\par \noindent
            {\bf Proof. \hs{2}}}{\hfill$\Box$ \vspace*{3mm}}
 \newenvironment{proofsketch}{\par \noindent
            {\bf Proof Sketch. \hs{2}}}{\hfill$\Box$ \vspace*{3mm}}
 \newenvironment{proofof}[1]{\vspace*{5mm} \par \noindent
         {\bf Proof of #1.\hs{2}}}{\hfill$\Box$ \vspace*{3mm}}
 \newcommand{\ceilings}[1]{\lceil #1 \rceil}
 \newcommand{\pair}[1]{\langle #1 \rangle}
 \newcommand{\qubit}[1]{| #1 \rangle}
 \newcommand{\bra}[1]{\langle #1 |}
 \newcommand{\ket}[1]{| #1 \rangle}
\newcommand{\ignore}[1]{}
\newcommand{\cent}{{|}\!\!\mathrm{c}}
\newcommand{\dollar}{\$}
 \newcommand{\am}{\mathrm{AM}}
 \newcommand{\twopfa}{\mathrm{2PFA}}
 \newcommand{\twoppfa}{\mathrm{2PPFA}}
 \newcommand{\tworfa}{\mathrm{2RFA}}
 \newcommand{\twoqfa}{\mathrm{2QFA}}
 \newcommand{\twobqfa}{\mathrm{2BQFA}}
 \newcommand{\onepqfa}{\mathrm{1PQFA}}
 \newcommand{\onecequalqfa}{\mathrm{1C_{=}QFA}}
 \newcommand{\onenqfa}{\mathrm{1NQFA}}
 \newcommand{\twopqfa}{\mathrm{2PQFA}}
 \newcommand{\twocequalqfa}{\mathrm{2C_{=}QFA}}
 \newcommand{\twoeqfa}{\mathrm{2EQFA}}
 \newcommand{\tworqfa}{\mathrm{2RQFA}}
 \newcommand{\twonqfa}{\mathrm{2NQFA}}
 \newcommand{\twocequalpfa}{\mathrm{2C_{=}PFA}}
 \newcommand{\polytime}{poly\mbox{-}time}
 \newcommand{\lintime}{lin\mbox{-}time}
 \newcommand{\abshalt}{abs\mbox{-}halt}
 \newcommand{\comphalt}{comp\mbox{-}halt}
\begin{document}
\pagestyle{plain}
\setcounter{page}{1}

\begin{center}
{\Large {\bf Complexity Bounds of Constant-Space Quantum Computation}}\footnote{An extended abstract appeared in the Proceedings of the 19th International Conference on Developments in Language Theory (DLT 2015), Liverpool, United Kingdom, July 27--30, 2015, Lecture Notes in Computer Science, Springer, vol.9168, pp.426--438, 2015.} \bs\ms\\

{\sc Tomoyuki Yamakami}\footnote{Present Affiliation: Faculty of Engineering, University of Fukui, 3-9-1 Bunkyo, Fukui 910-8507,  Japan} \bs\\
\end{center}

\begin{quote}
\n{\bf Abstract:}
We realize constant-space quantum computation by measure-many two-way quantum finite automata and evaluate their language recognition power by analyzing patterns of their exotic behaviors and by exploring their structural properties. In particular, we show that, when the automata halt ``in finite steps'' along all computation paths, they must terminate in worst-case liner time. In the bounded-error probability case, the acceptance of the automata depends only on the computation paths that terminate within exponentially many steps even if not all computation paths may terminate.
We also present a classical simulation of those automata on two-way multi-head probabilistic finite automata with cut points.
Moreover, we discuss how the recognition power of the automata varies as the automata's acceptance criteria change to error free, one-sided error,  bounded error, and unbounded error by comparing the complexity of their computational powers.
We further note that, with the use of arbitrary complex transition amplitudes,   two-way unbounded-error quantum finite automata and two-way bounded-error $2$-head quantum finite automata can recognize certain non-recursive languages,  whereas two-way error-free quantum finite automata recognize only recursive languages.

\s

\n{\bf Keywords:} constant space, quantum finite automata, cut point, error free, one-sided error, bounded error, unbounded error, absolutely halt, completely halt, determinant
\end{quote}

\section{Quick Overview}\label{sec:overview}

Computer scientists have primarily concerned themselves with automated mechanical procedures of solving real-life problems in the most practical fashion.  For such practicality, we have paid more attention to ``resources'' used up to execute desired protocols on given computing devices. In order to  build small-scale computing devices, in particular, we are keen to \emph{memory space}, which  stores information or data necessary to carry out a carefully designed protocol on these devices, rather than its running time.
We are particularly interested in devices that consume only a constant amount of memory space,  independent of input size.
Among those devices, we are focused on quantum-mechanical computing devices as a part of the leading Nature-inspired computing paradigm.
Since its introduction in early 1980s, quantum computation theory founded on those devices has significantly evolved. In retrospective, since quantum mechanics is believed by many to govern Nature, it seems inevitable for scientists to have come to inventing  quantum-mechanical computing device.
In quantum computing, when algorithmic procedures require only constant memory space on devices, we have customarily viewed  such devices as \emph{quantum finite automata} (or qfa's), which are a quantum-mechanical extension of classical finite(-state) automata, mainly because they are still capable of storing a fixed amount of useful information by way of manipulating a few number of ``inner states''  even without equipping an additional memory tape. A qfa proceeds its computation simply by applying a finite-dimensional unitary transition matrix and  a set of projective measurements to a linear combination  of qfa's inner states as well as tape head positions. Such simple framework of qfa's is ideal for us to conduct a deeper analysis on the execution of their algorithmic procedures.
Among a variety of qfa models proposed recently (e.g., \cite{ABG+06,Hir10,Nay99}), we are focused mostly on \emph{measure-many two-way quantum finite automata} (or \emph{2qfa's}, for brevity) of Kondacs and Watrous \cite{KW97} because of the simplicity of their definition and the consistency with the past literature \cite{NY04b,NY09,NY15,VY14,Yam14}. Such a model may remain as a core model for the better understandings of fundamental properties of quantum-mechanical constant-memory devices.

In accordance with quantum mechanics, a computation of a 2qfa gradually evolves by applying a unitary transition matrix
to a superposition of \emph{configurations} in a finite-dimensional Hilbert space (called a \emph{configuration space}). Unlike a qfa model of Moore and Crutchfield \cite{MC00}, Kondacs and Watrous's model further uses an operation of observing halting inner states at every computation step. It turns out that allowing its tape head to  move in all directions enables the 2qfa's to attain a significant increase of computational power over 2-way deterministic finite automata, whereas one-way qfa's fail to capture even regular languages  \cite{KW97}. Despite our efforts over the past 20 years, the behaviors of 2qfa's have remained largely enigmatic to us
and the 2qfa's seem to be still awaiting for full investigation of their functionalities.

There are four important issues that we wish to address in depth.

\s

(1) \emph{Acceptance criteria issue.}
The first issue to contemplate is that, in traditional automata theory, recognizing languages by probabilistic finite automata (or pfa's) has been subject to a threshold of the acceptance probability of the automata under the term of ``cut point'' and ``isolated cut point.'' In quantum automata theory, on the contrary, the recognition of languages is originally defined in terms of ``bounded-error probability'' of qfa's \cite{KW97,MC00} although the ``isolated cut point'' criterion has been occasionally used in certain literature (e.g., \cite{BC01}).
What is a precise relationship between those two criteria?
When automata are particularly limited to one-way head moves, as noted in Lemma \ref{PQFA-equal-SL}, the cut-point criterion of pfa's coincides with the unbounded-error criterion of qfa's; however, the same equivalence does not hold in a general 2-way case. In this paper, we shorthandedly denote by $\twobqfa$ the family of languages recognized by bounded-error 2qfa's. When we modify this bounded-error criterion of 2qfa's
to error free (or exact), one-sided error, and unbounded error probabilities,
we further obtain crucial language families\footnote{These notations are analogous to $\mathrm{EQP}$, $\mathrm{RP}$, and $\mathrm{PQP}$ in computational complexity theory.} $\twoeqfa$, $\tworqfa$, and $\twopqfa$, respectively. With the use of `cut point,'' in contrast, two families\footnote{These are associated with $\mathrm{NQP}$ and $\mathrm{C_{=}P}$.} $\twonqfa$ and $\twocequalqfa$ can be thought respectively in terms of zero cut point and nonnegative exact cut points.
In Section \ref{sec:absolute-halt}, we shall state basic relationships among those language families by presenting various inclusions and collapses of them.

(2) \emph{Termination issue.}
One-way qfa's run within $O(n)$ steps but 2-way qfa's are not guaranteed to have running time-bounds. Primarily, similar to 2pfa's, we have been interested in only 2qfa's whose computation paths eventually terminate with at least the 50\% chance. To an arbitrary 2qfa, we cannot implement any internal clock so that the 2qfa terminates its computation at any specified time.
In the past literature, on the contrary, space-bounded quantum computation on quantum Turing machines has been discussed mostly in an extreme case of \emph{absolute halting} (i.e.,
eventual termination of all computation paths) \cite{Wat03}.
Bounded-error 2qfa's that halt absolutely induce
a language family, which is denoted by $\twobqfa(\abshalt)$; in contrast, 2qfa's whose computation paths terminate with probability $1$ (i.e., the probability of non-halting computation is $0$) are said to \emph{halt completely} and introduce another language family $\twobqfa(\comphalt)$. What is the computational power of those language families?
In Section \ref{sec:runtime-2qfa}, we shall observe that, when a 2qfa makes bounded errors, most computation paths of the 2qfa actually terminate in  exponentially many steps. A key to the proof of this phenomenon is the \emph{Dimension Lemma} of Yao \cite{Yao98}, presented in Section \ref{sec:absolute-halt}, which is a direct consequence of an analysis of 2qfa's transition matrices. Furthermore,  when 2qfa's halt absolutely,
we can upper-bound by $O(n)$ the ``worst case'' running time (i.e., the time required for  the longest computation path to terminate) of those 2qfa's, where $n$ refers to input length.

(3) \emph{Transition Amplitude issue.}
Opposed to two-way probabilistic finite automata (or 2pfa's), 2qfa's make their  next moves with certain \emph{(transition) amplitudes}, which are in general  arbitrary complex numbers of absolute values at most $1$.
Since it is possible to encode a large amount of classical information into a few qubits, an early study of  Adleman, DeMarrais, and Huang  \cite{ADH97} on polynomial-time quantum computation revealed that the restriction on the choice of such amplitudes greatly alter the computational power of underlying quantum machines. We thus need to specify a set $K$ of (transition) amplitudes to be used by 2qfa's. For notational convenience, we write $\twoeqfa_{K}$ when all underlying 2qfa's use only amplitudes drawn from $K$.
In Section \ref{sec:non-recursive}, we shall show that $\twoeqfa_{\complex}$ is strictly contained in the family of recursive languages although underlying 2qfa's can manipulate non-recursive amplitudes, where $\complex$ is the set of complex numbers.
In Section \ref{sec:absolute-halt}, we shall prove that $\twoeqfa_{\complex}(\abshalt)$ coincides with $\twoeqfa(\abshalt)$ restricted  to real algebraic amplitudes.
In Section \ref{sec:multi-head-FA}, we shall claim that $\twopqfa_{\rational}$ (where $\rational$ is the set of rational numbers) is different from $\twopqfa_{\complex}$.
As for bounded-error qfa's, however, we shall only say that a multi-head extension of 2qfa's can recognize non-recursive languages. This contrasts the polynomial-time case, in which the language family $\mathrm{BQP}_{\complex}$ (bounded-error quantum polynomial-time) with $\complex$-amplitudes contains non-recursive languages \cite{ADH97}.

(4) \emph{Classical simulation issue.}
Watrous \cite{Wat03} presented a general procedure of simulating space-bounded unbounded-error quantum Turing machines on classical Turing machines with reasonable overhead.
As noted in \cite{NY09}, this simulation leads to the containment  $\twobqfa_{\algebraic}\subseteq\mathrm{PL}\subseteq \mathrm{P}$, where the subscript $\algebraic$ indicates the use of complex algebraic amplitudes and $\mathrm{PL}$ is the  family of all languages recognized by unbounded-error probabilistic Turing machines with $\{0,1/2,1\}$-transition probabilities using $O(\log n)$ space.
In Section \ref{sec:classical-simulation}, we shall give a better complexity upper bound to $\twopqfa$ (and therefore $\twobqfa$) using \emph{multi-head} 2-way probabilistic finite automata with cut points. For this purpose, we shall make an appropriate implementation of a GapL-algorithm of \cite[Theorem 4]{MV97} that computes integer determinants. Notice that, for our implementation, we need to make various changes to the original algorithm. Such changes are necessary because a target matrix is an integer matrix and is given as ``input'' in \cite{MV97}; however, in our case, our target matrix is a real matrix and we need to realize all entries of this matrix in terms of ``acceptance probabilities.''
\ms


The last section will present a short list of challenging questions associated with the aforementioned four issues. Since qfa's may be viewed as a manifestation of quantum mechanics, a deep understanding of the qfa's naturally promotes a better understanding of quantum mechanics in the end. We thus strongly hope that this work stimulates more intensive research activities on the behaviors of qfa's, leading to surprising properties of the qfa's.

\vs{-2}
\sloppy
\section{Basic Notions and Notation}

We quickly review the basic notions and notation necessary to read through the rest of this paper.

\subsection{General Definitions}

Let $\nat$ be the set of all \emph{natural numbers} (that is, nonnegative integers) and set $\nat^{+}=\nat - \{0\}$. Moreover, let $\integer$, $\rational$, $\real$, and $\complex$ denote respectively the sets of all \emph{integers}, of all \emph{rational numbers}, of all \emph{real numbers}, and of all \emph{complex numbers}. The notation $\algebraic$ stands for the set of all \emph{algebraic complex numbers}. For brevity, we write $\imath$ for $\sqrt{-1}$.
Given any complex number $\alpha$, $\alpha^*$ denotes its \emph{conjugate}. The real unit interval between $0$ and $1$ is denoted $[0,1]$. For any two numbers $m,n\in\integer$ with $m\leq n$,
$[m,n]_{\integer}$ expresses the {\em integer interval} between $m$ and $n$; that is, the set $\{m,m+1,m+2,\ldots,n\}$. For brevity, we write $[n]$ for $[1,n]_{\integer}$ for each number $n\in\nat^{+}$.
For any finite set $Q$, $|Q|$ denotes the {\em cardinality} of $Q$.
All vectors in $\complex^n$ are expressed as \emph{column vectors} unless otherwise stated.
Given a number $n\in\nat^{+}$, $M_n(\complex)$ stands for the set of all $n\times n$ complex matrices.
For such a matrix $A=[a_{ij}]\in M_n(\complex)$ and any pair $i,j\in[n]$, the notation $A[i,j]$ refers to $A$'s entry specified by row $i$ and column $j$ of $A$, and  $A_{i,j}$ denotes the submatrix obtained from $A$ by deleting row $i$ and column $j$. The {\em determinant} of $A$ is   $det(A)=\sum_{\sigma}\prod_{i=1}^{n}(-1)^{i+\sigma(i)}a_{i,\sigma(i)}$, where $\sigma$ is taken over all permutations on $[n]$.
The {\em (classical) adjoint} (or {\em adjugate}) of $A$, denoted $adj(A)=[b_{ij}]$, is the matrix whose $(i,j)$-entry is defined by $b_{ij}= (-1)^{i+j}det(A_{i,j})$ for any $i,j\in[n]$.
Assuming that $A$ is nonsingular, it holds that  $A^{-1} = adj(A)/det(A)$; in particular, $A^{-1}[i,j] = adj(A)[i,j]/det(A) = (-1)^{i+j}det(A_{i,j})/det(A)$.
For any complex matrix $A$, the notation $A^T$ and $A^{\dagger}$ respectively denote the \emph{transpose} and the \emph{Hermitian adjoint} of $A$.
For any vector $x$, $\|x\|$ denotes the \emph{$\ell_2$-norm} of $x$ (\ie $\|x\| = (\sum_{i=1}^{n}|x_i|^2)^{1/2}$ if $x=(x_1,x_2,\ldots,x_n)$).
Let $\|A\|$ be the \emph{operator norm} (or \emph{matrix norm})  defined as $\|A\|=\max\{\|Ax\|: \|x\|\neq0\}$ and let $\|A\|_{2}$ be the \emph{Frobenius norm}  $\left(\sum_{i,j}|a_{i,j}|^2\right)^{1/2}$ of $A=[a_{ij}]$.
The {\em trace norm} $\|A\|_{tr}$ of $A$ is $\min\{|Tr(AX^{\dagger})|: \|X\|\leq 1\}$, where $Tr$ indicates the \emph{trace operator}. An important fact is that, if the matrix norm $\|A\|$ is less than $1$, $I-A$ is invertible and $(I-A)^{-1}$ coincides with $\sum_{k=0}^{\infty}A^k$. See, \eg \cite{HJ85} for basic properties of matrices in $M_n(\complex)$.

In general, we use the notation $\Sigma$ for an arbitrary nonempty input alphabet (not necessarily limited to $\{0,1\}$). A {\em string $x$ over $\Sigma$} is a finite sequence of symbols in $\Sigma$ and its {\em length} $|x|$ indicates the number of occurrences of symbols in $x$. In particular, the string of length $0$ is called the {\em empty string} and denoted by $\lambda$. For each number $n\in\nat$, $\Sigma^n$ denotes the set of all strings over $\Sigma$ of length exactly $n$. We write $\Sigma^*$ for $\bigcup_{n\in\nat}\Sigma^n$.
A {\em partial problem} over alphabet $\Sigma$ is a pair $(A,B)$ such that $A,B\subseteq \Sigma^*$ and $A\cap B=\setempty$. When $A\cup B=\Sigma^*$ holds, $B$ becomes the \emph{complement} of $A$ (denoted $\Sigma^*-A$ or simply $\overline{A}$ if $\Sigma$ is clear from the context). We identify $(A,\overline{A})$ with $A$, which is simply called a {\em language}. For a family $\CC$ of languages, $\co\CC$ means the collection of all languages whose complements belong to $\CC$.

\subsection{Classical Finite Automata and Cut Point Formulation}\label{sec:classical-cutpoint}


We assume the reader's familiarity with \emph{2-way probabilistic finite automata} (or 2pfa's, in short) with real transition probabilities.
To make it easier to understand a direct connection to quantum finite automata, we formulate such 2pfa's as $(Q,\Sigma,\delta,q_0,Q_{acc},Q_{rej})$, by including a set $Q_{rej}$ of rejecting states, which was not present in  Rabin's original definition  in \cite{Rab63}. Formally, a 2pfa $M$  is a sextuple $(Q,\Sigma,\delta,q_0,Q_{acc},Q_{rej})$, in which a  tape head moves freely to the right, to the left, and stays still simply by applying a
transition function whose transition probabilities are drawn from $[0,1]$.  Moreover, $Q$ is a finite set of inner states, $q_0$ is the initial (inner) state, and $Q_{acc}$ is a set of accepting (inner) states. An input string $x=x_1x_2\cdots x_n$ of length $n$ is initially given onto an input tape, surrounded by two designated endmarkers $\cent$ (left endmarker) and $\dollar$ (right endmarker). Let $\check{\Sigma}=\Sigma\cup\{\cent,\dollar\}$. A tape head starts off at $\cent$ in the initial inner state $q_0$. For simplicity, all tape cells are indexed by integers from $0$ to $n+1$, where $\cent$ is located in cell $0$ and $\dollar$ is in cell $n+1$. The transition function $\delta: Q\times\check{\Sigma}\times Q\times D\to[0,1]$ with $D=\{0,\pm1\}$ naturally induces a \emph{transition matrix} acting on the vector space spanned by $\{(q,i)\mid q\in Q,i\in[0,|x|+1]_{\integer}\}$ and we demand that such a matrix should be
\emph{stochastic}.\footnote{A real square matrix is called \emph{stochastic} if every column of the matrix sums up to $1$. The use of ``columns'' instead of ``rows'' comes from the fact that we apply stochastic matrices from the right (not from the left), opposed to the initial formulation of  Rabin \cite{Rab63}, in accordance with the definition of quantum finite automata given in Section \ref{sec:quantum-finite-automata}.}
When all transition probabilities of $M$ are drawn from $K$,
we succinctly say that a 2pfa takes \emph{$K$-transition probabilities}. Implicitly, we always assume that $\{0,1/2,1\}\subseteq K$ so that $M$ can make any deterministic move and also flip any fair coin. The \emph{acceptance probability} (resp., \emph{rejection probability}) of $M$ on input $x$ is the sum of the probabilities that $M$ produces accepting (resp., rejecting) computation paths starting with the input $x$. Two notations $p_{M,acc}(x)$ and $p_{M,rej}(x)$ respectively denote the acceptance probability and the rejection probability of $M$ on $x$.
Occasionally, we write $\prob_{M}[M(x)=1]$ to express the probability of $M$ accepting $x$, and $\prob_{M}[M(x)=0]$ for the probability of $M$ rejecting $x$.

At this moment, it is important to discuss the acceptance criteria of 2pfa's. Since the work of Rabin \cite{Rab63}, the \emph{acceptance criteria} of a given probabilistic finite automaton are determined by a technical term of ``cut point,'' which is a threshold of its acceptance probabilities alone (neglecting rejection probabilities because non-accepting computation paths have been traditionally treated as ``rejected'').
Given a constant $\eta\in[0,1)$ and a language $L$,  a 2pfa $M$ is said to \emph{recognize $L$ with cut point} $\eta$ if (1) for any $x\in L$, $M$ accepts $x$ with probability more than $\eta$ (\ie $p_{M,acc}(x)>\eta$) and (2) for any $x\in\overline{A}$, $M$ accepts $x$ with probability at most $\eta$ (\ie $p_{M,acc}(x)\leq \eta$). It is known that we can set $\eta$ to be $1/2$ \cite{Tur68} by modifying the original 2pfa's properly. Similarly, we say that $M$ recognizes $L$ with \emph{isolated cut point} $\eta$ if there exists a constant $\varepsilon\in(0,1)$ with $0<\eta-\varepsilon\leq\eta+\varepsilon\leq1$ such that, for any $x\in L$, $p_{M,acc}(x)\geq \eta-\varepsilon$ and, for any $x\in\overline{A}$,  $p_{M,acc}(x)\leq \eta - \varepsilon$.

A language is \emph{$K$-stochastic} if it is recognized with an appropriate  cut point $\eta\in K\cap (0,1]$ by a certain \emph{1-way probabilistic finite automaton} (or a 1pfa) with $K$-transition probabilities, where a 1pfa\footnote{This machine is sometimes called a \emph{real-time} pfa.} always moves its tape head to the right until it scans $\dollar$ and halts.
When $K=\real$, we simply say that $L$ is \emph{stochastic}.  The notation $\mathrm{SL}_{K}$  refers to the family of all $K$-stochastic languages. Ka\c{n}eps \cite{Kan89} showed that the replacement of 1pfa's by 2pfa's does  not change the definition of $\mathrm{SL}_{\real}$. Moreover, the notation $\mathrm{SL}^{=}_{K}$ denotes the language family defined by the following criterion: there are a constant  (called an \emph{exact cut point}) $\eta\in K\cap (0,1]$ and a 1pfa $M$ with $K$-transition probabilities satisfying that, for all $x\in\Sigma^*$, $x \in L$ iff $p_{M,acc}(x)=\eta$.
When $K=\real$,  for example, it is possible to fix $\eta=1/2$. The complement family $\co\mathrm{SL}^{=}_{K}$ is sometimes denoted by $\mathrm{SL}^{\neq}_{K}$. It is not difficult to verify that $\co\mathrm{SL}_{\real}$ coincides with the family of all languages $L$ recognized by 1pfa's $M$ with ``non-strict cut points'' (which requires $p_{M,acc}(x)\geq \eta$ instead of $p_{M,acc}(x)>\eta$) for certain constants $\eta\in[0,1]$.
It is known in \cite{TYL10} that $\mathrm{SL}_{\rational}$ and $\mathrm{SL}^{=}_{\rational}$ are characterized in terms of one-tape linear-time Turing machines (namely, $\mathrm{1\mbox{-}PLIN}$ and $\mathrm{1\mbox{-}C_{=}LIN}$).
Despite our past efforts, we still do not know whether $\mathrm{SL}_{\real}$ is closed under complementation, whether $\mathrm{SL}^{=}_{\real}$ is included in $\mathrm{SL}_{\real}$, and whether $\mathrm{SL}^{=}_{\real}$ contains any non-recursive language (see, \eg \cite{Mac93} for references therein).

Regarding a 2pfa $M$, the \emph{expected running time} of $M$ on input $x$ is the average length of all computation paths produced during a computation of $M$ on $x$, provided that the probability of non-terminating computation paths is zero.
Opposed to this expected running time, we say that a 2pfa $M$ {\em runs in worst-case $t(n)$-time} if, on any input $x$, all computation paths (including both  accepting and rejecting paths) of $M$ must have length at most $t(|x|)$.

As a variant of 2pfa's, we define a \emph{$k$-head $2$-way probabilistic finite automaton} (or $k$head-2pfa, for brevity) by allowing a 2pfa to use $k$ tape heads that move separately along a single input tape \cite{Mac97}. The notation $\twoppfa_{K}(k\mbox{-}head)$ denotes the family of all languages recognized with cut points in $K\cap(0,1]$ by $k$head-2pfa's. In a similar way, $\mathrm{2C_{=}PFA}_{K}(k\mbox{-}head)$ is defined using ``exact cut points'' instead of the aforementioned ``cut points.''
We write $\mathrm{2PPFA}_K(k\mbox{-}head)[\polytime]$ (resp., $\mathrm{2C_{=}PFA}_K(k\mbox{-}head)[\polytime]$) for the class of all languages recognized with cut points (resp., exact cut points) in $K\cap(0,1]$ by $k$-head 2pfa's that run in worst-case polynomial time.

The notation $\#\twopfa_{K}$ expresses the collection of all \emph{stochastic functions}, which are of the form $p_{M,acc}$ for certain 2pfa's $M$ with $K$-transition probabilities (see \cite{Mac93} for the case of 1pfa's).
Similarly to $\mathrm{2PPFA}_K(k\mbox{-}head)[\polytime]$, we can expand $\#\twopfa_{K}$ to another function class $\#\mathrm{2PFA}_K(k\mbox{-}head)[\polytime]$. Moreover, let us recall a probabilistic complexity class $\mathrm{PL}$, which has been explained in Section \ref{sec:overview}.

In the deterministic case, we write $\reg$ for the family of all \emph{regular languages}, which are recognized by \emph{1-way deterministic finite automata} (or \emph{1dfa's}).  A \emph{2-way reversible finite automaton} (or 2rfa) is a 2-way deterministic finite automaton $(Q,\Sigma,\delta,q_0,Q_{acc},Q_{rej})$ whose transition function $\delta: Q\times\check{\Sigma}\to Q\times D$ satisfies the following \emph{reversibility property}: for any pair $p\in Q$ and $d\in D$, there exists a unique pair $(q,\sigma)\in Q\times \check{\Sigma}$ for which  $\delta(q,\sigma)=(p,d)$ holds. Let $\mathrm{2RFA}$ denote the family of all languages recognized by 2rfa's.

\subsection{Quantum Finite Automata and Bounded Error Formulation}\label{sec:quantum-finite-automata}

We briefly give the formal definition of \emph{2-way quantum finite automata}  (or 2qfa's, in short). Formally, a 2qfa $M$ is described as a sextuple $(Q,\Sigma,\delta,q_0,Q_{acc},Q_{rej})$, where $Q$ is a finite set of inner states with $Q_{acc}\cup Q_{rej}\subseteq Q$ and $Q_{acc}\cap Q_{rej}=\setempty$, $\Sigma$ is a finite alphabet, $q_0$ is the initial inner state, and $\delta$ is a transition function mapping from $Q\times\check{\Sigma}\times Q\times D$ to $\complex$, where $\check{\Sigma}$ and $D$ have been defined in the previous subsection.
The transition function $\delta$ describes a series of {\em transitions} and its values are called transition amplitudes (or \emph{amplitudes}). An expression $\delta(p,\sigma,q,d)=\gamma$  means that, assuming that the 2qfa $M$ is in inner state $p$ scanning a symbol $\sigma$, $M$ at the next step changes its inner state to $q$ and moves its tape head in direction $d$ with amplitude $\gamma$. The set $Q$ is partitioned into three sets: $Q_{acc}$, $Q_{rej}$, and $Q_{non}$. Inner states in $Q_{acc}$ (resp., in $Q_{rej}$) are called {\em accepting states} (resp., {\em rejecting states}). A {\em halting state} refers to an inner state in $Q_{acc}\cup Q_{rej}$. The rest of inner states, denoted by $Q_{non}$, consists of {\em non-halting states}. We say that $M$ has \emph{$K$-amplitudes} if all amplitudes of $M$ belong to set $K$ ($\subseteq\complex$).

Similarly to the case of 2pfa's, an input tape has two endmarkers $\cent$ and $\$$ and its tape cells are indexed by integers between $0$ and $n+1$ whenever a given input has length $n$.
For technical convenience, we additionally assume that the input tape is {\em circular} (as originally defined in \cite{KW97}).
A {\em (classical) configuration} is a description of a single moment (or a snapshot) of $M$'s computation, which is formally expressed as a pair of an inner state and a head position in $[0,n+1]_{\integer}$.
An application of $\delta$ can be viewed as an application of a linear operator over a configuration space. Given any input $x$ of length $n$, a \emph{configuration space} $\mathcal{CONF}_{n}$ is a Hilbert space spanned by $\{\qubit{q,\ell}\mid q\in Q,\ell\in[0,n+1]_{\integer}\}$.
From $\delta$ and input $x\in\Sigma^n$,
we define a \emph{time-evolution operator} $U_{\delta}^{(x)}$ as a
linear operator acting on the configuration space in the following way: for each $(p,i)\in Q\times[0,n+1]_{\integer}$, $U^{(x)}_{\delta}$ maps $\qubit{p,i}$ to  $\sum_{(q,d)\in Q\times\{0,\pm1\}}\delta(p,x_i,q,d)\qubit{q,i+d\,(\mathrm{mod}\,n+2)}$, where $x_0=\cent$, $x_{n+1}=\$$, and $x_i$ is the $i$th symbol of $x$ for each index  $i\in[1,n]_{\integer}$. Throughout this paper, we always assume $U_{\delta}^{(x)}$ to be \emph{unitary} for every string $x$.
Three projections $\Pi_{acc}$, $\Pi_{rej}$, and $\Pi_{non}$ are linear maps projecting onto the spaces $W_{acc} = span\{\qubit{q}\mid q\in Q_{acc}\}$, $W_{rej} = span\{\qubit{q}\mid q\in Q_{rej}\}$, and $W_{non} = span\{\qubit{q}\mid q\in Q_{non}\}$, respectively. A \emph{computation} of $M$ on input $x$ proceeds as follows. The 2qfa $M$ starts with its initial configuration $\qubit{\phi_0}=\qubit{q_0}\qubit{0}$ (where $0$ means that the tape head is scanning  $\cent$). At Step $i$, $M$ applies $U_{\delta}^{(x)}$ to $\qubit{\phi_{i-1}}$ and then applies $\Pi_{acc}\oplus \Pi_{rej}\oplus \Pi_{non}$. We say that $M$ \emph{accepts} (resp., \emph{rejects}) $x$ at Step $i$ with probability $p_{M,acc,i}(x)= \|\Pi_{acc}U_{\delta}^{(x)}\qubit{\phi_{i-1}}\|^2$ (resp., $p_{M,rej,i}(x)= \|\Pi_{rej}U_{\delta}^{(x)}\qubit{\phi_{i-1}}\|^2$). The $i$th quantum state $\qubit{\phi_i}$ is $\Pi_{non}U_{\delta}^{(x)}\qubit{\phi_{i-1}}$. The \emph{acceptance probability} $p_{M,acc}(x)$ of $M$ on $x$ is $\sum_{i=1}^{\infty}p_{M,acc,i}(x)$. The \emph{rejection probability} is defined similarly and is denoted by $p_{M,rej}(x)$.

With respect to acceptance criteria of 1fa's, we have customarily taken bounded-error and unbounded-error formulations. Let $\varepsilon$ be any constant in $[0,1/2)$ (called an \emph{error bound}) and let $L$ be any language  over alphabet $\Sigma$. We say that a 2qfa $M$ {\em recognizes $L$ with error probability at most $\varepsilon$} if (i) for every $x\in L$, $p_{M,acc}(x)\geq 1-\varepsilon$ and (ii) for every $x\in \overline{L}$ ($=\Sigma^*-L$), $p_{M,rej}(x)\geq 1-\varepsilon$.
When such an $\varepsilon$ exists, we customarily say that $M$ recognizes $L$ with \emph{bounded-error probability}.
We define the class $\twobqfa_{K}$ as the collection of all languages that can be recognized by bounded-error 2qfa's  with $K$-amplitudes.
It is important to note that these 2qfa's may not halt with certain probability up to $\varepsilon$.
Opposed to the bounded-error criterion, we say that $M$ recognizes $L$ with \emph{unbounded-error probability} if (i') for any $x\in L$, $M$ accepts $x$ with probability more than $1/2$ (i.e., $p_{M,acc}(x)>1/2$) and (ii') for any $x\notin L$, $M$ rejects $x$ with probability at least $1/2$ (i.e., $p_{M,rej}(x)\geq1/2$).
We then obtain the unbounded-error language family $\twopqfa_{K}$ as the collection of languages recognized by 2qfa's with unbounded-error probability.

Concerning halting computation, we say that a 2qfa {\em halts completely} if its halting probability equals $1$, whereas a 2qfa {\em halts absolutely} if all the computation  paths of the 2qfa eventually terminate in halting inner states. If a 2qfa halts absolutely, then it must halt completely, but the converse is not always true since a 2qfa that halts completely might possibly have a computation path that does not terminate.
When $M$ halts completely, the \emph{expected running time} of $M$ on $x$ is defined to be the average length of all computation paths.

To place various restrictions, specified as $\pair{restrictions}$, on 2qfa's, we generally use a conventional notation of the form $\twobqfa_{K}(restrictions)$. For example, two restrictions $\pair{\comphalt}$ and $\pair{\abshalt}$ respectively indicate that a 2qfa halts completely and absolutely.
Another restriction $\pair{\lintime}$ means that a 2qfa runs in expected liner time. More generally, $\twobqfa_{K}(t(n)\mbox{-}time)$ is defined by $K$-amplitude 2qfa's which run in expected time at most $t(n)$ (that is, the average running time of $M$ on each input of length $n$ is bounded from above by $t(n)$).

We shall discuss four more language families. The error-free language family $\twoeqfa_{K}$ is obtained from $\twobqfa_{K}$ by setting $\varepsilon=0$ (\ie either $p_{M,acc}(x)=1$ or $p_{M,rej}(x)=1$ for all $x\in\Sigma^*$).
The one-sided error language family $\tworqfa_{K}$ requires the existence of an error-bound $\varepsilon\in[0,1/2)$ such that $p_{M,acc}(x)\geq 1-\varepsilon$ for all $x\in L$ and $p_{M,rej}(x)=1$ for all $x\in\overline{L}$.

In contrast, the equality language family $\twocequalqfa_{K}$ is composed of languages $L$ recognized by $K$-amplitude 2qfa's $M$ with nonnegative exact cut points; namely, there exists a constant $\eta \in K\cap(0,1]$ such that, for every $x$,  $x\in L$ iff  $p_{M,acc}(x)= \eta$.
All languages $L$ recognized by $K$-amplitude 2qfa's $M$ with zero cut point  forms the nondeterministic language family $\twonqfa_{K}$, i.e., for every $x$, $x\in L$ iff $p_{M,acc}(x)>0$.
From those definitions of language families follow a series of natural properties. See also \cite[Lemma 4.9]{Yam03} for comparison.

\begin{lemma}\label{basic-inclusion}
Let $K$ be any nonempty subset of $\complex$ with $\{0,1/2,1\}\subseteq K$.
\begin{enumerate}\vs{-1}
  \setlength{\topsep}{-2mm}%
  \setlength{\itemsep}{1mm}%
  \setlength{\parskip}{0cm}%

\item $\twoeqfa_{K} \subseteq \tworqfa_{K} \subseteq \twobqfa_{K} \subseteq \twopqfa_{K}$.

\item $\tworqfa_{K}\subseteq \twonqfa_{K}$.

\item $\twoeqfa_{K}=\co\twoeqfa_{K}$ and $\twobqfa_{K}=\co\twobqfa_{K}$.

\item $\twoeqfa_{K} = \twoeqfa_{K}(\comphalt) \subseteq \twocequalqfa_{K}(\comphalt) \cap \co\twocequalqfa_{K}(\comphalt)$.

\item $\tworqfa_{K}\cup\co\tworqfa_{K}\subseteq \twobqfa_{K}$ if $\rational\cap[0,1]\subseteq K$.

\item $\twonqfa_{K}(\comphalt) \subseteq \co\twocequalqfa_{K}(\comphalt)$.
\end{enumerate}
\end{lemma}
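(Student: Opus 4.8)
The plan is to derive all six items from two elementary facts about 2qfa's, applied to the defining conditions of the seven families involved. The first fact is a \emph{dual-machine} construction: from a 2qfa $M=(Q,\Sigma,\delta,q_0,Q_{acc},Q_{rej})$ form $M^{\ast}=(Q,\Sigma,\delta,q_0,Q_{rej},Q_{acc})$ by exchanging the accepting and rejecting state sets. Since $U^{(x)}_{\delta}$ depends only on $\delta$ and the non-halting projection $\Pi_{non}$ is unchanged, $M^{\ast}$ produces exactly the same surviving states $\qubit{\phi_i}$ as $M$, whence $p_{M^{\ast},acc}(x)=p_{M,rej}(x)$ and $p_{M^{\ast},rej}(x)=p_{M,acc}(x)$ for all $x$, and $M^{\ast}$ inherits $M$'s amplitude set $K$ and its halting behaviour (absolute or complete). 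The second fact is the probability bound $p_{M,acc}(x)+p_{M,rej}(x)\le 1$, obtained from $\|\qubit{\phi_{i-1}}\|^{2}=p_{M,acc,i}(x)+p_{M,rej,i}(x)+\|\qubit{\phi_i}\|^{2}$ at each step; moreover equality $p_{M,acc}(x)+p_{M,rej}(x)=1$ holds precisely when $M$ halts completely.

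With these tools, items (1) and (2) are direct unfoldings of definitions: error-freeness is the $\varepsilon=0$ case of one-sided error; one-sided error is a bounded-error condition because $p_{M,rej}(x)=1\ge 1-\varepsilon$ off $L$; bounded error with $\varepsilon<1/2$ gives $p_{M,acc}(x)\ge 1-\varepsilon>1/2$ on $L$ and $p_{M,rej}(x)>1/2$ off $L$, hence unbounded error; and for (2), $p_{M,rej}(x)=1$ off $L$ together with the probability bound forces $p_{M,acc}(x)=0$, which is exactly the zero-cut-point condition. Item (3) is the dual-machine construction: applying $M\mapsto M^{\ast}$ to an error-free (resp. bounded-error) 2qfa for $L$ yields one of the same type for $\overline{L}$ with the same amplitudes. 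Item (5) then needs nothing new: $\tworqfa_{K}\subseteq\twobqfa_{K}$ by (1), and if $L\in\co\tworqfa_{K}$ then $\overline{L}\in\tworqfa_{K}\subseteq\twobqfa_{K}$, so $L\in\co\twobqfa_{K}=\twobqfa_{K}$ by (3); I would remark that the hypothesis $\rational\cap[0,1]\subseteq K$ is not actually used by this argument.

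For item (4), an error-free 2qfa has $p_{M,acc}(x)+p_{M,rej}(x)=1$ always, so it halts completely, giving $\twoeqfa_{K}=\twoeqfa_{K}(\comphalt)$. Reading off the exact cut point $\eta=1\in K\cap(0,1]$: on $L$, $p_{M,acc}(x)=1$, while off $L$, $p_{M,rej}(x)=1$ forces $p_{M,acc}(x)=0\ne 1$ by the probability bound, so $L\in\twocequalqfa_{K}(\comphalt)$; applying this to $\overline{L}\in\twoeqfa_{K}=\co\twoeqfa_{K}$ (by (3)) gives $L\in\co\twocequalqfa_{K}(\comphalt)$ as well. Item (6) combines the dual machine with the same observation: for $L\in\twonqfa_{K}(\comphalt)$ witnessed by $M$, complete halting turns ``$p_{M,acc}(x)>0$'' into ``$p_{M,rej}(x)<1$'', so $\overline{L}=\{x:p_{M,rej}(x)=1\}=\{x:p_{M^{\ast},acc}(x)=1\}$; as $M^{\ast}$ halts completely and $\eta=1\in K\cap(0,1]$, this shows $\overline{L}\in\twocequalqfa_{K}(\comphalt)$, i.e. $L\in\co\twocequalqfa_{K}(\comphalt)$.

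I do not expect a genuine obstacle here; the only step needing a moment's care is the one shared by (4) and (6): recognizing that \emph{complete halting} is exactly the hypothesis that lets one convert a zero-cut-point condition (or an ``acceptance $=1$'' condition) into its complementary ``acceptance $=0$'' form via $p_{M,acc}(x)+p_{M,rej}(x)=1$, and that the exact cut point $\eta=1$ is legitimate because $1\in K\cap(0,1]$ under the standing assumption $\{0,1/2,1\}\subseteq K$. Everything else is bookkeeping with the dual machine and the probability bound.
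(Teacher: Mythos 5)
Your items (1)--(3) match the paper's proof, which simply declares them immediate from the definitions (with (3) obtained by swapping $Q_{acc}$ and $Q_{rej}$, exactly your dual machine $M^{\ast}$). Your treatment of (4) and (6) is correct but takes a cleaner route than the paper's: you read off the exact cut point $\eta=1$ directly from the unmodified machine (resp.\ its dual), whereas the paper builds an auxiliary 2qfa that, on the relevant halting event, branches with amplitude $1/2$ into two fresh accepting and two fresh rejecting states so as to realize the exact cut point $1/2$ (yielding $p_{N,acc}(x)=\frac{1}{2}(1+p_{M,acc}(x))$ in (6)). Since the definition of $\twocequalqfa_{K}$ permits any $\eta\in K\cap(0,1]$ and $1\in K$ by hypothesis, your choice is legitimate and saves a construction; the paper's $1/2$-gadget has the mild advantage of landing in the ``$p_{M,acc}(x)=p_{M,rej}(x)$'' normal form of Lemma \ref{classical-quantum}.

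The one place you part company with the paper in substance is (5). You observe that the hypothesis $\rational\cap[0,1]\subseteq K$ is never used and deduce (5) from (1) and (3). That is valid under the literal definition of $\tworqfa_{K}$ in Section \ref{sec:quantum-finite-automata}, which requires $\varepsilon\in[0,1/2)$ --- but then (5) carries no content beyond (1) and (3), and the stated hypothesis would be vacuous. The paper's own proof makes clear that the intended reading admits the boundary case $\varepsilon=1/2$ (acceptance probability only $\geq 1/2$ on $L$), and that case is exactly where your argument fails: $p_{M,acc}(x)\geq 1/2$ on $L$ together with $p_{M,rej}(x)=1$ off $L$ is not a bounded-error condition, and 2qfa's admit no generic amplification that would make it one. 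The paper handles it by prepending, on reading $\cent$, a branch that rejects outright with amplitude $\sqrt{\alpha}$ and runs $M$ with amplitude $\sqrt{1-\alpha}$; this keeps acceptance on $L$ at $1-\alpha$ while lifting rejection off $L$ to $\alpha+(1-\alpha)/2$, both bounded away from $1/2$, and the requirement $\sqrt{\alpha},\sqrt{1-\alpha}\in\rational\subseteq K$ is precisely where the extra hypothesis on $K$ is consumed. You should either adopt that rebalancing construction or state explicitly that you are working with the strict-inequality definition, in which case (5) should be flagged as a restatement of (1) and (3) rather than a new fact.
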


\begin{proof}
(1)--(3) Trivial from the definitions of those classes. In particular, (3) follows easily by exchanging $Q_{acc}$ and $Q_{rej}$ in the definition of qfa's.

(4) The first equality is obvious from the requirement for the error-free property of $\twoeqfa_{K}$.
Given a language $L\in\twoeqfa_{K}$, take a completely-halting 2qfa $M$ recognizing $L$ with zero error and $K$-amplitudes. Let us define another 2qfa $N$ that starts simulating $M$ on input $x$. Whenever $M$ halts with acceptance, we wish to make $N$ enter both accepting and rejecting states with equal probability $1/2$. However, since we need to restrict $N$'s amplitudes within $\{0,1/2,1\}$, we employ the following simple trick. The machine $N$ prepares fresh $4$ inner states, say, $\{q'_1,q'_2,q'_3,q'_4\}$ and enters each of those inner states with equal amplitude $1/2$. We associate the first two inner states with accepting states and the last two inner states with rejecting
states.
In contrast, when $M$ halts with rejection, $N$ simply rejects $x$ with probability $1$.

When $x\in L$, since $p_{M,rej}(x)=0$, we obtain $p_{N,acc}(x)= p_{N,rej}(x) = 1/2$; on the contrary, when $x\notin L$, $p_{N,rej}(x)=1$ and $p_{M,acc}(x)=0$. From these relations, we conclude that $L$ belongs to  $\twocequalqfa$. Since $\twoeqfa_{K}$ is closed under complementation by (2),
the inclusion $\co\twoeqfa_{K}\subseteq \twocequalqfa_{K}$ also follows.

(5) First, we shall show that every language $L$ in $\tworqfa_{K}$ is also a member of $\twobqfa_{K}$. Take a one-sided-error 2qfa $M = (Q,\Sigma,\delta,Q_{acc},Q_{rej})$ that recognizes $L$ with error bound $\varepsilon\in[0,1/2]$. If $\varepsilon<1/2$, then $L$ is in $\twobqfa_{K}$. Next, let us consider the remaining case of $\varepsilon=1/2$.

Let us define a new 2qfa $N$ as follows. Choose a real number $\alpha$ for which $0< \alpha < 1/2$ and $\{\sqrt{\alpha},\sqrt{1-\alpha}\}\subseteq \rational$, and define $\varepsilon' = \frac{1-\alpha}{2}$. Clearly, $\alpha\leq \varepsilon' <1/2$ holds. Given an input $x$, $N$ starts with the initial configuration $\qubit{q_0}\qubit{0}$. On scanning $\cent$, $N$ transforms $\qubit{q_0}\qubit{0}$ into $\sqrt{\alpha}\qubit{q'_{rej}}\qubit{0} + \sqrt{1-\alpha}{\;} U_{\delta}^{(x)}\qubit{q_0}\qubit{0}$, where $q'_{rej}$ is a  fresh rejecting state. The first term is traced out immediately by a measurement. In contrast, the second term evolves as $N$ applies $U_{\delta}^{(x)}$.
When $x\in L$, since $p_{M,acc}(x)=1$, the acceptance probability $p_{N,acc}(x)$ of $N$ on the input $x$ satisfies that $p_{N,acc}(x) = (1-\alpha) p_{M,acc}(x) = 1-\alpha\geq 1-\varepsilon'$.
On the contrary, when $x\notin L$, $p_{M,rej}(x) = \frac{1}{2}$ implies that  the rejection probability $p_{N,rej}(x)$ is $\alpha + (1-\alpha) p_{M,rej}(x)$, which is at least $1-\frac{1}{2}(1-\alpha) \geq 1-\varepsilon'$. We then conclude that $L$ belongs to $\twobqfa_{K}$ because $K$ contains $\rational\cap[0,1]$.

Since $\twobqfa_{K}$ is closed under complementation by (2), it follows from the first containment that $\co\tworqfa_{K}\subseteq \twobqfa_{K}$. As a result, we obtain $\tworqfa_{K}\cup \co\tworqfa_{K}\subseteq \twopqfa_{K}$.

(6) Let $L$ be any language in $\twonqfa_{K}(\comphalt)$ and let $M$ denote a completely-halting 2qfa recognizing $L$ with cut point $0$.
A new 2qfa $N$ is constructed from $M$ to behave as follows. On input $x$, $N$ simulates $M$ on $x$ and, when $M$ enters its rejecting state, $N$ instead enters two accepting states and two rejecting states with equal amplitudes $1/2$, as in (4). It then follows that $p_{N,acc}(x)$ equals $p_{M,acc}(x) + \frac{1}{2} p_{M,rej}(x)$, which turns out to be $\frac{1}{2}(1+p_{M,acc}(x))$ since $p_{M,acc}(x)+p_{M,rej}(x)=1$. Therefore, the membership $x\in L$ implies $p_{N,acc}(x)\neq 1/2$ because of $p_{M,acc}(x)>0$. When $x\notin L$, on the contrary, $p_{M,acc}(x)=0$ leads to $p_{N,acc}(x)=1/2$. Therefore, $L$ belongs to $\co\twocequalqfa_{K}(\comphalt)$.
\end{proof}

Notice that, when a 2qfa $M$ completely halts, the bounded-error criterion of $M$ coincides with the isolated cut point criterion, because  $p_{M,acc}(x)+p_{M,rej}(x)=1$ holds for all $x\in\Sigma^*$. Therefore, it is possible to define $\twopqfa(\comphalt)$ and $\twocequalqfa(\comphalt)$ in a slightly different way.

\begin{lemma}\label{classical-quantum}
For any language $L$ over alphabet $\Sigma$, $L$ is in $\twopqfa_{K}(\comphalt)$ (resp., $\twocequalqfa_{K}(\comphalt)$) iff there exist a $K$-amplitude 2qfa $M$
that completely halts and satisfies that, for all $x\in\Sigma^*$,  $x\in L$ iff  $p_{M,acc}(x)> p_{M,rej}(x)$ (resp., $p_{M,acc}(x)= p_{M,rej}(x)$). Moreover, the same is true for worst-case linear-time 2qfa's.
\end{lemma}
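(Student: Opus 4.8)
The plan is to get the two ``easy'' implications (and the whole $\twopqfa$-case) essentially for free from the identity $p_{M,acc}(x)+p_{M,rej}(x)=1$, valid for any completely-halting machine, and to concentrate the real work on converting an arbitrary nonnegative exact cut point $\eta$ into the canonical one, $1/2$.

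\textbf{The easy part.} If a completely-halting $K$-amplitude 2qfa $M$ recognizes $L$ with unbounded error, then, since $p_{M,rej}(x)=1-p_{M,acc}(x)$, the two defining clauses ``$p_{M,acc}(x)>1/2$ for $x\in L$'' and ``$p_{M,rej}(x)\ge 1/2$ for $x\notin L$'' read verbatim as ``$p_{M,acc}(x)>p_{M,rej}(x)$ for $x\in L$'' and ``$p_{M,acc}(x)\le p_{M,rej}(x)$ for $x\notin L$'', which is exactly the asserted characterization; the converse is the same rewriting run backwards, using the same machine. The identical bookkeeping, with ``$=$'' in place of ``$>$'', settles the $\Longleftarrow$ direction of the $\twocequalqfa_{K}(\comphalt)$-case: a completely-halting $M$ with $x\in L\iff p_{M,acc}(x)=p_{M,rej}(x)$ satisfies $x\in L\iff p_{M,acc}(x)=1/2$, and $1/2\in K\cap(0,1]$ is an admissible exact cut point.

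\textbf{The substantial part: $\Longrightarrow$ for $\twocequalqfa$.} Here I am handed a completely-halting $K$-amplitude 2qfa $M$ and a constant $\eta\in K\cap(0,1]$ with $x\in L\iff p_{M,acc}(x)=\eta$, and I must build a completely-halting $K$-amplitude $N$ with $p_{N,acc}(x)=1/2\iff x\in L$. If $L=\setempty$ this is trivial ($N$ always accepts). Otherwise freeze a word $x_{0}\in L$, so $p_{M,acc}(x_{0})=\eta$. Let $N$, on reading $\cent$ from its start state, perform one unitary step that splits into four mutually orthogonal branches, each of amplitude $1/2$: branch (i) simulates $M$ on the actual input $x$ and copies $M$'s verdict; branch (ii) simulates $M$ on the hard-wired word $x_{0}$ -- which needs only the finitely many extra inner states $Q_{M}\times[0,|x_{0}|+1]_{\integer}$, since $x_{0}$ is a fixed constant (its evolution is a verbatim copy of $M$'s unitary time-evolution on $x_{0}$, with $N$'s own head parked on $\cent$) -- and emits the \emph{opposite} verdict; branch (iii) accepts outright; branch (iv) rejects outright. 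Every branch completely halts because $M$ does, so $N$ completely halts, and
\[
 p_{N,acc}(x)=\tfrac14\,p_{M,acc}(x)+\tfrac14(1-\eta)+\tfrac14=\tfrac14\bigl(p_{M,acc}(x)+2-\eta\bigr),
\]
which equals $1/2$ precisely when $p_{M,acc}(x)=\eta$, i.e.\ precisely when $x\in L$.

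\textbf{Loose ends and the real difficulty.} Two routine checks remain. (a) The four-way split must be a genuine unitary on the configuration space with amplitudes in $K$: realize it by letting the $4\times4$ block from $q_{0}$ together with three fresh $\cent$-reading ``dummy'' states, mapping into the four branch-entry states, be $\tfrac12 H_{4}$ for a $4\times4$ Hadamard matrix $H_{4}$, whose entries $\pm\tfrac12$ lie in $K$, while leaving $M$'s remaining transitions intact. (b) All amplitudes of $N$ are those of $M$ together with $\pm\tfrac12$, hence lie in $K$. The ``worst-case linear-time'' version of the statement needs nothing new: branch (ii) runs $M$ on the fixed $x_{0}$, hence halts within a constant number of steps; branches (iii) and (iv) halt at once; and the easy directions reuse the given machine verbatim; so every construction preserves a worst-case $O(n)$ bound up to an additive constant. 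I expect the only genuine obstacle to be the $\Longrightarrow$ direction above, and within it the single idea that matters: one cannot in general rescale amplitudes to move the cut point (the set $K$ may be too poor, e.g.\ $\sqrt{\eta}\notin K$), so instead one feeds $M$ a frozen input $x_{0}\in L$ to synthesize, inside $N$, a ``reference'' acceptance probability exactly equal to $\eta$, and combines it affinely with the live simulation by plain $1/2$-amplitude branching.
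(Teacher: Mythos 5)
Your proof is correct, and its ``easy part'' is in fact the paper's \emph{entire} proof: the author observes only that complete halting gives $p_{M,acc}(x)+p_{M,rej}(x)=1$, hence $p_{M,acc}(x)>p_{M,rej}(x)$ (resp.\ $p_{M,acc}(x)=p_{M,rej}(x)$) iff $p_{M,acc}(x)>1/2$ (resp.\ $p_{M,acc}(x)=1/2$), and declares the lemma to follow ``instantly.'' Where you diverge is in taking seriously that $\twocequalqfa_{K}(\comphalt)$ is defined with an \emph{arbitrary} exact cut point $\eta\in K\cap(0,1]$, so that the forward direction for the equality class genuinely requires moving $\eta$ to $1/2$; the paper's two-line argument covers only the case $\eta=1/2$ and is silent on the rest. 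Your frozen-input construction --- synthesizing the reference probability $\eta$ by running $M$ on a hard-wired $x_0\in L$ inside the finite control (a fixed finite unitary on $Q_M\times[0,|x_0|+1]_{\integer}$ with the head parked on $\cent$, verdicts swapped) and combining it affinely with the live run via a four-way branching of amplitude $1/2$ --- is a correct and $K$-friendly way to close that case: the arithmetic $p_{N,acc}(x)=\frac{1}{4}\bigl(p_{M,acc}(x)+2-\eta\bigr)$ checks out, complete halting and worst-case linear time are preserved, and the one quibble, namely that the unitary completion of the four-way split uses amplitude $-1/2$ (not guaranteed by $\{0,1/2,1\}\subseteq K$), is a hand-wave the paper itself commits in the proof of Lemma \ref{basic-inclusion}(4). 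In short: you use the same key identity as the paper, plus a supplementary cut-point normalization that the paper's proof arguably needs but does not supply.
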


\begin{proof}
Consider a completely-halting 2qfa $M$. Since $M$ completely halts, it follows that $p_{M,acc}(x)+p_{M,rej}(x)=1$ for all $x$. Thus, we conclude that $p_{M,acc}(x)>p_{M,rej}(x)$ (resp., $p_{M,acc}(x)=p_{M,rej}(x)$) iff $p_{M,acc}(x)>1/2$ (resp., $p_{M,acc}(x)=1/2$). Hence, the lemma for $\twopqfa_{K}(\comphalt)$ and $\twocequalqfa_{K}(\comphalt)$ follows instantly.
\end{proof}

The following folklore lemma helps us concentrate on real amplitudes when we discuss unrestricted-amplitude 2qfa's.

\begin{lemma}\label{folke-complex}{\rm (folklore)}
Every $\complex$-amplitude 2qfa can be simulated by a certain $\real$-amplitude 2qfa with the same acceptance/rejection/non-halting probabilities using only twice the number of original inner states such that its tape head moves are exactly the same as the original 2qfa's.
\end{lemma}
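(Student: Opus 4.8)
The plan is to use the standard \emph{realification} of a complex Hilbert space: identify $\complex^{d}$ with $\real^{2d}$ via $z = x + \imath y \mapsto (x,y)$, and correspondingly replace each complex matrix entry $a + \imath b$ by the real $2\times 2$ block $\left(\begin{smallmatrix} a & -b \\ b & a \end{smallmatrix}\right)$. Concretely, given a $\complex$-amplitude 2qfa $M = (Q,\Sigma,\delta,q_0,Q_{acc},Q_{rej})$, I would take $M' = (Q',\Sigma,\delta',q_0',Q_{acc}',Q_{rej}')$ with $Q' = Q\times\{0,1\}$, $q_0' = (q_0,0)$, $Q_{acc}' = Q_{acc}\times\{0,1\}$, $Q_{rej}' = Q_{rej}\times\{0,1\}$ (hence $Q_{non}' = Q_{non}\times\{0,1\}$ and $|Q'| = 2|Q|$), and, writing $\delta(p,\sigma,q,d) = a_{p,\sigma,q,d} + \imath\,b_{p,\sigma,q,d}$, setting
\[
\delta'((p,0),\sigma,(q,0),d) = a,\quad \delta'((p,0),\sigma,(q,1),d) = b,\quad \delta'((p,1),\sigma,(q,0),d) = -b,\quad \delta'((p,1),\sigma,(q,1),d) = a
\]
for every $p,q\in Q$, $\sigma\in\check{\Sigma}$, $d\in\{0,\pm1\}$. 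The head-direction argument $d$ is never altered, so $M'$ moves its tape head exactly as $M$ does, and all the new amplitudes are real.

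I would then establish the expected dictionary. Fix an input $x$ of length $n$, let $\Phi:\complex^{d}\to\real^{2d}$ be the above $\real$-linear bijection with $d = |Q|(n+2) = \dim\mathcal{CONF}_{n}$, and for a complex operator $A$ on $\complex^{d}$ let $\widetilde A$ denote the real operator obtained by the block substitution. Routine entrywise checks give: (i) $\widetilde{AB} = \widetilde A\,\widetilde B$ and $\widetilde{A^{\dagger}} = \widetilde A^{T}$, so $\widetilde A$ is real-orthogonal whenever $A$ is unitary; (ii) $\Phi(Az) = \widetilde A\,\Phi(z)$ and $\|\Phi(z)\| = \|z\|$, i.e.\ $\Phi$ is an isometric intertwiner; (iii) by the definition of $\delta'$, and because the residue map $i\mapsto i+d\ (\mathrm{mod}\ n+2)$ on head positions is identical for the two machines, the time-evolution operator of $M'$ on $x$ is exactly $\widetilde{U_{\delta}^{(x)}}$, which is therefore unitary whenever $U_{\delta}^{(x)}$ is; and (iv) $\Phi$ carries $W_{acc}$, $W_{rej}$, $W_{non}$ onto $\mathrm{span}\{\qubit{(q,c)} : q\in Q_{acc},\, c\in\{0,1\}\}$ and the two analogous subspaces, so $\widetilde{\Pi_{acc}} = \Pi_{acc}'$, $\widetilde{\Pi_{rej}} = \Pi_{rej}'$, $\widetilde{\Pi_{non}} = \Pi_{non}'$, and in particular each projection commutes with $\Phi$.

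Given this dictionary, an induction on the step number $i$ yields $\qubit{\phi_i'} = \Phi(\qubit{\phi_i})$: the base case is $\qubit{\phi_0'} = \qubit{(q_0,0)}\qubit{0} = \Phi(\qubit{q_0}\qubit{0}) = \Phi(\qubit{\phi_0})$, and the inductive step is $\qubit{\phi_i'} = \Pi_{non}'U_{\delta'}^{(x)}\qubit{\phi_{i-1}'} = \widetilde{\Pi_{non}}\,\widetilde{U_{\delta}^{(x)}}\,\Phi(\qubit{\phi_{i-1}}) = \Phi(\Pi_{non}U_{\delta}^{(x)}\qubit{\phi_{i-1}}) = \Phi(\qubit{\phi_i})$. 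Because $\Phi$ is an isometry and sends $\Pi_{acc}$ to $\Pi_{acc}'$ (and likewise for $\mathrm{rej}$ and $\mathrm{non}$), one gets $p_{M',acc,i}(x) = \|\Pi_{acc}'U_{\delta'}^{(x)}\qubit{\phi_{i-1}'}\|^2 = \|\Pi_{acc}U_{\delta}^{(x)}\qubit{\phi_{i-1}}\|^2 = p_{M,acc,i}(x)$ for every $i$; summing over $i$ gives equality of the acceptance probabilities, and the rejection and non-halting probabilities agree for the same reason.

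There is no genuine obstacle here --- the statement is folklore --- so the only work is bookkeeping. The two points deserving care are (a) that $\delta'$ is a legal 2qfa transition function, i.e.\ that its values depend only on the scanned symbol, the two inner states and the direction and not on the head position, which holds precisely because $\delta$ already had this property; and (b) that realification is simultaneously a $*$-algebra homomorphism, an isometry on vectors, and compatible with the accepting/rejecting/non-halting splitting, so that it commutes with one full measured step of the evolution --- both facts being immediate from the $2\times2$-block formulas. The factor-of-two increase in the number of inner states is exactly the cost of storing the extra real coordinate, which matches the bound $|Q'| = 2|Q|$ in the statement, and since no head move is touched the final clause of the lemma is automatic.
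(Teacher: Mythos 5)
Your proposal is correct and is essentially identical to the paper's own proof: the paper defines $Q'=Q\times\{I,R\}$ and sets $\delta'((q,R),\sigma,(p,R),d)=\mathrm{Re}(\delta)$, $\delta'((q,R),\sigma,(p,I),d)=\mathrm{Im}(\delta)$, $\delta'((q,I),\sigma,(p,R),d)=-\mathrm{Im}(\delta)$, $\delta'((q,I),\sigma,(p,I),d)=\mathrm{Re}(\delta)$, which is exactly your realification with $R\leftrightarrow 0$ and $I\leftrightarrow 1$. The paper leaves the verification as a sketch via the identity $\delta=\delta'(\cdot,R,\cdot,R)+\imath\,\delta'(\cdot,R,\cdot,I)$, whereas you spell out the isometric-intertwiner bookkeeping, but the construction and argument are the same.
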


\begin{proofsketch}
Let $M=(Q,\Sigma,\delta,q_0,Q_{acc},Q_{rej})$ be any $\complex$-amplitude 2qfa. We shall define another 2qfa $N=(Q',\Sigma,\delta',q'_0,Q'_{acc},Q'_{rej})$ with the desired property. Let $Q' = Q\times \{I,R\}$, $Q'_{acc}= Q_{acc}\times\{I,R\}$, and $Q'_{rej} = Q_{rej}\times\{I,R\}$. Moreover, let $q'_0 = (q_0,R)$ and let
\[
\delta'((q,b),\sigma,(p,c),d) = \left\{ \begin{array}{ll} \mathrm{Re}(\delta(q,\sigma,p,d)) & \text{if $b=c\in\{I,R\}$,} \\
\mathrm{Im}(\delta(q,\sigma,p,d)) & \text{if $b=R$ and $c=I$,} \\
-\mathrm{Im}(\delta(q,\sigma,p,d)) & \text{if $b=I$ and $c=R$.}
\end{array} \right.
\]
It follows that $\delta(q,\sigma,p,d)$ equals $\delta'((q,R),\sigma,(p,R),d)
 + \imath \cdot \delta'((q,R),\sigma,(p,I),d)$, which further equals $\delta'((q,I),\sigma,(p,I),d)
 - \imath \cdot \delta'((q,I),\sigma,(p,R),d)$.
From those equalities, we can derive the desired conclusion.
\end{proofsketch}

As an immediate consequence of Lemma \ref{folke-complex}, we obtain, for example, $\twobqfa_{\complex} = \twobqfa_{\real}$ and $\twobqfa_{\complex}(t(n)\mbox{-}time) = \twobqfa_{\real}(t(n)\mbox{-}time)$ for any time bound $t(n)$.

In the case of one-way model, each qfa always moves its tape head to the right without stopping it and, after scanning $\dollar$, the qfa must ``halt.'' We call such a machine a \emph{one-way quantum finite automaton} (or \emph{1qfa}). Obviously, 1qfa's halt absolutely. Regarding language families, we define $\onepqfa_{K}$ and $\onecequalqfa_{K}$ by replacing underlying 2qfa's in the definition of $\twopqfa_{K}$ and $\twocequalqfa_{K}$ with 1qfa's, respectively.
At this point, we remark that 1pfa's with positive cut points also satisfy the unbounded-error criterion. To see this fact, let $M = (Q,\Sigma,\delta,q_0,Q_{acc},Q_{rej})$ denote a 1pfa recognizing language $L$ with cut point $\eta\in(0,1]$. As noted in Section \ref{sec:classical-cutpoint}, it is possible to set $\eta=1/2$ (thus,  $p_{M,acc}(x)>1/2$ for any $x\in L$, and  $p_{M,acc}(x)\leq 1/2$ for any $x\notin L$). By our convention of one-way head moves of 1pfa's,  $M$ must halt by the time when the right endmarker $\dollar$ is read. We transform  $M$ by, after reading $\dollar$, redirecting all inner states in $Q-Q_{acc}$ to a new unique rejecting state. In the end, it holds that, for all $x\in L$, $p_{M,acc}(x)>1/2$ and, for all $x\not\in L$, $p_{M,rej}(x)\geq 1/2$.

Lately, Yakary{\i}lmaz and Say \cite{YS10,YS11} discovered that  $\onepqfa_{\complex}$ and $\onenqfa_{\complex}$ precisely characterize  $\mathrm{SL}_{\real}$ and $\mathrm{SL}^{\neq}_{\real}$, respectively. In a similar way,   $\onecequalqfa_{\complex}$ can be shown to coincide with $\mathrm{SL}^{=}_{\real}$.

\begin{lemma}\label{PQFA-equal-SL}
$\mathrm{SL}_{\real} = \onepqfa_{\complex}$, $\mathrm{SL}^{\neq}_{\real} = \onenqfa_{\complex}$, and $\mathrm{SL}^{=}_{\real} =  \onecequalqfa_{\complex}$.
\end{lemma}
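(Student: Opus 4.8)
The equalities $\mathrm{SL}_{\real}=\onepqfa_{\complex}$ and $\mathrm{SL}^{\neq}_{\real}=\onenqfa_{\complex}$ are the theorems of Yakary{\i}lmaz and Say \cite{YS10,YS11}, so the plan is to cite those and to obtain the remaining $\mathrm{SL}^{=}_{\real}=\onecequalqfa_{\complex}$ by the two inclusions, reusing their machinery together with the classical theory of stochastic languages (cf.\ \cite{Tur68}). By Lemma \ref{folke-complex} I may assume all 1qfa's considered here use real amplitudes, and by Section \ref{sec:classical-cutpoint} every real exact cut point may be taken inside $(0,1)$.

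For $\onecequalqfa_{\complex}\subseteq\mathrm{SL}^{=}_{\real}$ I would \emph{linearize} the measure-many dynamics. Let $M$ be a real-amplitude 1qfa with $x\in L$ iff $p_{M,acc}(x)=\eta$. For input $x$, let $|\phi_i\rangle\in W_{non}$ be its state after step $i$ (the head position equals the step number for a 1qfa), let $\rho_i=|\phi_i\rangle\langle\phi_i|$ be the associated real symmetric operator on $W_{non}$, and let $a_i$ be the acceptance mass accumulated through step $i$. Reading a letter $\sigma$ (with $\cent,\dollar$ treated as ordinary letters) sends $\rho_{i-1}$ to $\rho_i$ by a fixed real-linear map depending only on $\sigma$, and adds to $a_{i-1}$ a fixed real-linear functional of $\rho_{i-1}$ depending only on $\sigma$; hence the real vector encoding the pair $(\rho_i,a_i)$ evolves under fixed real matrices $B_{\sigma}$, with $p_{M,acc}(x)$ recovered from the final vector by a fixed real-linear functional. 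This exhibits $p_{M,acc}(x)\in[0,1]$ as the value of a generalized automaton with real weights (a finite automaton with arbitrary real transition weights and a linear acceptance functional); since $x\in L$ iff that value equals $\eta$, the classical theory of stochastic languages converts this generalized automaton into an ordinary 1pfa $P$ with $p_{P,acc}(x)$ a fixed nonconstant affine function of $p_{M,acc}(x)$ taking values in $(0,1)$, so that $x\in L$ iff $p_{P,acc}(x)$ equals the corresponding fixed value; thus $L\in\mathrm{SL}^{=}_{\real}$.

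For the reverse inclusion $\mathrm{SL}^{=}_{\real}\subseteq\onecequalqfa_{\complex}$ I would apply the 1pfa-to-1qfa simulation of \cite{YS10,YS11} to a 1pfa $P$ with $x\in L$ iff $p_{P,acc}(x)=1/2$ and extract from it a 1qfa $M$ whose acceptance probability is a \emph{fixed} continuous injective function $g:[0,1]\to[0,1]$ of $p_{P,acc}(x)$; then $p_{M,acc}(x)=g(1/2)$ exactly on $L$, and since $g$ is injective with image in $[0,1]$ one checks $g(1/2)\in(0,1)$, so $L\in\onecequalqfa_{\complex}$. Running these two arguments with ``$=$'' replaced by ``$>$'' (resp.\ ``$\neq$'') simply re-derives $\mathrm{SL}_{\real}=\onepqfa_{\complex}$ (resp.\ $\mathrm{SL}^{\neq}_{\real}=\onenqfa_{\complex}$), which is why a single line of argument covers all three parts.

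The step I expect to be the main obstacle is guaranteeing that the simulation of the previous paragraph yields an acceptance probability which is a \emph{fixed injective} function of $p_{P,acc}(x)$, not merely threshold-equivalent to it. A naive unitary dilation of the step-matrices of a generalized automaton for $p_{P,acc}$ rescales the computed value by a factor depending on $|x|$ (those matrices are in general expanding, and renormalizing them uniformly injects a $\mu^{-2|x|}$-type factor), which would make the candidate exact cut point $g(1/2)$ vary with the input length and hence be inadmissible. One must therefore use a presentation of the stochastic language in which the relevant operator products stay norm-controlled across all lengths — this is precisely the technical heart of the Yakary{\i}lmaz--Say construction — and then verify that it respects the equality criterion. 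The remaining points (checking that all probabilities land in the prescribed ranges, and treating the endmarkers as ordinary letters) are routine.
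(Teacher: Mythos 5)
Your overall strategy (cite Yakary{\i}lmaz--Say for the first two equalities, prove the third by two inclusions) matches the paper's, and your linearization argument for $\onecequalqfa_{\complex}\subseteq\mathrm{SL}^{=}_{\real}$ is essentially a sketch of the cited \cite[Lemma 3.1]{YS11}, which the paper invokes as a black box giving a 1pfa with \emph{identical} acceptance and rejection probabilities. The genuine problem is in your forward inclusion $\mathrm{SL}^{=}_{\real}\subseteq\onecequalqfa_{\complex}$. You correctly identify the obstacle --- the simulation must not rescale the acceptance probability by a length-dependent factor, or the exact cut point drifts with $|x|$ --- but your proposed resolution is to assert that the Yakary{\i}lmaz--Say construction yields a \emph{fixed} injective function $g$ of $p_{P,acc}(x)$. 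It does not: the relations actually supplied by \cite[Lemma 5.1]{YS11} and \cite[Lemma 2]{YS10} have the form $p_{M,acc}(x)=\eta(|x|)\,p_{N,acc}(x)+\frac{1-\eta(|x|)}{2}$ and $p_{N,acc}(x)=\eta(|x|)\bigl(\frac{2p_{M,acc}(x)-1}{4}\bigr)^2$ with a genuinely length-dependent positive factor $\eta(|x|)$. So the step you flag as ``the main obstacle'' is not discharged; as written, your candidate exact cut point $g(1/2)$ is not well defined.

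There are two ways to close the gap. The paper's route avoids the issue entirely: it first shows $\mathrm{SL}^{\neq}_{\real}\subseteq\onenqfa_{\complex}$ (where the factor $\eta(|x|)>0$ is harmless because only the distinction ``$>0$ versus $=0$'' matters), and then converts zero-cut-point acceptance into an exact cut point of $1/2$ by the half-accept/half-reject trick of Lemma \ref{basic-inclusion}(6), giving $\onenqfa_{\complex}\subseteq\co\onecequalqfa_{\complex}$ and hence $\mathrm{SL}^{=}_{\real}\subseteq\onecequalqfa_{\complex}$ by complementation. Alternatively, your approach can be repaired without a fixed $g$: normalize the exact cut point of the 1pfa to $1/2$ (as permitted in Section \ref{sec:classical-cutpoint}) and observe that the affine relation above, although length-dependent, \emph{fixes the point} $1/2$, so $p_{N,acc}(x)=1/2$ iff $p_{M,acc}(x)=1/2$ and $1/2$ serves as a legitimate exact cut point. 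Either repair is short, but some such step is needed; also note that your closing remark that one argument ``covers all three parts'' is optimistic, since the $\onenqfa$ case requires the quadratic (not affine) relation to convert ``$\neq 1/2$'' into ``$>0$''.
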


\begin{proof}
For the first statement, let us take a 1pfa with a positive cut point. As noted above, we can transform it to satisfy the bounded-error criterion. Let $M$ be the resulted 1pfa.
We then turn this 1pfa $M$ into an ``equivalent'' 1qfa, say, $N$ as
in \cite[Lemma 5.1]{YS11} by embedding each stochastic matrix induced by $M$'s transition function $\delta$ into a larger-dimensional unitary transition matrix in such a way that, if  a quantum state that is produced by this matrix does not correctly represent the outcome of $\delta$, it is  mapped into both accepting and rejecting states with equal probability.
As a consequence, it holds that
$p_{M,acc}(x) = \eta(|x|) p_{N,acc}(x)+\frac{1-\eta(|x|)}{2}$ and  $p_{M,rej}(x) = \eta(|x|) p_{N,rej}(x)+\frac{1-\eta(|x|)}{2}$ for an appropriately chosen  positive function $\eta$ (i.e., $\eta(n)>0$ for all $n\in\nat^{+}$). These  equations lead to the conclusion that $p_{M,acc}(x)>1/2$ (resp., $p_{M,rej}(x)\geq 1/2$) iff $p_{N,acc}(x)>1/2$ (resp., $p_{N,rej}(x)\geq 1/2$). Thus, $\mathrm{SL}_{\real}\subseteq \onepqfa_{\complex}$ follows.

By essentially the same idea as above, it was shown in \cite[Lemma 2]{YS10} that, for any 1pfa $M$, there are a positive function $\eta$ and a 1qfa $N$ such that $p_{N,acc}(x)=\eta(|x|)(\frac{2p_{M,acc}(x)-1}{4})^2$ and $p_{N,rej}(x) = \eta(|x|) (\frac{3-p_{M,acc}(x)}{4})^2$ for all $x$. From these equalities, we conclude that $p_{M,acc}(x)\neq 1/2$ iff $p_{N,acc}(x)>0$. This instantly yields $\mathrm{SL}^{\neq}_{\real}\subseteq \onenqfa_{\complex}$.
For a similar reason as in the proof of Lemma \ref{basic-inclusion}(6), we obtain $\onenqfa_{\complex}\subseteq \co\onecequalqfa_{\complex}$; hence, $\mathrm{SL}^{=}_{\real}\subseteq \onecequalqfa_{\complex}$
immediately follows.

Conversely, consider any 1qfa $N$ that recognizes $L$ with unbounded-error probability. We can assume, by Lemma \ref{folke-complex}, that $N$ uses only $\real$-amplitudes. As shown in \cite[Lemma 3.1]{YS11}, there exists a 1pfa $M$ satisfying that $p_{N,acc}(x) = p_{M,acc}(x)$ and $p_{N,rej}(x) = p_{M,rej}(x)$ for all $x$. From this fact, we can derive the following three inclusions:   $\onepqfa_{\complex}\subseteq \mathrm{SL}_{\real}$, $\onenqfa_{\complex}\subseteq \mathrm{SL}^{\neq}_{\real}$, and $\onecequalqfa_{\complex}\subseteq \mathrm{SL}^{=}_{\real}$.
\end{proof}

From Lemma \ref{PQFA-equal-SL} follow two natural separations between the 1-way model and the 2-way model of quantum finite automata.

\begin{corollary}\label{onePQFA-included-2PQFA}
$\onepqfa_{\complex} \subsetneqq \twopqfa_{\complex}[\lintime]$ and
$\mathrm{1C_{=}QFA}_{\complex}\subsetneqq \twocequalqfa_{\complex}[\lintime]$.
\end{corollary}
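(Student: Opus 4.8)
The plan is to deduce both separations from Lemma~\ref{PQFA-equal-SL}. Since that lemma gives $\onepqfa_{\complex}=\mathrm{SL}_{\real}$ and $\onecequalqfa_{\complex}=\mathrm{SL}^{=}_{\real}$, it suffices to establish $\mathrm{SL}_{\real}\subsetneq\twopqfa_{\complex}[\lintime]$ and $\mathrm{SL}^{=}_{\real}\subsetneq\twocequalqfa_{\complex}[\lintime]$. The two inclusions are immediate from the definitions: a 1qfa is a 2qfa whose transition function is supported on rightward head moves and which must halt the moment it reads $\dollar$, so on every input $x$ it terminates after exactly $|x|+2$ steps; hence every 1qfa is a worst-case linear-time 2qfa, while the unbounded-error criterion (resp.\ the nonnegative exact cut point criterion) reads verbatim the same for the one-way and the two-way models. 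Thus $\onepqfa_{\complex}\subseteq\twopqfa_{\complex}[\lintime]$ and $\onecequalqfa_{\complex}\subseteq\twocequalqfa_{\complex}[\lintime]$.

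The content is the strictness, for which I would exhibit a non-regular witness language that is recognized by a worst-case linear-time 2qfa but by no probabilistic finite automaton with a cut point. The upper-bound side uses a Kondacs--Watrous ``colliding-wavefront'' construction in the style of \cite{KW97}: two quantum fronts are launched from the two endmarkers at matched rates, and the transition amplitudes are tuned so that the fronts reconverge at the left endmarker with full amplitude exactly on the inputs satisfying the defining positional condition; since the head's itinerary is fixed between measurements, the machine halts in $O(|x|)$ steps and recognizes the language with one-sided (hence bounded, hence unbounded) error, putting it in $\twobqfa_{\complex}[\lintime]\subseteq\twopqfa_{\complex}[\lintime]$, and an analogous construction arranged so that the accepting mass equals $1/2$ precisely on the members yields a companion language in $\twocequalqfa_{\complex}[\lintime]$. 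The lower-bound side first invokes Ka\c{n}eps' collapse (\cite{Kan89}) of the one-way and two-way probabilistic cut-point families, and then shows the witness lies outside $\mathrm{SL}_{\real}$ (and its companion outside $\mathrm{SL}^{=}_{\real}$) by a classical argument: either a crossing-sequence/pigeonhole count over the finitely many ``profiles'' that a bounded-memory probabilistic two-way head can carry past a cell, or the Turakainen matrix-representation criterion ($L\in\mathrm{SL}_{\real}$ iff $L=\{w:\pi\mu(w)\rho>\lambda\}$ for some morphism $\mu\colon\Sigma^{*}\to M_{k}(\real)$ and real $\lambda$, and with ``$=$'' in place of ``$>$'' for the $\mathrm{SL}^{=}_{\real}$ case), proving that no such object can decide the defining condition.

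The step I expect to be the main obstacle is precisely this non-stochasticity lower bound: it must rule out \emph{every} real-amplitude probabilistic finite automaton together with \emph{every} cut point, not merely the bounded-error ones, so Rabin's isolated-cut-point regularity theorem does not by itself suffice; one is forced to reason directly about the exponential-polynomial / almost-periodic structure of $p_{M,acc}$ via the Turakainen representation, or to push through the probabilistic crossing-sequence estimate with care. A secondary point to verify is that the wavefront construction can be implemented in \emph{worst-case} linear time while keeping the error probability bounded away from $1/2$ --- this is routine for a purely positional condition because the head moves deterministically between measurements --- together with the fact that the $\twocequalqfa$ companion really does hit its exact cut point on the nose.
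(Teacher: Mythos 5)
Your skeleton matches the paper's: the inclusions are read off from the definitions (the paper does make one small adjustment you gloss over --- a 1qfa need not formally enter a halting state on every branch after reading $\dollar$, so one adds extra two-way transitions sending each surviving non-halting state to fresh accepting/rejecting states, resp.\ to fresh rejecting states in the $\mathrm{C}_{=}$ case), and strictness is reduced via Lemma~\ref{PQFA-equal-SL} to exhibiting a witness outside $\mathrm{SL}_{\real}$ that a linear-time 2qfa can handle. But the part you defer --- choosing the witness and proving it non-stochastic against \emph{every} real cut point --- is the entire content of the corollary, and you explicitly flag it as an unexecuted obstacle rather than resolving it. The paper discharges all of it by citation: the witness is the Nasu--Honda language $L_{NH}=\{a^{m}ba^{k_1}b\cdots ba^{k_d}b\mid \exists i\,[m=k_1+\cdots+k_i]\}$, the non-stochasticity $L_{NH}\notin\mathrm{SL}_{\real}$ follows from Freivalds--Karpinski \cite{FK94}, and the quantum upper bound is not a fresh Kondacs--Watrous construction but the fact (from the proof of Theorem 4.1 in \cite{YS11}) that $\overline{L_{NH}}$ is recognized by a 1.5-way qfa with exact cut point $1/2$, which places $L_{NH}$ in $\co\twocequalqfa_{\algebraic}[\lintime]\subseteq\twopqfa_{\algebraic}[\lintime]$ and settles both separations with a single language.

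Beyond incompleteness, there is a concrete danger in your plan: the natural KW97 ``colliding-wavefront'' witnesses for purely positional conditions (e.g.\ $UPal=\{0^n1^n\}$) are themselves stochastic, so a generic language of this type will simply fail the lower bound you need; non-stochastic languages recognizable by constant-space quantum devices are scarce, which is precisely why the paper reaches for $L_{NH}$. Your proposed lower-bound tools are also not obviously adequate as stated: Ka\c{n}eps' collapse and the Turakainen representation reduce the problem to one-way weighted automata, but ruling out all real matrices and all cut points for a specific language is a genuinely delicate argument (this is the content of \cite{FK94}), not something a crossing-sequence count delivers routinely. As written, the proposal is a plausible outline with its decisive steps missing.
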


\begin{proof}
Since 1qfa's outcomes are determined by the time their tape heads scan $\dollar$, it is possible to modify the 1qfa's by adding extra 2-way transitions that, after scanning $\dollar$, map each non-halting inner state to both a fresh accepting state and a fresh rejecting state with equal probability.
It thus follows that $\onepqfa_{\complex}\subseteq\twopqfa_{\complex}[\lintime]$. As for $\onecequalqfa_{\complex}$,  after scanning $\dollar$, it suffices to map all non-halting states of 1qfa's to fresh rejecting states. This mapping derives the inclusion   $\onecequalqfa_{\complex}\subseteq \twocequalqfa_{\complex}[\lintime]$.

Next, we denote by $L_{NH}$ the special language $\{a^{m}ba^{k_1}b\cdots ba^{k_d}b\mid m,k_1,\ldots,k_d\in\nat^{+},\exists i\in[1,d]_{\integer}[m=k_1+\cdots +k_i]\}$ over alphabet $\{a,b\}$ \cite{NH71}. From a result of Freivalds and Karpinski \cite{FK94} follows a non-membership $L_{NH}\notin \mathrm{SL}_{\real}$ \cite{FK94}. This also yields $L_{NH}\notin \mathrm{SL}^{\neq}_{\real}$ because $\mathrm{SL}^{\neq}_{\real} \subseteq \mathrm{SL}_{\real}$.
In contrast, the proof of \cite[Theorem 4.1]{YS11} actually shows that $\overline{L_{NH}}$ can be recognized by a certain qfa, say, $N$
whose tape head either moves to the right or stays still (such a qfa is known as a \emph{1.5-way qfa}) with exact cut point $1/2$. Hence, $L_{NH}$ belongs to $\co\twocequalqfa_{\algebraic}[\lintime]$, which is a subclass of $\twopqfa_{\algebraic}[\lintime]$. As a consequence, we obtain $\mathrm{SL}_{\real} \neq \twopqfa_{\complex}(\abshalt)$ and $\mathrm{SL}_{\real}^{\neq} \neq \co\twocequalqfa_{\complex}(\abshalt)$; thus, Lemma \ref{PQFA-equal-SL} implies that $\onepqfa_{\complex}\neq\twopqfa_{\complex}[\lintime]$ and $\onecequalqfa_{\complex}\neq \twocequalqfa_{\complex}[\lintime]$.
\end{proof}

The notion of \emph{quantum functions} generated by quantum Turing machines, given in \cite{Yam03}, is quite useful in describing various language families.  Similarly to a quantum-function class $\#\mathrm{QP}_{K}$ in \cite{Yam03}, we define $\#\twoqfa_{K}$ to be the set of (quantum) functions $p_{M,acc}:\Sigma^*\to[0,1]$ for all $K$-amplitude 2qfa's $M$. Such functions may be seen as an extension of \emph{stochastic functions} of Macarie \cite{Mac93}.
Note that, by exchanging $Q_{acc}$ and $Q_{rej}$ of $M$, $p_{M,rej}$ also belongs to $\#\twoqfa_{K}$.

As a natural analogue of multi-head 2pfa's, we introduce a two-head model of quantum finite automata, first introduced in \cite{ABF+99} as ``multi-tape'' quantum finite automata. This machine model is defined by a transition function of the form $\delta:Q\times\check{\Sigma} \times\check{\Sigma} \times D\times D\to\complex$.  Let $U_{\delta}^{(x)}$ be a time-evolution matrix acting on the configuration space
$span\{\qubit{q,h_1,h_2}\mid q\in Q, h_1,h_2\in [0,n+1]_{\integer}\}$ defined as
\[
U_{\delta}^{(x)}\qubit{q,h_1,h_2} = \sum_{(q',d_1,d_2)\in Q\times D^2} \delta(q,x_{h_1},x_{h_2},q',d_1,d_2) \qubit{q',h_1+d_1\;(\mathrm{mod}\;n+2),h_2+d_2\;(\mathrm{mod}\;n+2)},
\]
where $\cent x\dollar = x_0x_1\cdots x_nx_{n+1}$.
To make $M$ \emph{well-formed}, we need to demand that $U_{\delta}^{(x)}$ should be unitary. We can further generalize this 2-head model to a $k$-head model for any index $k\geq2$.

Briefly, we shall discuss simple properties of several functional operations among quantum functions taken from $\#\twoqfa_{K}$. As is shown in the next lemma, unlike $\#\mathrm{QP}_{K}$, the function class $\#\twoqfa_{K}$ does not seem to enjoy various closure properties, which $\#\mathrm{QP}_{K}$ naturally enjoys (see \cite{Yam03}).

\begin{lemma}\label{quantum-functions}
Let $f,g\in\#\twoqfa_{K}$ and let $\alpha,\beta\in\real\cap[0,1]$.
\begin{enumerate}\vs{-1}
  \setlength{\topsep}{-2mm}%
  \setlength{\itemsep}{1mm}%
  \setlength{\parskip}{0cm}%

\item If $f\in \#\twoqfa_{K}(\comphalt)$, then $1-f\in\#\twoqfa_{K}(\comphalt)$.

\item If $\alpha+\beta\leq1$ and $\sqrt{\alpha},\sqrt{\beta},\sqrt{1-\alpha-\beta}\in K$, then $\alpha f+\beta g\in \#\twoqfa_{K}(2\mbox{-}head)$.

\item $f\cdot g\in \#\twoqfa_{K}(2\mbox{-}head)$.
\end{enumerate}
\end{lemma}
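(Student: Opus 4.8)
The plan is to treat the three parts separately, with parts (1) and (2) being direct constructions and part (3) carrying essentially all of the difficulty.

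For part (1): write $f=p_{M,acc}$ for a $K$-amplitude 2qfa $M$ that halts completely, so $p_{M,acc}(x)+p_{M,rej}(x)=1$ for every $x$ and hence $1-f=p_{M,rej}$. Take $M'$ to be the 2qfa obtained from $M$ by interchanging $Q_{acc}$ and $Q_{rej}$ (the same device used for Lemma~\ref{basic-inclusion}(3)). Then $p_{M',acc}=p_{M,rej}=1-f$, and $M'$ still halts completely and still uses only $K$-amplitudes, so $1-f\in\#\twoqfa_{K}(\comphalt)$.

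For part (2): write $f=p_{M,acc}$, $g=p_{N,acc}$ for $K$-amplitude 2qfa's $M,N$, and recall $\sqrt{\alpha},\sqrt{\beta},\sqrt{1-\alpha-\beta}\in K$. I would define a $2$-head 2qfa $L$ whose very first move (reading $\cent$ in its initial inner state) splits, with amplitudes $\sqrt{\alpha}$, $\sqrt{\beta}$, $\sqrt{1-\alpha-\beta}$, into three mutually orthogonal branches: in the first, $L$ faithfully simulates $M$ on head~$1$ (head~$2$ parked on $\cent$); in the second, it simulates $N$ on head~$2$ (head~$1$ parked); in the third, it enters a fresh rejecting state at once. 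Because the three branches are never recombined, the time-evolution operator $U^{(x)}_{\delta}$ of $L$, after this opening column, is block diagonal — blocks being the time-evolution operators of $M$ and of $N$ (each tensored with an idle head) plus a one-dimensional identity for the rejecting branch — so well-formedness of $L$ follows from that of $M$ and $N$ together with a routine orthonormal completion of the single column encoding the split (feasible since $\alpha+\beta+(1-\alpha-\beta)=1$). A step-by-step accounting then gives $p_{L,acc}(x)=\alpha\,p_{M,acc}(x)+\beta\,p_{N,acc}(x)$, i.e.\ $\alpha f+\beta g\in\#\twoqfa_{K}(2\mbox{-}head)$.

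For part (3): again with $f=p_{M,acc}$, $g=p_{N,acc}$, I would build a $2$-head 2qfa $L$ operating in two phases. In phase~1, $L$ simulates $M$ faithfully on head~$1$ (head~$2$ parked on $\cent$): a rejecting move of $M$ becomes a rejecting move of $L$, a non-halting move is continued, and whenever the $M$-branch would enter an accepting inner state, $L$ does \emph{not} halt but instead reinterprets that branch as the initial configuration of a fresh simulation of $N$ on head~$2$, freezing head~$1$ where it stands; in phase~2 it simulates $N$ faithfully, turning $N$'s acceptance into $L$'s acceptance and $N$'s rejection into $L$'s rejection. The intended accounting is that the amplitude mass diverted into phase~2 has total weight $p_{M,acc}(x)$ and each such bit of mass then reaches an accepting configuration with probability $p_{N,acc}(x)$, whence $p_{L,acc}(x)=p_{M,acc}(x)\,p_{N,acc}(x)=f(x)g(x)$. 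I expect the real work of the lemma to be here, on two interrelated points. First, the phase-change step must be embedded into a genuinely unitary $U^{(x)}_{\delta}$: there is no canonical way to ``absorb'' $M$'s accepting amplitude (a naive freeze collides with the fact that $M$ pumps amplitude into its accepting states out of non-halting ones, which would destroy column-orthonormality), so the conversion ``$M$ accepts $\leadsto$ start $N$'' must be set up with enough fresh inner states to keep the columns orthonormal. Second — and this is the genuinely delicate issue — the copies of $N$ launched at \emph{different} steps of phase~1 must be prevented from interfering: writing $p_{L,acc}(x)$ out, the diagonal contributions already sum to $\big(\sum_i(\text{probability }M\text{ accepts at step }i)\big)p_{N,acc}(x)=p_{M,acc}(x)\,p_{N,acc}(x)$, and the entire point is to force the off-diagonal cross-terms to vanish. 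I would try to achieve this by using the second head (and, if needed, a bounded tag affixed at launch and a preprocessing of $M$) to place the $N$-copies launched at distinct steps into mutually orthogonal subspaces of $L$'s configuration space, and then check that this orthogonality survives $N$'s own subsequent evolution; making that argument airtight is the step I expect to require the most care.
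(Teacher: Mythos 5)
Your parts (1) and (2) match the paper's proof: (1) is the same swap of $Q_{acc}$ and $Q_{rej}$ plus complete halting, and (2) is the same three-way amplitude split $\sqrt{\alpha},\sqrt{\beta},\sqrt{1-\alpha-\beta}$ followed by a simulation of $M_f$ on one head and $M_g$ on the other (the paper realizes the split by a short left-then-right excursion of the second head, but that is a detail of implementation).

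Part (3) is where you diverge, and your route has a genuine gap that you flag but do not close. The paper does \emph{not} compose $M_f$ and $M_g$ sequentially; it runs them \emph{in parallel in tensor-product form}, $M_f$ on head~1 in one register and $M_g$ on head~2 in a second register, starting from $\qubit{r_f}\qubit{r_g}$, with $Q_{acc}=Q_{f,acc}\times Q_{g,acc}$ and $Q_{rej}=(Q_{f,acc}\times Q_{g,rej})\cup(Q_{f,rej}\times Q_{g,acc})$. The acceptance amplitude then factors as a tensor product of the two machines' accepting components, so the probability is $\|\sum_{q}\gamma_q\qubit{q}\otimes\sum_{q}\gamma'_q\qubit{q}\|^2=f(x)g(x)$ with no cross-term issue at all, because there is only ever one copy of each machine. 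Your sequential scheme, by contrast, launches a fresh copy of $N$ each time an $M$-branch would accept, and these copies genuinely can interfere: two $M$-branches accepting at times $i<j$ in the \emph{same} accepting configuration $(q,h)$ spawn $N$-runs that are time-shifted images of one another in the same subspace, and nothing prevents their configurations from coinciding later. The fix you propose --- a bounded tag plus the frozen position of head~1 --- cannot work in general, since a 2qfa may accept at unboundedly many distinct times while a constant-size tag and an $O(n)$-valued head position can only separate $O(n)$ launch epochs; so the off-diagonal terms you need to kill are not killed. (There is also the unitarity obstruction you yourself note at the phase change.) The lesson is that the product of acceptance probabilities is obtained for free from the multiplicativity of the tensor product, whereas sequential composition of measure-many 2qfa's is exactly the kind of operation this model does not obviously support --- which is presumably why the paper remarks that $\#\twoqfa_{K}$ ``does not seem to enjoy various closure properties.''
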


\begin{proof}
Let $f,g\in\#\twoqfa_{K}$ and assume that 2qfa's $M_f$ and $M_g$ witness $f$ and $g$, respectively; namely, $f(x) = p_{M_f,acc}(x)$ and $g(x) = p_{M_g,acc}(x)$ for all strings $x$. Let $M_f=(Q_f,\Sigma,\delta_f,r_f,Q_{f,acc},Q_{f,rej})$ and  $M_g=(Q_g,\Sigma,\delta_g,r_g,Q_{g,acc},Q_{g,rej})$ satisfying $Q_{f}\cap Q_{g}=\setempty$, for our convenience.

(1) This is immediate from the fact that $p_{M,acc}(x)+p_{M,rej}(x)=1$ for all $x$ if $M$ halts completely.

(2) The desired $2$-head machine $N$ for $\alpha f+ \beta g$ takes input $x$, starts with the initial configuration $\qubit{q_0}\qubit{0}\qubit{0}$, and transforms it to $(\sqrt{\alpha}\qubit{r_f}+ \sqrt{\beta}\qubit{r_g}+ \sqrt{1-\alpha-\beta}\qubit{r_{rej}}) \qubit{0}\qubit{0}$ by moving the second tape head leftward and then rightward. More precisely, we set $\delta(q_0,\cent) = \qubit{q_1,0,-1}$ and $\delta(q_1,\dollar) = \sqrt{\alpha}\qubit{r_f,0,+1} + \sqrt{\beta}\qubit{r_g,0,+1} + \sqrt{1-\alpha-\beta}\qubit{r_{rej},0,+1}$.

If $r_f$ is observed, then we run $M_f$ starting with $r_f$ as its initial state using only the first tape head.
Similarly, when $r_g$ is observed, we run $M_g$ instead.
Let $Q_{acc} = Q_{f,acc}\times\{r_f\}$ and $Q_{rej} = Q_{f,rej}\cup Q_{g,rej}$.
The above procedure defines a well-formed 2head-2qfa. It is not difficult to show that $N$'s output $p_{N,acc}(x)$ equals $\alpha p_{M_f,acc}(x) + \beta p_{M_g,acc}(x)$ for any input $x$.

(3) Starting with $\qubit{r_f}\qubit{r_g}$ on input $x$, the desired machine $N$ simulates $M_f$ and $M_g$ in a tensor product form; that is, $M_f$ uses the first register starting in inner state $r_f$ and $M_g$ uses the second register in its initial state $r_g$. We then obtain a tensor product of two quantum states $\sum_{q\in Q_{f,acc}}\gamma_q\qubit{q}+\sum_{q\notin Q_{f,acc}}\eta_q\qubit{q}$ and $\sum_{q\in Q_{g,acc}}\gamma'_q\qubit{q}+\sum_{q\notin Q_{g,acc}}\eta'_q\qubit{q}$ for appropriate amplitudes $\gamma_q$, $\gamma'_q$, $\eta_q$, and $\eta'_q$. The machine $N$ uses $Q_{acc} = Q_{f,acc}\times Q_{g,acc}$ and $Q_{rej} = (Q_{f,acc}\times Q_{g,rej})\cup (Q_{f,rej}\times Q_{g,acc})$. The overall acceptance probability of $N$ on $x$ is $\| \sum_{q\in Q_{f,acc}}\gamma_q\qubit{q}\otimes \sum_{q\in Q_{g,acc}}\gamma'_q\qubit{q} \|^2$, which clearly equals $f(x)g(x)$.
\end{proof}

\section{Termination Criteria of Quantum Finite Automata}\label{sec:runtime-2qfaing}

In a stark contrast with 1qfa's, 2qfa's are, in general, not always guaranteed to halt (in finite steps); even bounded-error 2qfa's may produce computation paths that do not terminate. What will happen if we place different termination conditions on all computation paths of 2qfa's?
In this section, we wish to discuss such an issue regarding the effect of various termination criteria of 2qfa's. In particular, we shall investigate two specific cases of 2qfa's: absolutely-halting 2qfa's and completely-halting 2qfa's.

\subsection{Behaviors of Absolutely Halting QFAs}\label{sec:absolute-halt}

We begin with 2qfa's that \emph{halt absolutely} (that is,  all non-zero amplitude computation paths of a given qfa halt on all inputs within a finite number of steps). Since those 2qfa's are relatively easy to handle, it is possible for us to  obtain certain intriguing properties of them. Through this section, we shall describe those properties in details.
In what follows, we write $\am(2pfa,\polytime)$ for the family of all languages recognized by Dwork-Stockmeyer \emph{interactive proof systems} using 2pfa verifiers with $\real$-transition probabilities running in \emph{expected   polynomial time} \cite{DS92a}. For the formal definition and basic properties of this particular language family, refer to \cite{DS92a,NY09}.

\begin{proposition}\label{reg-qfa-am}
$\reg\subseteq \twoeqfa_{\rational}(\abshalt)\subseteq \co\tworqfa_{\rational}(\abshalt)\nsubseteq \am(2pfa,\polytime)$.
\end{proposition}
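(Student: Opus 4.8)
The statement is a chain of four assertions, so I would prove each link separately.

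\medskip

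\textbf{Step 1: $\reg\subseteq\twoeqfa_{\rational}(\abshalt)$.}
The plan is to show that every regular language is recognized error-free by a rational-amplitude 2qfa that halts absolutely. Start with a 1dfa $A=(Q,\Sigma,\delta,q_0,F)$ for $L$. The idea is to run $A$ deterministically but reversibly on $\cent x\dollar$: a 1dfa is a special kind of 2qfa whose ``unitary'' transition matrix is a permutation matrix on configurations, provided the underlying map on configurations is a bijection. Since a 1dfa reading left-to-right on a fixed-length input induces a function on the configuration space that need not be injective, I would instead use the standard trick of making the computation reversible by recording the step number implicitly through the (already present) head position: a 1-way sweep $\cent\to\dollar$ is automatically reversible because the head position strictly increases, so the configuration $(q,i)$ determines its unique predecessor $(q',i-1)$ once we also know that $A$ is deterministic --- the only obstruction is that several states at cell $i-1$ could map to the same state at cell $i$. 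To remove this, augment the state set so that each configuration remembers the entire history is too expensive; instead I would appeal to the fact that every 1dfa move can be embedded into a permutation matrix by adding ``garbage'' target states (exactly as in the construction cited for Lemma~\ref{PQFA-equal-SL}, or Lemma~4.9-style arguments), accepting/rejecting deterministically on reading $\dollar$. All amplitudes are $0$ or $1$, hence in $\rational$, and the head makes exactly $n+2$ moves, so the machine halts absolutely. This yields $\reg\subseteq\twoeqfa_{\rational}(\abshalt)$.

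\textbf{Step 2: $\twoeqfa_{\rational}(\abshalt)\subseteq\co\tworqfa_{\rational}(\abshalt)$.}
This is essentially a restriction-preserving version of Lemma~\ref{basic-inclusion}(1) together with closure under complement. Given $L\in\twoeqfa_{\rational}(\abshalt)$ via an absolutely-halting error-free 2qfa $M$, note that $p_{M,acc}(x)\in\{0,1\}$ and $p_{M,acc}(x)+p_{M,rej}(x)=1$ for every $x$, so $M$ already witnesses both $L\in\tworqfa_{\rational}(\abshalt)$ (one-sided error with $\varepsilon=0$) and, after swapping $Q_{acc}$ and $Q_{rej}$, $\overline L\in\tworqfa_{\rational}(\abshalt)$; the latter is exactly $L\in\co\tworqfa_{\rational}(\abshalt)$. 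Swapping accepting and rejecting states preserves both the amplitude set and the absolute-halting property, so no new work is needed here.

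\textbf{Step 3: $\co\tworqfa_{\rational}(\abshalt)\nsubseteq\am(2pfa,\polytime)$.}
This is the substantive separation and the main obstacle. The plan is to exhibit an explicit witness language $L\in\co\tworqfa_{\rational}(\abshalt)$ (equivalently, by Step~2 it suffices to put a language in $\twoeqfa_{\rational}(\abshalt)$, hence already in $\reg$ by Step~1 --- so in fact $\reg\nsubseteq\am(2pfa,\polytime)$ would suffice, if such a regular separating language existed). Here I would reexamine whether $\reg\subseteq\am(2pfa,\polytime)$: since 2pfa interactive proof systems running in expected polynomial time can trivially simulate 1dfa's (the verifier just runs the automaton and ignores the prover), we do in fact have $\reg\subseteq\am(2pfa,\polytime)$, so the separating language must come from the larger class $\co\tworqfa_{\rational}(\abshalt)$ and not merely from $\reg$. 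The right witness is a non-context-free or otherwise ``hard'' language that an absolutely-halting one-sided-error 2qfa can recognize but that provably lies outside $\am(2pfa,\polytime)$; the natural candidate is a padded/unary-encoded language built from a known lower bound against Dwork--Stockmeyer 2pfa interactive proofs (for instance a language requiring the verifier to check an exponentially long certificate, which expected-polynomial-time 2pfa IPs cannot do, by the space/time lower bounds of Dwork--Stockmeyer~\cite{DS92a}). Concretely I would take a language of the form $\{\,w\#^{t}\mid t=f(|w|)\,\}$ or a ``many-fold equality'' language tailored so that (i) a reversible 2qfa can verify the arithmetic relation deterministically with bounded head sweeps, placing it in $\twoeqfa_{\rational}(\abshalt)$ and hence in $\co\tworqfa_{\rational}(\abshalt)$, yet (ii) any 2pfa verifier in an IP must use more than polynomial expected time, contradicting membership in $\am(2pfa,\polytime)$. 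The hard part is pinning down the exact separating language and citing the precise Dwork--Stockmeyer-style lower bound that rules it out of $\am(2pfa,\polytime)$; once that language is fixed, the positive side (membership in $\twoeqfa_{\rational}(\abshalt)$) reduces to the reversible-simulation technique of Step~1, and the negative side is quoted directly from the 2pfa-IP literature~\cite{DS92a,NY09}.

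\medskip

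In summary: Steps~1 and~2 are routine encodings via reversible 2qfa simulation and state-swapping; the entire weight of the proposition rests on Step~3, namely producing an explicit language in $\co\tworqfa_{\rational}(\abshalt)$ together with a citation of a matching lower bound against expected-polynomial-time 2pfa interactive proof systems.
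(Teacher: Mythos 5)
Your Steps 1 and 2 are essentially what the paper does (it simply cites \cite[Corollary 5]{KW97} for the fact that every 1dfa can be simulated by an absolutely-halting 2-way \emph{reversible} finite automaton, and then invokes the argument of Lemma \ref{basic-inclusion}(1)). One caution on Step 1: the ``garbage target state'' embedding you borrow from the proof of Lemma \ref{PQFA-equal-SL} sends incorrectly-represented outcomes to accepting and rejecting states with equal probability, so it does \emph{not} preserve error-freeness; the correct tool is the Kondacs--Watrous reversible simulation, which exploits backward head moves to make the transition function injective. Since that result is standard and citable, this is a repairable slip rather than a fatal one.

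The genuine gap is Step 3, and you have in effect conceded it yourself: you correctly observe that $\reg\subseteq \am(2pfa,\polytime)$, so the witness must lie strictly outside $\reg$, but you never produce it --- you only sketch the shape of a candidate and admit that ``the hard part is pinning down the exact separating language.'' The paper's witness is the very concrete language $UPal=\{0^n1^n\mid n\geq 1\}$: Dwork and Stockmeyer \cite{DS92a} proved $UPal\notin\am(2pfa,\polytime)$, and \cite[Proposition 2]{KW97} gives an absolutely-halting rational-amplitude 2qfa that accepts every string of $UPal$ with certainty and rejects strings outside it with probability at least $1-\varepsilon$, which (after swapping accepting and rejecting states) places $UPal$ in $\co\tworqfa_{\rational}(\abshalt)$. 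Without naming a specific language and supplying both the upper bound (membership in $\co\tworqfa_{\rational}(\abshalt)$) and the matching lower bound against $\am(2pfa,\polytime)$, the separation --- which, as you note, carries the entire weight of the proposition --- remains unproved.
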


\begin{proof}
Since all 1dfa's can be simulated by certain 2rfa's that halt absolutely \cite[Corollary 5]{KW97}, we immediately conclude that $\reg\subseteq \twoeqfa_{\rational}(\abshalt)$. Similarly to Lemma \ref{basic-inclusion}(1), it follows  that $\twoeqfa_{\real}(\abshalt)\subseteq \tworqfa_{\rational}(\abshalt)\cap \co\tworqfa_{\rational}(\abshalt)$.
As for the last separation of the proposition,
Dwork and Stockmeyer \cite{DS92a} earlier showed that the language $UPal=\{0^n1^n\mid n\geq1\}$ over the binary alphabet $\{0,1\}$ does not belong to $\am(2pfa,\polytime)$.  However, since $UPal$ is in $\co\tworqfa_{\rational}(\abshalt)$ \cite[Proposition 2]{KW97}, we obtain the desired separation.
\end{proof}

Next, we shall give a precise bound of the running time of the 2qfa's when they halt absolutely. For convenience, we say that a 2qfa  halts in \emph{worst-case linear time} if every computation path of the 2qfa terminates within time linear in input size. In this case, we use another notation $\twobqfa_{K}[\lintime]$  for the family of languages witnessed by such 2qfa's using only $K$-amplitudes  to differentiate from the case of \emph{expected liner-time} computation, in which some computation paths may not even terminate. Similar bracketed notations can be introduced for $\twoeqfa$,  $\tworqfa$, $\twocequalqfa$, and $\twopqfa$. In the next theorem, we prove that every absolutely-halting 2qfa actually terminates in worst-case linear time.

\begin{theorem}\label{abs-halt-vs-lin-time}
For any nonempty set $K\subseteq\complex$, $\twobqfa_{K}(\abshalt) = \twobqfa_{K}[\lintime]$. The same is true for  $\twoeqfa$, $\tworqfa$,  $\twocequalqfa$, and $\twopqfa$.
\end{theorem}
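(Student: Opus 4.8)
The plan is to show that if a 2qfa $M$ halts absolutely, then there is a constant $c$ (depending only on $|Q|$ and $|\check{\Sigma}|$, not on the input) such that on every input $x$, every computation path of $M$ terminates within $c|x|$ steps; the converse inclusion is trivial since worst-case linear time certainly implies absolute halting. The key tool will be the Dimension Lemma of Yao (advertised in Section~\ref{sec:absolute-halt}) together with a careful analysis of the behaviour of the nonhalting part of the configuration vector.

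First I would fix an input $x$ of length $n$ and work in the configuration space $\mathcal{CONF}_n$, whose dimension is $|Q|(n+2)$. Write $V_i$ for the subspace spanned by the non-halting configurations $\qubit{q,\ell}$ with $q\in Q_{non}$, so $\dim V_i \le |Q_{non}|(n+2) =: m$. The evolution of the surviving (non-halting) amplitude vector $\qubit{\phi_i}$ is governed by the operator $T = \Pi_{non}U_{\delta}^{(x)}$ restricted to $V_i$, which is a contraction ($\|T\|\le 1$). Absolute halting means that for the particular starting vector $\qubit{\phi_0}$, there is a finite step $t_x$ beyond which $T^{t_x}\qubit{\phi_0}=0$. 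The crux is to bound $t_x$ linearly in $n$. Here I would invoke the Dimension Lemma, whose content (as used repeatedly in the qfa literature, e.g.\ \cite{KW97,Yao98}) is that if a sequence of partial configuration vectors obtained by iterating $T$ stays nonzero, their ``supports'' or a suitable sequence of spanned subspaces must grow, and once we have iterated more than the dimension $m$ of the ambient space the vectors become linearly dependent in a way that forces a periodic, non-vanishing behaviour --- contradicting absolute halting unless the vector already died. Concretely, I expect the argument to run: if $T^k\qubit{\phi_0}\ne 0$ for all $k\le m = |Q_{non}|(n+2)$, then by a rank/dimension-counting argument (the Dimension Lemma applied to the nested images) one finds that $T^k\qubit{\phi_0}$ never reaches $0$, so $M$ does not halt absolutely on $x$. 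Contrapositively, absolute halting forces termination within $m = |Q_{non}|(n+2) = O(n)$ steps, with the hidden constant being $|Q_{non}|$.

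Once the linear-time bound is established for a fixed $x$, I would observe that the constant $c = |Q_{non}|$ and additive slack are uniform over all inputs (they depend only on $M$'s state set, not on $x$), so $M$ runs in worst-case linear time in the sense of Section~\ref{sec:quantum-finite-automata}; this gives $\twobqfa_K(\abshalt)\subseteq\twobqfa_K[\lintime]$. The reverse inclusion $\twobqfa_K[\lintime]\subseteq\twobqfa_K(\abshalt)$ is immediate: a 2qfa all of whose computation paths terminate within $c|x|$ steps has, in particular, all paths terminating in finitely many steps, hence halts absolutely. Finally, since nothing in this argument used the bounded-error acceptance criterion --- it is purely a statement about when the non-halting amplitude vanishes --- the same equivalence holds verbatim for the error-free, one-sided-error, exact-cut-point, and unbounded-error families $\twoeqfa$, $\tworqfa$, $\twocequalqfa$, and $\twopqfa$; I would simply remark that the proof is criterion-independent rather than repeat it.

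\textbf{The main obstacle} will be extracting from the Dimension Lemma a bound that is genuinely linear in $n$ rather than merely finite or, say, quadratic. The subtlety is that a priori absolute halting only tells us each path dies eventually, with no a priori modulus of termination; one has to convert the dimension count $|Q_{non}|(n+2)$ on the non-halting configuration subspace into an actual step bound. I would handle this by arguing that if the surviving vector is still nonzero after $|Q_{non}|(n+2)$ steps, then the iterates $\qubit{\phi_0},\qubit{\phi_1},\ldots,\qubit{\phi_m}$ together with the nonzero limit behaviour of $T$ on the cyclic subspace they generate produce a $T$-invariant subspace on which $T$ acts as an isometry (eigenvalues on the unit circle), whence the amplitude never fully decays --- precisely what the Dimension Lemma is designed to certify. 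A secondary point to get right is the circular-tape convention and the endmarker handling, which affect the exact constant but not the linear growth rate; I would absorb these into the constant without fuss.
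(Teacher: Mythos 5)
Your proposal follows essentially the same route as the paper: the trivial direction is that worst-case linear time implies absolute halting, and the substantive direction is obtained by applying Yao's Dimension Lemma to the iterates of the projected evolution operator on the configuration space of dimension $O(|Q|n)$, concluding that any absolutely halting computation must in fact terminate within $|Q|(n+2)+1$ steps uniformly in the acceptance criterion. The details match the paper's argument (the paper works with $U_{\delta}^{(x)}\Pi_{non}$ and the subspaces $W_i$ of vectors killed after $i+1$ iterations, showing $W_{max}=W_d$ for some $d\leq N$), so no further commentary is needed.
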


By the definition of ``worst-case linear time,'' it naturally follows that $\twopqfa_{K}[\lintime]\subseteq \twobqfa_{K}(\abshalt)$. Hence, it suffices to focus on the proof of the other direction.
For this proof, we need to examine the behaviors of 2qfa's that halt absolutely.
Back in 1998, Yao \cite{Yao98} made the following useful observation regarding the length of their computation paths.

\begin{lemma}\label{claim-Yao98}\hs{1}
Any $\complex$-amplitude 2qfa with a set $Q$ of inner states should halt within worst-case $|Q|(n+2)+1$ steps if all (non-zero amplitude) computation paths of the 2qfa eventually terminate, where $n$ refers to input length.
\end{lemma}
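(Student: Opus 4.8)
The plan is to analyze the time-evolution of the 2qfa via the sequence of non-halting quantum states $\qubit{\phi_0}, \qubit{\phi_1}, \qubit{\phi_2}, \ldots$, where $\qubit{\phi_i} = \Pi_{non} U_{\delta}^{(x)}\qubit{\phi_{i-1}}$, each living in the finite-dimensional configuration space $\mathcal{CONF}_n = span\{\qubit{q,\ell} \mid q\in Q, \ell\in[0,n+1]_{\integer}\}$, which has dimension $|Q|(n+2)$. The operation carrying $\qubit{\phi_{i-1}}$ to $\qubit{\phi_i}$ is the linear operator $T = \Pi_{non} U_{\delta}^{(x)}$ restricted to the non-halting subspace $W_{non}^{(n)} = \mathcal{CONF}_n \cap W_{non}$. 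The essential point is that if the 2qfa halts absolutely, then $T$ must be \emph{nilpotent} on the relevant subspace: every computation path terminates in finitely many steps, so for any starting configuration there is a step at which the surviving (non-halted) amplitude at that configuration is zero, and combining over all basis configurations one gets $T^m = 0$ for some $m$. A nilpotent operator on a space of dimension $D$ satisfies $T^D = 0$, so $T^{|Q|(n+2)} = 0$, i.e., after $|Q|(n+2)$ applications of $U_{\delta}^{(x)}$ the non-halting component vanishes entirely, forcing all computation to have halted by then.

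**The key steps, in order.**

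First I would make precise the claim that absolute halting forces $T$ to be nilpotent. Suppose not; then $T$ has a nonzero eigenvalue or, more carefully, the descending chain of subspaces $W_{non}^{(n)} \supseteq \ran(T) \supseteq \ran(T^2) \supseteq \cdots$ stabilizes at some nonzero subspace $V \neq \{0\}$ with $T V = V$. Pick a basis configuration $\qubit{c}$ whose projection onto $V$ is nonzero (such a $c$ exists since $V \neq \{0\}$ and the basis configurations span). Then, starting the machine from $\qubit{c}$, the non-halting amplitude never dies: $\|T^k\qubit{c}\|$ stays bounded away from $0$ for infinitely many $k$ (indeed $T$ acts invertibly on $V$), which exhibits a computation path — in the tree-of-configurations sense — that never terminates, contradicting absolute halting. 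Hence the chain must reach $\{0\}$, i.e., $T^m = 0$ for some finite $m$. Second, I would invoke the elementary linear-algebra fact that a nilpotent operator on a $D$-dimensional vector space has $T^D = 0$ (the chain $\ran(T) \supsetneq \ran(T^2) \supsetneq \cdots$ must strictly drop dimension at each step until it hits $\{0\}$, so at most $D$ steps). With $D = |Q|(n+2)$ this gives $T^{|Q|(n+2)} = 0$. Third, translate back: after step $|Q|(n+2)$ the non-halting quantum state is the zero vector, so no amplitude remains to halt at step $|Q|(n+2)+1$ or later — except possibly one final halting measurement at step $|Q|(n+2)+1$ applied to $U_{\delta}^{(x)}\qubit{\phi_{|Q|(n+2)-1}}$, which accounts for the ``$+1$'' in the bound. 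Thus every computation path terminates within $|Q|(n+2)+1$ steps.

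**Anticipated main obstacle.**

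The delicate point is the rigorous link between ``every computation \emph{path} terminates'' (a statement about the branching tree of classical configurations generated by the nonzero-amplitude transitions) and ``the operator $T = \Pi_{non}U_{\delta}^{(x)}$ is nilpotent'' (a statement about a linear map). One has to be careful that a configuration can be reached with amplitude exactly zero even though there is a nonzero-amplitude path to it (cancellation), and conversely; the cleanest route is to work with the combinatorial ``configuration graph'' whose edges are the transitions with nonzero $\delta$-value, argue that absolute halting means this graph, restricted to non-halting configurations, is acyclic (a cycle would yield an infinite path), hence its ``adjacency-like'' structure is nilpotent as a zero-one matrix, and then observe that the support of $T^k$ is contained in the set of length-$k$ walks in this graph — so acyclicity of a graph on $|Q|(n+2)$ vertices already forces all walks to have length $< |Q|(n+2)$, whence $T^{|Q|(n+2)} = 0$ regardless of amplitude cancellations. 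I would present the argument through this graph-theoretic reformulation to sidestep any subtlety about interference, and only at the end convert the conclusion into the $|Q|(n+2)+1$ step bound on the actual quantum computation.
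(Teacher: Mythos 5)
Your final, graph-theoretic argument is sound and reaches the same bound, but it takes a genuinely different route from the paper's. The paper proves a standalone Dimension Lemma (Lemma \ref{Yao98-dimension}): for $U_W = U_{\delta}^{(x)}\Pi_{non}$, the \emph{ascending} chain of kernels $W_i = \ker(U_W^{i+1})$ stabilizes within $N = |Q|(n+2)$ steps, so $W_{max} = W_N$; it then observes that absolute halting puts the initial vector $w_0 = \qubit{q_0,0}$ into $W_{max}$, hence into $W_N$, which gives the $N+1$ bound. You instead work with the \emph{descending} chain of ranges (equivalently, nilpotency of $T = \Pi_{non}U_{\delta}^{(x)}$) and, in the version you say you would actually present, with acyclicity of the nonzero-amplitude configuration graph. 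Your combinatorial version is more elementary and makes explicit the one step the paper leaves implicit --- why ``every path terminates'' forces the amplitude vector to vanish eventually despite possible cancellations --- whereas the paper needs its Dimension Lemma anyway for Lemma \ref{exp-time-bound} and obtains the present lemma as a corollary.

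Two points need repair. First, in your linear-algebraic draft, the claim that $\|T^k\qubit{c}\|$ ``stays bounded away from $0$'' because $T$ acts invertibly on the stable subspace $V$ is unjustified: $T$ is only a contraction, $T|_{V}$ may have all eigenvalues of modulus strictly less than $1$, and $\qubit{c}$ need not lie in $V$, so you cannot even conclude $T^k\qubit{c}\neq 0$ for all $k$ (the component of $\qubit{c}$ outside $V$ can place $\qubit{c}$ in $\ker(T^k)$). Second --- and this affects the graph version as well --- the hypothesis only forbids infinite computation paths \emph{starting from the initial configuration}, so you may not conclude that the whole non-halting configuration graph is acyclic, nor that $T$ is nilpotent as an operator: an unreachable cycle is perfectly consistent with the hypothesis. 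The fix is to restrict to the configurations reachable from $\qubit{q_0,0}$ by nonzero-amplitude transitions; that subgraph is acyclic, has at most $|Q|(n+2)$ vertices, hence supports no walk of length at least $|Q|(n+2)$, and since the amplitude of any configuration at step $k$ is a sum over such walks, the non-halting state vanishes for $k \geq |Q|(n+2)$ and the stated bound follows. This is precisely the weaker form in which the paper uses the hypothesis ($w_0\in W_{max}$, rather than nilpotency of the full operator).
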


Let us postpone the proof of Lemma \ref{claim-Yao98} for a while and we first demonstrate how to prove Theorem \ref{abs-halt-vs-lin-time} using this lemma.

\vs{-2}
\begin{proofof}{Theorem \ref{abs-halt-vs-lin-time}}
Let $L$ be any language in $\twobqfa_{K}(\abshalt)$, which is recognized by a certain $K$-amplitude  2qfa, say, $M$ with bounded-error probability. Assume that $M$ halts absolutely; that is, all (non-zero amplitude) computation paths of $M$ on every input $x$ eventually terminate in finitely many steps. By Lemma \ref{claim-Yao98}, we conclude that $M$ halts within worst-case $O(n)$ steps. This conclusion indicates that $L$ belongs to $\twobqfa_{K}[\lintime]$.
Thus, we immediately obtain $\twobqfa_{K}(\abshalt)\subseteq \twobqfa_{K}[\lintime]$. Since the converse containment is trivial, it follows that $\twobqfa_{K}(\abshalt) = \twobqfa_{K}[\lintime]$, as requested.
\end{proofof}

Lemma \ref{claim-Yao98} is so useful that it leads to not only Theorem \ref{abs-halt-vs-lin-time} but also various other consequences.

\begin{corollary}\label{amplitude-reduction}
\begin{enumerate}\vs{-1}
  \setlength{\topsep}{-2mm}%
  \setlength{\itemsep}{1mm}%
  \setlength{\parskip}{0cm}%

\item $\twopqfa_{\algebraic}(\abshalt) = \co\twopqfa_{\algebraic}(\abshalt)$.

\item $\twocequalqfa_{\algebraic}(\abshalt)\cup \co\twocequalqfa_{\algebraic}(\abshalt) \subseteq \twopqfa_{\algebraic}(\abshalt)$.

\item $\twoeqfa_{\complex}(\abshalt) = \twoeqfa_{\algebraic\cap\real}(\abshalt)$.
\end{enumerate}
\end{corollary}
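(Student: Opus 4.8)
All three parts lean on Lemma~\ref{claim-Yao98} (equivalently Theorem~\ref{abs-halt-vs-lin-time}): an absolutely-halting $\complex$-amplitude $2$qfa $M$ with state set $Q$ halts within $T(n):=|Q|(n+2)+1$ steps, so $p_{M,acc}(x)=\sum_{t\le T(|x|)}\|\Pi_{acc}U_{\delta}^{(x)}(\Pi_{non}U_{\delta}^{(x)})^{t-1}\ket{q_0}\ket{0}\|^{2}$ is a polynomial $P_x$ with non-negative integer coefficients, of degree $O(|x|)$ and $2^{O(|x|)}$ monomials, evaluated at the transition amplitudes of $M$ and their complex conjugates. When these amplitudes lie in $\algebraic$, they and their conjugates generate one number field $F$ with $d:=[F:\rational]$ depending only on $M$, so $p_{M,acc}(x)\in F\cap\real$. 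The plan for Parts~1--2 uses a Liouville-type \emph{gap lemma}: there is a constant $c=c(M)$ with $p_{M,acc}(x)\ne\tfrac12\Rightarrow|p_{M,acc}(x)-\tfrac12|\ge 2^{-c|x|}$. One gets this by writing $\beta:=D\,(2p_{M,acc}(x)-1)$ for a suitable $D=2^{O(|x|)}$ making $\beta$ a non-zero algebraic integer whose house (maximal conjugate modulus) is again $2^{O(|x|)}$ (since $P_x$ has degree $O(|x|)$ and $2^{O(|x|)}$ monomials and the amplitudes are fixed algebraic numbers), so $1\le|N_{F/\rational}(\beta)|\le|\beta|\cdot(\text{house of }\beta)^{d-1}$ with $d$ fixed, whence $|2p_{M,acc}(x)-1|\ge 2^{-O(|x|)}$.

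\smallskip
\noindent\emph{Part 1.} Let $L\in\twopqfa_{\algebraic}(\abshalt)$ be recognized by an absolutely-halting $M$; since $M$ completely halts, $p_{M,acc}(x)+p_{M,rej}(x)=1$, so $x\in L$ iff $p_{M,acc}(x)>\tfrac12$. Fix a rational $\theta\in(0,1)$ with $\theta^{2}\le 2^{-c}$ and build $N$ which, on input $x$ of length $n\ge 1$: at $\cent$ splits $\ket{q_0}\ket{0}$ into amplitude $\tfrac12$ on a fresh accepting state, amplitude $1/\sqrt2$ on a state $m$, and amplitude $\tfrac12$ on a state $a$; on the $m$-branch runs $M$ with $Q_{acc}$ and $Q_{rej}$ interchanged; and on the $a$-branch performs a one-way left-to-right ``peel'' over the $n$ input cells, $U_{\delta}^{(x)}\ket{a,j}=\theta\ket{a,j+1}+\sqrt{1-\theta^{2}}\ket{q_{rej},j+1}$ with $q_{rej}$ a fresh rejecting state, then routes the unique surviving branch (which reaches $\dollar$ with amplitude $\tfrac12\theta^{n}$) to a fresh accepting state. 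These transitions can be made well-formed (the peel columns sit on pairwise distinct cells or on orthogonal states, and the run of $M$ is attached by the usual composition of $2$qfa's), and $N$ halts absolutely (the peeled mass is rejected within $n+1$ steps, and the $m$-branch inherits absolute halting from $M$). A short computation gives $p_{N,acc}(x)=\tfrac14+\tfrac14\theta^{2n}+\tfrac12 p_{M,rej}(x)$, which exceeds $\tfrac12$ iff $p_{M,acc}(x)<\tfrac12+\tfrac12\theta^{2n}$; since $\tfrac12\theta^{2n}\le 2^{-cn}$, the gap lemma makes this equivalent to $p_{M,acc}(x)\le\tfrac12$, i.e.\ to $x\notin L$ (the length-$0$ input is patched by hand). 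Hence $\overline{L}\in\twopqfa_{\algebraic}(\abshalt)$, and as complementation is an involution, $\twopqfa_{\algebraic}(\abshalt)=\co\twopqfa_{\algebraic}(\abshalt)$.

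\smallskip
\noindent\emph{Part 2.} First I would show $\co\twocequalqfa_{\algebraic}(\abshalt)\subseteq\twopqfa_{\algebraic}(\abshalt)$. Given $L\in\twocequalqfa_{\algebraic}(\abshalt)$ recognized by an absolutely-halting $M$ with exact cut point $\eta$, a two-branch adjustment as in the proof of Lemma~\ref{basic-inclusion}(4) (first interchanging $Q_{acc},Q_{rej}$ when $\eta>\tfrac12$, then after a rejecting outcome accepting with a suitable rational probability) yields an absolutely-halting $\algebraic$-amplitude $M'$ with $p_{M',acc}(x)+p_{M',rej}(x)=1$ and $x\in L$ iff $p_{M',acc}(x)=\tfrac12$. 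Let $N$ run $M'$ twice sequentially (rewinding the head between runs, which is routine for absolutely-halting $2$qfa's) and accept iff the two outcomes agree; then $p_{N,acc}(x)=p_{M',acc}(x)^{2}+p_{M',rej}(x)^{2}=2\bigl(p_{M',acc}(x)-\tfrac12\bigr)^{2}+\tfrac12\ge\tfrac12$, with equality exactly when $x\in L$. So $p_{N,acc}(x)>\tfrac12$ iff $x\notin L$, giving $\overline{L}\in\twopqfa_{\algebraic}(\abshalt)$, i.e.\ $\twocequalqfa_{\algebraic}(\abshalt)\subseteq\co\twopqfa_{\algebraic}(\abshalt)$. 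Invoking Part~1 and taking complements of classes yields both $\twocequalqfa_{\algebraic}(\abshalt)\subseteq\twopqfa_{\algebraic}(\abshalt)$ and $\co\twocequalqfa_{\algebraic}(\abshalt)\subseteq\twopqfa_{\algebraic}(\abshalt)$, which is the claim.

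\smallskip
\noindent\emph{Part 3.} The inclusion $\twoeqfa_{\algebraic\cap\real}(\abshalt)\subseteq\twoeqfa_{\complex}(\abshalt)$ is trivial. For the converse, apply Lemma~\ref{folke-complex} (its simulation copies the tape-head moves, hence preserves absolute halting, and copies the acceptance/rejection probabilities, hence the error-free property) to obtain an absolutely-halting \emph{real}-amplitude $2$qfa $M$ for $L$; by Lemma~\ref{claim-Yao98} it halts within $T(n)=|Q|(n+2)+1$ steps. Introduce a variable $z_{p,\sigma,q,d}$ for each value of $\delta$, write $\vec z$ for the finite tuple and $\vec a$ for $M$'s actual amplitudes, and collect into one (countable) system $\mathcal S$ of polynomial equations over $\rational$ in $\vec z$: (a) $U^{\mathrm T}U=I$ for every $U_{\delta}^{(x)}$ (well-formedness, using that real unitary means orthogonal); (b) for every $x$ and every length-$T(|x|)$ run of the configuration graph that starts at $\ket{q_0}\ket{0}$ and never meets a halting state, vanishing of the product of the transition amplitudes along that run (encoding ``halts absolutely within $T(|x|)$ steps'', valid for $\vec a$ by Lemma~\ref{claim-Yao98}); and (c) $P_x(\vec z)=p_{M,acc}(x)$ for every $x$, where $P_x\in\rational[\vec z]$ is determined by the combinatorial data of $M$ alone and $p_{M,acc}(x)\in\{0,1\}$. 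By Hilbert's Basis Theorem $(\mathcal S)\subseteq\rational[\vec z]$ is finitely generated, so $\mathcal S$ and some finite subsystem $\mathcal S_0$ have the same zero set over every extension of $\rational$; $\vec a$ is a real zero of $\mathcal S_0$, so the first-order sentence ``$\exists\vec z\ \bigwedge_{f\in\mathcal S_0}f(\vec z)=0$'' holds over the real-closed field $\real$, hence, by completeness of the theory of real-closed fields, over the real-closed field $\algebraic\cap\real$; a resulting real-algebraic zero $\vec a'$ of $\mathcal S_0$ is then a zero of all of $\mathcal S$. The $2$qfa obtained from $M$ by substituting $\vec a'$ for $\vec a$ is well-formed (by (a)), halts absolutely (by (b), every non-zero-amplitude path must reach a halting state by step $T(n)$, so no stray non-terminating path survives), and has the same acceptance probabilities (by (c)), whence $L\in\twoeqfa_{\algebraic\cap\real}(\abshalt)$.

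\smallskip
\noindent\emph{Anticipated main obstacle.} The most error-prone steps are (i) the exact well-formedness bookkeeping for the ``peel'' gadget of Part~1 and its clean composition with the run of $M$ --- the circular tape, the $\cent/\dollar$ transitions, and keeping $U_{\delta}^{(x)}$ unitary for \emph{every} input --- and (ii) making the ``halts absolutely'' constraint in Part~3 genuinely polynomial, which is why it must be phrased as vanishing of long-run amplitude monomials rather than merely $\ket{\phi_{T(|x|)}}=0$, so that cancellation cannot reintroduce a non-terminating path in the perturbed machine. The Liouville gap estimate of Parts~1--2 is then routine once the worst-case linear-time bound of Lemma~\ref{claim-Yao98} is in hand.
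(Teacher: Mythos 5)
Your Parts~1 and~3 are essentially the paper's own argument. For Part~1 the paper also passes through the worst-case linear-time bound of Lemma~\ref{claim-Yao98}, expresses $p_{M,acc}(x)$ as an integer-coefficient polynomial in the algebraic amplitudes, and invokes a Stolarsky-type separation bound (Lemma~\ref{lemma-Sto74}) to get the exponentially small gap $|p_{M,acc}(x)-\tfrac12|\ge c^{n+1}$; your house/norm derivation of that bound and your three-way split gadget are only cosmetically different from the paper's two-way $\sqrt{c}/\sqrt{1-c}$ split with a per-cell accept/reject peel. For Part~3 the paper likewise does symbolic computation over variables for the amplitudes and cites \cite[Proposition 6.1]{ADH97} for the transfer of a common real zero to a common real-algebraic zero; your Hilbert-basis-plus-RCF-completeness argument is just that lemma unfolded, and your explicit polynomial encoding of absolute halting as vanishing of per-path amplitude products is, if anything, slightly more careful than the paper's assertion that the running time is unaffected.

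Part~2, however, has a genuine gap. You propose to ``run $M'$ twice sequentially (rewinding the head between runs, which is routine for absolutely-halting $2$qfa's)'' and accept iff the outcomes agree. This step is not implementable in the measure-many model. To continue computing after the first run, the would-be halting states of $M'$ must be re-declared non-halting in $N$, so no intermediate measurement collapses the state; the first run then leaves a coherent superposition $\sum_j\alpha_j\ket{q_j,h_j}$ over many distinct halting configurations (different inner states and, crucially, different head positions). Restarting $M'$ requires mapping all of these to a single configuration $\ket{q_0',0}$ tagged with the one-bit outcome, but that map sends pairwise orthogonal configurations to the same vector and hence is not unitary; even a ``move left until $\cent$'' rewind cannot forget the head position reversibly with a constant number of inner states. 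If instead you refuse to merge the branches, the interference among same-tag branches during the second run destroys the identity $p_{N,acc}(x)=p_{M',acc}(x)^2+p_{M',rej}(x)^2$: you would obtain quantities like $|\sum_j\alpha_j|^2$ rather than $\sum_j|\alpha_j|^2$. The paper avoids this entirely by running the two copies \emph{in parallel} via the tensor-product operator $U^{(x)}_{\delta}\otimes U^{(x)}_{\delta}$ on $\ket{q_0}\ket{q_0}$ and measuring in the product basis, which yields $p_{M,acc}(x)^2+p_{M,rej}(x)^2$ exactly because the cross terms never need to be merged. Replacing your sequential composition with this parallel simulation (and keeping your normalization of the exact cut point to $\tfrac12$ and the final appeal to Part~1) repairs the argument.
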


In Corollary \ref{amplitude-reduction}(1--2),  we do not know at present whether the amplitude set $\algebraic$ can be replaced by $\complex$ because  the proof given below heavily relies on the property of algebraic numbers.

\begin{proofof}{Corollary \ref{amplitude-reduction}}
(1)
Let $L$ be any language in $\twopqfa_{\algebraic}(\abshalt)$ recognized with bounded-error probability by a certain $\algebraic$-amplitude 2qfa $M$ of the form $(Q,\Sigma,\delta,q_0,Q_{acc},Q_{rej})$. Theorem \ref{abs-halt-vs-lin-time} ensures that all computation paths of $M$ on any input of length $n$ terminate within $|Q|(n+2)+1$ steps. We write $F_{M}$ for a set of all transition amplitudes used in $M$. Since $F_{M}\subseteq \algebraic$ holds be the choice of $M$, we can choose numbers $\alpha_1,\alpha_2,\ldots,\alpha_e\in \algebraic$ so  that $F_{M}\subseteq \rational(\alpha_1,\ldots,\alpha_e)/\rational$, where $e=|Q||\Sigma||D|$.

Given an arbitrary input $x\in\Sigma^*$, we set $\alpha_{x} = p_{M,acc}(x)-1/2$ if $x\in L$, and  $\alpha_x= p_{M,rej}(x)-1/2$ otherwise. Let us consider $U^{(x)}_{\delta}$ and $\Pi_{non}$ associated with $M$.
Here, we claim that $p_{M,acc}(x)$ has the form $\sum_{k}a_{k}\left( \prod_{i=1}^e\alpha_i^{k_i} \right)$, where $k =(k_1,\ldots,k_e)$ ranges over $\integer_{[N_1]}\times\cdots\times\integer_{[N_e]}$,   $(N_1,\ldots,N_e)\in\nat^{e}$ with $N_i=2|Q|(n+2)+2$, and $a_{k}\in\integer$.
To verify this claim, we first note that $p_{M,acc}(x)$ is calculated as   $\sum_{t=0}^{|Q|(n+2)+1} \sum_{q\in Q_{acc}} \sum_{\ell\in[0,n+1]_{\integer}} | \bra{q,\ell} (U_{\delta}^{(x)}\Pi_{non})^t \ket{q_0,0}|^2$ by Lemma \ref{claim-Yao98}. Since each value $| \bra{q,\ell} (U_{\delta}^{(x)}\Pi_{non})^t \ket{q_0,0}|^2$ is written in the form  $\sum_{k}a'_{k}\left( \prod_{i=1}^e\alpha_i^{k_i} \right)$, which is a polynomial in $(\alpha_1,\ldots,\alpha_e)$, it is also possible to express $p_{M,acc}(x)$ as a polynomial in $(\alpha_1,\ldots,\alpha_e)$.
Hence, when $x\in L$, since $\alpha_x = p_{M,acc}(x)-1/2$, $\alpha_x$ can be expressed in a similar polynomial form. The case of $p_{M,rej}(x)$ is similar.

Next, we use the following known result taken from Stolarsky's textbook \cite{Sto74}. For other applications of this result, see \cite{Yam03,Yam11} for example.

\begin{lemma}\label{lemma-Sto74}\hs{2}
Let $\alpha_1,\ldots,\alpha_e\in\algebraic$. Let $h$ be the degree  of  $\rational(\alpha_1,\ldots,\alpha_e)/\rational$. There exists a constant $c>0$ that satisfies the following for any complex number $\alpha$ of the form $\sum_{k}a_{k}\left( \prod_{i=1}^e\alpha_i^{k_i} \right)$, where $k =(k_1,\ldots,k_e)$ ranges over $\integer_{[N_1]}\times\cdots\times\integer_{[N_e]}$, $(N_1,\ldots,N_e)\in\nat^{e}$, and $a_{k}\in\integer$. If $\alpha\neq0$ then $|\alpha|\geq \left(\sum_{k}|a_{k}| \right)^{1-h}\prod_{i=1}^{e}c^{-hN_i}$.
\end{lemma}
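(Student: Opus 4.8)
Here is a plan for proving Lemma~\ref{lemma-Sto74}.

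\medskip
\noindent
The plan is to prove Lemma~\ref{lemma-Sto74} by the classical ``size'' (Liouville-type) argument for algebraic numbers. The number $\alpha$ lies in the number field $L=\rational(\alpha_1,\ldots,\alpha_e)$, so the idea is to clear denominators so that a fixed integer multiple of $\alpha$ becomes an algebraic integer of $L$, to invoke the fact that the field norm of a nonzero algebraic integer is a rational integer and hence of absolute value at least $1$, and finally to bound each ``conjugate'' of $\alpha$ from above by $\sum_k|a_k|$ times a controlled product of powers of fixed constants.

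Concretely, I would first fix $L=\rational(\alpha_1,\ldots,\alpha_e)$; since $\rational$ has characteristic zero, $L/\rational$ is separable of degree $h$, so $L$ admits exactly $h$ embeddings $\sigma_1=\mathrm{id},\sigma_2,\ldots,\sigma_h$ into $\complex$, and for every $\beta\in L$ one has $N_{L/\rational}(\beta)=\prod_{j=1}^{h}\sigma_j(\beta)$, which lies in $\integer$ whenever $\beta$ belongs to the ring of integers $\mathcal{O}_L$. For each $i\in[e]$ choose a positive integer $d_i$ with $d_i\alpha_i\in\mathcal{O}_L$, put $B=\max\bigl(\{1\}\cup\{\,|\sigma_j(\alpha_i)| : i\in[e],\ j\in[h]\,\}\bigr)$, and set $c=B\cdot\max_{i}d_i$, so that $c\geq1$ and, crucially, $c$ depends only on $\alpha_1,\ldots,\alpha_e$ (equivalently, only on $h$ and the $\alpha_i$), and not on the $N_i$ nor on the coefficients $a_k$.

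Then, for a given $\alpha=\sum_k a_k\prod_{i=1}^{e}\alpha_i^{k_i}\neq0$ with $0\leq k_i\leq N_i$, set $D=\prod_{i=1}^{e}d_i^{N_i}$. The identity $D\alpha=\sum_k a_k\prod_{i=1}^{e}d_i^{\,N_i-k_i}(d_i\alpha_i)^{k_i}$, in which every exponent $N_i-k_i$ is nonnegative, exhibits $D\alpha$ as an element of $\mathcal{O}_L$, and it is nonzero; hence $|N_{L/\rational}(D\alpha)|=D^{h}\prod_{j=1}^{h}|\sigma_j(\alpha)|\geq1$, which (using $\sigma_1=\mathrm{id}$) gives $|\alpha|\geq\bigl(D^{h}\prod_{j=2}^{h}|\sigma_j(\alpha)|\bigr)^{-1}$. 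On the other hand, the triangle inequality together with $B\geq1$ and $k_i\leq N_i$ yields $|\sigma_j(\alpha)|\leq\sum_k|a_k|\prod_{i=1}^{e}|\sigma_j(\alpha_i)|^{k_i}\leq\bigl(\sum_k|a_k|\bigr)\prod_{i=1}^{e}B^{N_i}$ for every $j$; multiplying over $j=2,\ldots,h$ and recalling $D^{h}=\prod_i d_i^{hN_i}$ gives $|\alpha|\geq\bigl(\sum_k|a_k|\bigr)^{1-h}\prod_{i=1}^{e}B^{-(h-1)N_i}d_i^{-hN_i}$. Finally, $B^{h-1}d_i^{h}\leq(Bd_i)^{h}\leq c^{h}$ forces $B^{-(h-1)N_i}d_i^{-hN_i}\geq c^{-hN_i}$ for each $i$, which is exactly the asserted bound $|\alpha|\geq\bigl(\sum_k|a_k|\bigr)^{1-h}\prod_{i=1}^{e}c^{-hN_i}$.

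The norm identity and the triangle-inequality estimate are routine; the one point that needs care — and the reason the statement is phrased with the factor $\prod_i c^{-hN_i}$ — is the uniform clearing of denominators: the exponent $N_i$ in $D=\prod_i d_i^{N_i}$ must dominate every $k_i$ so that $D\alpha\in\mathcal{O}_L$, and the single constant $c$ must simultaneously absorb the denominators $d_i$ and the conjugate sizes $B$, uniformly over all admissible exponent vectors and all integer coefficients. I do not expect a genuine obstacle here: this is precisely the fundamental (Liouville) inequality for algebraic numbers, and Stolarsky's textbook \cite{Sto74} provides it in the explicit form used above.
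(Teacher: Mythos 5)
The paper does not actually prove this lemma: it is quoted verbatim as a known result from Stolarsky's textbook \cite{Sto74}, so there is no in-paper argument to compare against. Your proposal supplies a correct, self-contained derivation by the standard Liouville-type norm argument, and it does establish exactly the stated bound. The key points all check out: $D\alpha=\sum_k a_k\prod_i d_i^{N_i-k_i}(d_i\alpha_i)^{k_i}$ is a nonzero algebraic integer of $L$ because every exponent $N_i-k_i$ is nonnegative (this is where the hypothesis $k_i\le N_i$ is used, and it is the reason the final bound degrades with $N_i$ rather than with the individual $k_i$); the norm $N_{L/\rational}(D\alpha)=D^h\prod_{j}\sigma_j(\alpha)$ is a nonzero rational integer, hence of absolute value at least $1$; each conjugate is bounded by $\bigl(\sum_k|a_k|\bigr)\prod_iB^{N_i}$ using $B\ge1$; and the single constant $c=B\cdot\max_i d_i$ absorbs both $B^{h-1}$ and $d_i^h$ via $B^{h-1}d_i^h\le(Bd_i)^h\le c^h$, uniformly in the $N_i$ and the coefficients $a_k$, which is precisely the uniformity the lemma requires. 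The degenerate cases ($h=1$, where the product over $j\ge2$ is empty and the bound reduces to $|\alpha|\ge\prod_i d_i^{-N_i}\ge\prod_i c^{-N_i}$, and $\sum_k|a_k|\ge1$ forced by $\alpha\ne0$ with integer coefficients) are also consistent with your estimate. In short, the proposal is correct and fills in a proof the paper only cites.
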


Since $\alpha_x$ is written in a polynomial form specified by Lemma \ref{lemma-Sto74}, this lemma provides us with an appropriate constant $c\in(0,1)$ satisfying that $\alpha_{x}\geq c^{|x|+1}$ for all $x\in\Sigma^*$ with $\alpha_x\neq0$.
For convenience, we assume that $c<2/3$.

In what follows, we shall construct a 2qfa $N$ that satisfies the desired condition of the corollary. This machine $N$ starts with the initial configuration $\psi_0 = \qubit{p_0,q_0}\qubit{0}$ and then transforms it to  $\psi_1 = \sqrt{c}\qubit{p_0,\bar{q}_0}\qubit{0} + \sqrt{1-c}\qubit{\bar{p}_0}\otimes U_{\delta}^{(x)}\qubit{q_0}\qubit{0}$.
Starting from the first term $\qubit{p_0,\bar{q}_0}\qubit{0}$ in $\psi_1$, for each index $h\in[0,n]_{\integer}$,  $N$ transforms $\qubit{p_0,\bar{q}_0}\qubit{h}$ into $\sqrt{c}\qubit{p_0,\bar{q}_0}\qubit{h+1} + \sqrt{\frac{1-c}{2}}\qubit{p_{acc,0}}\qubit{h} + \sqrt{\frac{1-c}{2}}\qubit{p_{rej,0}}\qubit{h}$. The last two terms respectively correspond to accepting and rejecting states, which are traced out immediately by measurements.
In addition, $N$ modifies $\qubit{p_0,\bar{q}_0}\qubit{n+1}$ to $\qubit{p_{acc,1}}\qubit{0}$.
The second term in $\psi_1$, in contrast, is composed of vectors in $\{\ket{\overline{p}_0}\ket{q}\ket{h}\mid q\in Q,h\in[0,n+1]_{\integer}\}$ and $N$ transforms each vector $\qubit{\bar{p}_0,q}\qubit{h}$ to $\qubit{\bar{p}_0}\otimes U_{\delta}^{(x)}\qubit{q}\qubit{h}$; however, we exchange between accepting states and rejecting states.

The acceptance probability $p_{N,acc}(x)$ of $N$ on $x$ is exactly $\frac{1}{2}(c-c^{n+2})+(1-c)p_{M,rej}(x)$ while the rejection probability $p_{N,rej}(x)$ equals $c^{n+2}+\frac{1}{2}(c-c^{n+2})+(1-c)p_{M,acc}(x)$.
When $x\in L$, since $p_{M,acc}(x)\geq \frac{1}{2}+c^{n+1}$ and $c<2/3$, we obtain
\[
p_{N,rej}(x)\geq (1-c)\left(\frac{1}{2} + c^{n+1}\right) +  \frac{1}{2}(c-c^{n+2}) = \frac{1}{2} + \left(1-\frac{3c}{2}\right)c^{n+1} >\frac{1}{2}.
\]
On the contrary, when $x\notin L$, since $p_{M,rej}(x)\geq \frac{1}{2}$, it follows that
\[
p_{N,acc}(x) \geq \frac{1}{2}(1-c) + \frac{1}{2}(c-c^{n+2}) = \frac{1}{2} + \frac{1}{2}c^{n+2}>\frac{1}{2}.
\]
Therefore, $L$ must belong to $\co\twopqfa_{\algebraic}(\abshalt)$.

(2) We first intend to prove the inclusion (*) $\twocequalqfa_{\algebraic}(\abshalt) \subseteq \co\twopqfa_{\algebraic}(\abshalt)$.
To show (*), let L be any language in $\twocequalqfa_{\algebraic}(\abshalt)$ and take a 2qfa $M=(Q,\Sigma,\delta,q_0,Q_{acc},Q_{rej})$ that
recognizes $L$ with exact cut point $1/2$.

We perform the following procedure on an arbitrary input $x$.
Starting with an initial state $\qubit{q_0}\qubit{q_0}$, we apply to it the operator $\overline{U}^{(x)}_{\overline{\delta}}= U^{(x)}_{\delta}\otimes U^{(x)}_{\delta}$. Accept $x$ if we reach $\qubit{q}\qubit{q'}$ for $(q,q')\in (Q_{acc}\times Q_{acc})\cup (Q_{rej}\times Q_{rej})$ and reject $x$ if $(q,q')\in (Q_{acc}\times Q_{rej})\cup (Q_{rej}\times Q_{acc})$.
Note that $x\in L$ implies $p_{N,acc}(x) = p_{M,acc}(x)^2+p_{M,rej}(x)^2 = 1/2$
and that $x\notin L$ leads to $p_{N,acc}(x) = 2p_{M,acc}(x)p_{M,rej}(x)<1/2$. by the definition of $\twopqfa$, $L$ belongs to $\co\twopqfa_{\algebraic}$.

The inclusion (*) implies $\co\twocequalqfa_{\algebraic}(\abshalt) \subseteq \twopqfa_{\algebraic}(\abshalt)$.  Since $\twopqfa_{\algebraic}(\abshalt)$ is closed under complementation by (1), we conclude from (*) that $\twocequalqfa_{\algebraic}(\abshalt) \subseteq \twopqfa_{\algebraic}(\abshalt)$, and thus the desired result follows.

(3) By Lemma \ref{folke-complex}, we obtain $\twoeqfa_{\complex}(\abshalt)=\twoeqfa_{\real}(\abshalt)$.
Since Theorem  \ref{abs-halt-vs-lin-time} yields the equality $\twoeqfa_{\real}(\abshalt) = \twoeqfa_{\real}[\lintime]$, it suffices to show that $\twoeqfa_{\real}[\lintime] \subseteq  \twoeqfa_{\algebraic\cap\real}(\abshalt)$.
Let $L$ be any language in $\twoeqfa_{\real}[\lintime]$ and let $M=(Q,\Sigma,\delta,q_0,Q_{acc},Q_{rej})$ be an error-free $\real$-amplitude 2qfa that recognizes $L$ in worst-case linear time.

Let $x\in\Sigma^*$ and $N=|Q|(|x|+2)$. As discussed in \cite[Section 6]{ADH97}, we consider \emph{symbolic computation} of $M$ on the input $x$ by replacing each transition amplitude of $M$ with a new variable. Assume that $\{\alpha_1,\alpha_2,\ldots,\alpha_m\}$ is a set of all real transition amplitudes used by $M$. For each amplitude $\alpha_i$, let $z_i$ denote a new variable associated with it. We set $z=(z_1,\ldots,z_m)$.
Corresponding to $U^{(x)}_{\delta}$, we define $U^{(x)}_{\delta}(z)$ to be a matrix obtained from $U^{(x)}_{\delta}$ by replacing each value $\alpha_i$ by $z_i$.  Since $U^{(x)}_{\delta}$ is unitary, we demand that $U^{(x)}_{\delta}(z)$ should be unitary as well.
Write $p_{M,acc}(x,z)$ and $p_{M,rej}(x,z)$ to denote the acceptance probability and the rejection probability produced by applying $U^{(x)}_{\delta}(z)$ in $N$ steps. By the linear time-bound of $M$, both $p_{M,acc}(x,z)$ and $p_{M,rej}(x,z)$ are expressed as polynomials in $z$.

Let $P_{i,j}^{(x)}(z)$ denote the \emph{dot product} of the $i$th and $j$th columns of $U_{\delta}^{x}(z)$.  Since $U^{(x)}_{\delta}(z)$ is required to be unitary, it must hold that $P_{i,i}^{(x)}(z)=1$ for all $i$'s and $P_{i,j}^{(x)}(z)=0$ for all distinct pairs $i,j$. Note that each $P^{(x)}_{i,j}$ is expressed as a certain polynomial in $z$. We then define $P_{M}(z)$ to be a set $\{P_{i,j}^{(x)}(z)\mid x\in\Sigma^*, i,j\in[1,N]_{\integer}, i\neq j \} \cup \{1-P_{i,i}^{(x)}(z)\mid i\in[1,N]_{\integer}, x\in\Sigma^*\}$. Finally, we  consider a set $I = P_{M}(z)\cup \{p_{M,acc}(x,z)\mid i\in\nat,x\in L\} \cup \{p_{M,rej}(x,z)\mid i\in\nat,x\in\overline{L}\}$ of polynomials in $z$.
Since $M$ produces no errors on all inputs, all polynomials in $I$ must have a common zero in $\real^m$.
We then apply the following result taken from \cite[Proposition 6.1]{ADH97}.

\begin{lemma}\label{ADH97-error-free}
Let $I$ be an ideal in $\rational[z_1,z_2,\ldots,z_m]$. If all polynomials in $I$ have a common zero in $\real^m$, then they also have a common zero in $(\algebraic\cap\real)^m$.
\end{lemma}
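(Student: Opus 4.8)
The plan is to deduce this purely algebraic/model-theoretic fact from the Tarski transfer principle for real closed fields. First I would reduce to a finite system: since $\rational[z_1,\ldots,z_m]$ is Noetherian (Hilbert's basis theorem), the ideal $I$ is finitely generated, say $I=(f_1,\ldots,f_k)$ with each $f_i\in\rational[z_1,\ldots,z_m]$, and a point is a common zero of all of $I$ exactly when it is a common zero of $f_1,\ldots,f_k$. Thus the hypothesis says that the existential first-order sentence $\varphi\;\equiv\;\exists z_1\cdots\exists z_m\bigwedge_{i=1}^{k}\bigl(f_i(z_1,\ldots,z_m)=0\bigr)$, whose only parameters are the rational coefficients of the $f_i$, holds in the ordered field $\real$.

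The key algebraic input is that $\algebraic\cap\real$, the field of real algebraic numbers, is itself a real closed field: it is a subfield of $\real$, every nonnegative element has a square root in it, and every odd-degree polynomial with coefficients in it has a root in it (its roots all lie in $\algebraic$, and an odd-degree real polynomial has a real root). By the model completeness of the theory of real closed fields (Tarski), the inclusion $\algebraic\cap\real\hookrightarrow\real$ is an elementary embedding; in particular every first-order sentence with parameters in $\algebraic\cap\real$ — a fortiori with parameters in $\rational$ — that holds in $\real$ also holds in $\algebraic\cap\real$. Applying this to $\varphi$ yields a witness $(a_1,\ldots,a_m)\in(\algebraic\cap\real)^m$ with $f_i(a_1,\ldots,a_m)=0$ for all $i$, which is then a common zero in $(\algebraic\cap\real)^m$ of every polynomial in $I$. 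For readers preferring to avoid model theory, the same conclusion follows from the Tarski--Seidenberg quantifier-elimination procedure by induction on $m$, provided one strengthens the assertion to: every nonempty semialgebraic subset of $\real^m$ definable by (in)equalities with coefficients in $\algebraic\cap\real$ meets $(\algebraic\cap\real)^m$; in the inductive step one projects away $z_m$ (the projection is semialgebraic and definable over $\algebraic\cap\real$), applies the hypothesis in dimension $m-1$, substitutes, and finishes with the one-variable case, where a nonempty semialgebraic subset of $\real$ is a finite union of points and intervals whose endpoints are roots of polynomials over $\algebraic\cap\real$ and hence lie in $\algebraic\cap\real$.

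The main obstacle is exactly the passage through semialgebraic (not merely algebraic) sets: the naive induction that tries to stay inside algebraic varieties fails because the projection of a variety need not be a variety, so the defining equations acquire inequalities. Invoking model completeness of real closed fields — equivalently, quantifier elimination over them — is precisely what absorbs this difficulty, and the only genuinely new verification required is the (routine) check that $\algebraic\cap\real$ is real closed.
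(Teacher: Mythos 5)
Your argument is correct: the reduction to finitely many generators via Hilbert's basis theorem, the observation that $\algebraic\cap\real$ is real closed, and the transfer of the existential sentence by model completeness (or Tarski--Seidenberg) together give exactly the claimed common zero. The paper itself supplies no proof --- it imports the statement verbatim as \cite[Proposition~6.1]{ADH97} --- and your transfer-principle argument is essentially the standard one underlying that cited result, so there is nothing to add.
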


Lemma \ref{ADH97-error-free} guarantees that there should be solutions of all polynomials in $I$ within $(\algebraic\cap\real)^m$. By the definition of $I$, we can replace $P_{M}$ by a certain set of amplitudes in $\algebraic\cap\real$ without changing the outcomes of $M$ on all inputs and without altering the running time of $M$ on all inputs. This modification guarantees that $L$ is a member of $\twoeqfa_{\algebraic\cap\real}(\abshalt)$.
\end{proofof}

Hereafter, we shall discuss how to prove Lemma \ref{claim-Yao98}.
The core of the proof of this lemma is the \emph{Dimension Lemma} (Lemma \ref{Yao98-dimension}), which relates to the eventual behavior of each 2qfa, which performs a series of unitary operations and projective measurements.
This lemma is also an important ingredient in proving Lemma \ref{exp-time-bound} in Section \ref{sec:runtime-2qfa} and we thus need to zero in to the lemma.
To state this lemma, nonetheless, we first need to introduce a few notations.

Let $V=\complex^{N}$ denote an $N$-dimensional Hilbert space and let $U$ be any $N\times N$ unitary matrix over $V$. Moreover, let $W$ indicate a fixed nonempty subspace of $V$ and let $W^{\bot}$ be the dual space of $W$; that is, $V= W\oplus W^{\bot}$. We define $P_{W^{\bot}}$ to be the projection operator onto $W^{\bot}$. Obviously, $P_{W^{\bot}}(W)=\{0\}$ holds because $W\bot W^{\bot}$.  We then consider the operation $U_{W}=_{def} UP_{W^{\bot}}$. For convenience, we set $U^{0}_{W}(w)=w$ and define $U^{i+1}_{W}(w)= U_{W}(U^{i}_{W}(w))$ for any index $i\in\nat$ and any vector $w\in V$. Finally, we define $W_{i}=\{w\in V\mid U^{i+1}_{W}(w)=0\}$ for each $i\in\nat$ and we write  $W_{max}$ for $\bigcup_{i\in\nat}W_i$; in other words, $W_{max}=\{w\in V\mid \exists i\in\nat\;[U^{i+1}_{W}(w)=0]\}$.

\begin{lemma}\label{Yao98-dimension} \hs{1} {\rm [Dimension Lemma]}\hs{1}
There exists a number $d\in[0,N]_{\integer}$ for which $W_{max} = W_{d}$.
\end{lemma}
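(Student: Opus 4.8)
The plan is to exploit the nested structure of the subspaces $W_i$. First I would observe that each $W_i$ is a linear subspace of $V$: indeed $U^{i+1}_W$ is a linear operator (a composition of the linear maps $U$ and $P_{W^\bot}$), so $W_i = \ker(U^{i+1}_W)$ is a subspace. Next I would check the monotonicity $W_0 \subseteq W_1 \subseteq W_2 \subseteq \cdots$. This is the key structural fact: if $w \in W_i$, meaning $U^{i+1}_W(w)=0$, then applying $U_W$ once more gives $U^{i+2}_W(w) = U_W(U^{i+1}_W(w)) = U_W(0) = 0$, so $w \in W_{i+1}$. Hence the $W_i$ form an increasing chain of subspaces of the finite-dimensional space $V = \complex^N$.

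Since $\dim V = N$, the chain $\dim W_0 \le \dim W_1 \le \cdots \le N$ of nonnegative integers is bounded, so it must stabilize: there is a least index $d \in [0,N]_{\integer}$ with $\dim W_d = \dim W_{d+1}$, and because $W_d \subseteq W_{d+1}$, stabilization of dimension forces $W_d = W_{d+1}$. The main point is then to propagate this one-step stabilization to all larger indices, i.e.\ to show $W_d = W_{d+j}$ for every $j \ge 0$ by induction on $j$. The inductive step is where a small amount of care is needed: assuming $W_d = W_{d+j}$, I want $W_{d+j} = W_{d+j+1}$. Given $w \in W_{d+j+1}$, we have $U^{d+j+2}_W(w) = 0$, which says $U_W(w) \in \ker(U^{d+j+1}_W) = W_{d+j}$; wait—more precisely, $U^{d+j+1}_W(U_W(w)) = 0$ means $U_W(w) \in W_{d+j-1}$... let me instead use the cleaner shift argument: $w \in W_{d+j+1}$ iff $U^{d+j+2}_W(w)=0$ iff $U^{d+1}_W\bigl(U^{d+j+1-d}_W(w)\bigr)$... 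The robust way: show by induction that $W_{d+j} = W_{d+j+1}$ directly. If $w \in W_{d+j+1}$ then $U_W^{d+j+2}(w) = U_W^{d+j+1}(U_W(w)) = 0$... Actually the clean statement is: $w\in W_k$ iff $U_W(w) \in W_{k-1}$ for $k\ge 1$. So $W_{d+j+1} = U_W^{-1}(W_{d+j})$ and $W_{d+j} = U_W^{-1}(W_{d+j-1})$, and from $W_{d+j} = W_{d+j-1}$ (the inductive hypothesis, shifted) we get $W_{d+j+1} = W_{d+j}$. The base case $W_{d+1}=W_d$ is what we established above.

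Finally, since $W_{\max} = \bigcup_{i\in\nat} W_i$ and the chain is constant from index $d$ onward, $W_{\max} = W_d$, which is exactly the claim with $d \in [0,N]_{\integer}$. I expect the only genuine subtlety to be phrasing the inductive propagation cleanly — the identity ``$w \in W_k \iff U_W(w) \in W_{k-1}$'' (valid for $k \ge 1$ directly from the definition $U_W^{k+1}(w) = U_W^{k}(U_W(w))$) is the right tool and makes the induction immediate, so this is a short argument with no real obstacle beyond that bookkeeping.
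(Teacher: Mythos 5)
Your proof is correct and follows essentially the same route as the paper: the $W_i$ form an increasing chain of subspaces of $\complex^N$, finite dimension forces stabilization at some $d\le N$, and one-step stabilization propagates to all later indices. Your identity $W_k = U_W^{-1}(W_{k-1})$ is just a compact restatement of the paper's Claims about $K_{i+1}=U^{-1}(W_i)\cap W^{\bot}$ (indeed $U_W^{-1}(W_{k-1})=\mathrm{span}\{K_k,W_{k-1}\}$), so the two arguments coincide in substance.
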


From this lemma, we can derive Lemma \ref{claim-Yao98} easily in the following fashion.

\vs{-2}
\begin{proofof}{Lemma \ref{claim-Yao98}}
Take any $\complex$-amplitude 2qfa $M=(Q,\Sigma,\delta,q_0,Q_{acc},Q_{rej})$ that halts absolutely. Given any index $n\in\nat$ and an arbitrary input $x\in\Sigma^n$, define $CONF_n= Q\times [0,n+1]_{\integer}$, and set $N=|CONF_n|$ and $V= \complex^{N}$. Recall three projection measurements $\Pi_{acc}$, $\Pi_{rej}$, and $\Pi_{non}$.

Take an arbitrary input $x\in\Sigma^*$ of length $n$. Let us consider a time-evolution matrix $U_{\delta}^{(x)}$ induced from $\delta$
and a halting  configuration space $W= span\{\qubit{q}\qubit{h} \mid q\in Q_{acc}\cup Q_{rej}, h\in[0,n+1]_{\integer}\}$ of $M$.  Since $P_{W^{\bot}} = \Pi_{non}$, it follows that  $U_{W}=U_{\delta}^{(x)} \Pi_{non}$.
Moreover, $W_i$ (defined above) can be expressed as $W_i = \{w\in V\mid (U_{\delta}^{(x)}\Pi_{non})^{i+1}(w)=0\}$. Write $w_0$ for $\qubit{q_0}\qubit{0}$. Since all computation paths of $M$ on the input $x$ terminate eventually, $w_0$ belongs to $W_{max}$. Lemma \ref{Yao98-dimension} implies that $W_{max}=W_{d}$ for a certain index $d\in[0,N]_{\integer}$. Thus, it follows that $w_0\in W_{d}\subseteq W_{N}$. This means that all the computation paths terminate within $N+1$ steps, as requested.
\end{proofof}

To close this section, we shall present the proof of Lemma \ref{Yao98-dimension}.

\begin{proofof}{Lemma \ref{Yao98-dimension}}
In this proof, we slightly modify Yao's original proof \cite{Yao98}.
Let $U$ be any $N\times N$ unitary matrix over $V=\complex^{N}$. Recall that the notation $W_i$ expresses $\{w\in V\mid U_{W}^{i+1}(w)=0\}$. First, we observe that $W_0=W$ because $U_{W}(w)=0$  iff $P_{W^{\bot}}(w)=0$ iff $w\in W$.
For each index $i\in\nat$, we set $K_{i+1}$ to be $U^{-1}(W_{i})\cap W^{\bot}$.

\begin{claim}\label{w-k}
For each index $i\in\nat$, $W_{i+1} = \mathrm{span}\{K_{i+1},W_i\}$.
\end{claim}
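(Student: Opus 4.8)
The plan is to prove the two inclusions $W_{i+1} \supseteq \mathrm{span}\{K_{i+1}, W_i\}$ and $W_{i+1} \subseteq \mathrm{span}\{K_{i+1}, W_i\}$ separately, working directly from the definitions $U_W = U P_{W^\bot}$ and $W_j = \{w : U_W^{j+1}(w) = 0\}$. The key observation to set up first is how $U_W$ acts: for any $w \in V$, write $w = w' + w''$ with $w' \in W^\bot$ and $w'' \in W$; then $U_W(w) = U(w')$, so $U_W(w)$ depends only on the $W^\bot$-component of $w$, and in particular $U_W^{i+1}(w) = U_W^i(U(w'))$ when $w' = P_{W^\bot}(w)$.

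For the inclusion $\supseteq$: if $w \in W_i$ then $U_W^{i+1}(w) = 0$, hence $U_W^{i+2}(w) = U_W(0) = 0$, so $w \in W_{i+1}$; thus $W_i \subseteq W_{i+1}$. If $w \in K_{i+1} = U^{-1}(W_i) \cap W^\bot$, then $w \in W^\bot$ gives $P_{W^\bot}(w) = w$, so $U_W(w) = U(w) \in W_i$, whence $U_W^{i+1}(U_W(w)) = 0$, i.e. $U_W^{i+2}(w) = 0$ and $w \in W_{i+1}$. Since $W_{i+1}$ is a linear subspace (it is the kernel of the linear map $U_W^{i+2}$), it contains $\mathrm{span}\{K_{i+1}, W_i\}$. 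For the inclusion $\subseteq$: take $w \in W_{i+1}$, so $U_W^{i+2}(w) = 0$. Decompose $w = w' + w''$ with $w' = P_{W^\bot}(w) \in W^\bot$ and $w'' \in W = W_0 \subseteq W_i$. Then $U_W^{i+2}(w) = U_W^{i+1}(U(w')) = 0$, so $U(w') \in W_i$, i.e. $w' \in U^{-1}(W_i)$; combined with $w' \in W^\bot$ this gives $w' \in K_{i+1}$. Hence $w = w' + w'' \in \mathrm{span}\{K_{i+1}, W_i\}$, completing the argument.

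I expect the main subtlety — more a bookkeeping point than a genuine obstacle — to be keeping the projection behavior straight: one must be careful that $U_W$ annihilates the $W$-component but that $w''$ is nevertheless already in $W_i$ (via $W = W_0 \subseteq W_1 \subseteq \cdots$, using the monotonicity established in the $\supseteq$ part), so that the decomposition $w = w' + w''$ really does exhibit $w$ as a sum of an element of $K_{i+1}$ and an element of $W_i$. Once the action $U_W(w) = U(P_{W^\bot}(w))$ is recorded, both inclusions are short and purely formal.
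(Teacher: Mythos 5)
Your proof is correct and follows essentially the same route as the paper's: both directions rest on the decomposition $w = P_{W^\bot}(w) + (w - P_{W^\bot}(w))$, the identity $U_W^{i+2}(w) = U_W^{i+1}(U(P_{W^\bot}(w)))$, and the containment $W = W_0 \subseteq W_i$. Your version is in fact slightly cleaner than the paper's, which in the forward inclusion writes $y \in W \subseteq W_{i+1}$ where it should read $y \in W \subseteq W_i$.
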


\begin{proof}
($\subseteq$) We want to show that $W_{i+1}\subseteq \mathrm{span}\{K_{i+1},W_i\}$ for any index $i\in\nat$.
Let $w$ be any vector in $W_{i+1}$ and express it as $x+y$ using two appropriate vectors $x\in W^{\bot}$ and $y\in W$. Since $P_{W^{\bot}}(x)=x$, we obtain $U^{i+2}_{W}(x) = U^{i+1}_{W}(UP_{W^{\bot}}(x)) = U^{i+1}_{W}(U(x))$. It thus follows that $U^{i+2}_{W}(w) = U^{i+2}_{W}(x+y) = U_{W}^{i+2}(x)+ U_{W}^{i+2}(y) = U^{i+1}_{W}(U(x))$ because of $U_{W}(y)=0$.  Moreover, $w\in W_{i+1}$ implies $U_W^{i+2}(w)=0$. {}From this result, we conclude that $U^{i+1}_{W}(U(x))=0$. This implies $U(x)\in W_{i}$; in other words, $x\in U^{-1}(W_i)$. Since $x\in W^{\bot}$, $x$ must belong to $(U^{-1}(W_i))\cap W^{\bot}$, which equals $K_{i+1}$. From the facts $w=x+y$ and $y\in W\subseteq W_{i+1}$, it follows that $w$ is in $\mathrm{span}\{K_{i+1},W_{i+1}\}$.

($\supseteq$) Next, we wish to prove that $\mathrm{span}\{K_{i+1},W_i\} \subseteq W_{i+1}$ for any index $i\in\nat$.
Let $w$ be of the form $x+y$ for certain vectors $x\in K_{i+1}$ and $y\in W_i$. Since $y\in W_i$, we obtain $U^{i+1}_{W}(y)=0$. By the definition of $K_{i+1}$, $x$ belongs to both $W^{\bot}$ and $U^{-1}(W_i)$. For simplicity, we set $z=U(x)$. Since $x\in U^{-1}(W_i)$, $z$ must be in $W_i$. From this follows $U^{i+1}_{W}(z)=0$, which implies $U^{i+1}_{W}(U(x))=0$. Note that $P_{W^{\bot}}(x)=x$ because $x\in W^{\bot}$. It thus follows that $U^{i+2}_{W}(x) = U^{i+1}_{W}(UP_{W^{\bot}}(x)) = U^{i+1}_{W}(U(x)) =0$. Therefore, we obtain $U^{i+2}_{W}(w) = U^{i+2}_{W}(x+y) = U^{i+2}_{W}(x) + U^{i+2}_{W}(y) = 0$, which obviously indicates that $w\in W_{i+1}$ by the definition of $W_{i+1}$.
\end{proof}

We note by the definition of $W_i$ that the inclusion $W_{i}\subseteq W_{i+1}$ holds for every index $i\in\nat$. Claim \ref{w-k} therefore yields the following equivalence relation.

\begin{claim}\label{inclusion}
For any number $i\in\nat$, $K_{i+1}\subseteq W_i$ iff $W_{i+1} = W_i$.
\end{claim}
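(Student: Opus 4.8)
The plan is to obtain Claim~\ref{inclusion} as an immediate corollary of Claim~\ref{w-k} and the monotonicity $W_i\subseteq W_{i+1}$ already recorded just above it; no genuinely new argument is required, which is precisely why it is stated as a claim rather than a lemma.

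For the direction ``$K_{i+1}\subseteq W_i\Rightarrow W_{i+1}=W_i$'', I would argue that, since $W_i$ is a linear subspace of $V$ and it contains $K_{i+1}$ by hypothesis, it also contains the linear span $\mathrm{span}\{K_{i+1},W_i\}$. By Claim~\ref{w-k} this span is exactly $W_{i+1}$, so $W_{i+1}\subseteq W_i$; combining this with $W_i\subseteq W_{i+1}$ yields $W_{i+1}=W_i$. For the converse ``$W_{i+1}=W_i\Rightarrow K_{i+1}\subseteq W_i$'', note trivially that $K_{i+1}\subseteq\mathrm{span}\{K_{i+1},W_i\}$, which equals $W_{i+1}$ by Claim~\ref{w-k} and hence equals $W_i$ by the hypothesis; therefore $K_{i+1}\subseteq W_i$.

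The only thing to be slightly careful about is reading ``$\mathrm{span}$'' in Claim~\ref{w-k} as the linear span, so that containment of the generating set $K_{i+1}$ in the subspace $W_i$ forces containment of the whole span; this is immediate from the definition, so I do not anticipate any real obstacle. I would then remark that this equivalence is the engine behind Lemma~\ref{Yao98-dimension}: it converts the recursive description $W_{i+1}=\mathrm{span}\{K_{i+1},W_i\}$ into a stabilization test which, applied along the increasing chain $W_0\subseteq W_1\subseteq\cdots$ inside the finite-dimensional space $V=\complex^{N}$, must trigger at some index $d\le N$ and then propagate upward (since once $W_{i+1}=W_i$ one also gets $K_{i+2}=U^{-1}(W_{i+1})\cap W^{\bot}=U^{-1}(W_i)\cap W^{\bot}=K_{i+1}\subseteq W_i$, whence $W_{i+2}=W_{i+1}$), giving $W_{max}=W_d$.
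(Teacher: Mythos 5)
Your proof is correct and follows essentially the same route as the paper's: both directions reduce to Claim~\ref{w-k}, with the forward direction using that the subspace $W_i$ (a kernel of the linear map $U_W^{i+1}$) absorbs the span once it contains $K_{i+1}$, and the converse being the trivial containment $K_{i+1}\subseteq\mathrm{span}\{K_{i+1},W_i\}=W_{i+1}=W_i$. The paper's version is just slightly terser, asserting $\mathrm{span}\{K_{i+1},W_i\}=W_i$ outright rather than via the monotonicity $W_i\subseteq W_{i+1}$.
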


\begin{proof}
If $K_{i+1}\subseteq W_i$, then $\mathrm{span}\{K_{i+1},W_i\} = W_i$.
Claim \ref{w-k} thus implies that $W_{i+1} = \mathrm{span}\{K_{i+1},W_i\} = W_i$. Conversely, if $W_{i+1} = W_i$, then we use Claim \ref{w-k} and obtain  $K_{i+1}\subseteq \mathrm{span}\{K_{i+1},W_i\} = W_{i+1} = W_i$.
\end{proof}

Let $d$ denote the minimal natural number $i$ satisfying $W_i=W_{i+1}$. Such a number exists because $W$ is a finite-dimensional space.

\begin{claim}\label{w-max}
Let $i$ be any number in $\nat$.
If $i<d$, then $W_{i}\subsetneq W_{i+1}$; otherwise, $W_i = W_{i+1}$.
\end{claim}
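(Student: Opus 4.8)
The plan is to split Claim \ref{w-max} into its two halves and handle them separately. The first half --- that $W_i \subsetneq W_{i+1}$ whenever $i < d$ --- is immediate: by the choice of $d$ as the \emph{minimal} index with $W_d = W_{d+1}$, every $i < d$ satisfies $W_i \neq W_{i+1}$, and since the inclusion $W_i \subseteq W_{i+1}$ was already recorded to hold for all $i \in \nat$, the proper containment follows at once.

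For the second half I would prove, by induction on $i \geq d$, that $W_i = W_{i+1}$. The base case $i = d$ is precisely the defining property of $d$. For the inductive step, assume $W_i = W_{i+1}$ for some $i \geq d$; I must derive $W_{i+1} = W_{i+2}$. The key observation is that $K_{j+1} = U^{-1}(W_j) \cap W^{\bot}$ depends on $W_j$ alone, so the inductive hypothesis gives $K_{i+2} = U^{-1}(W_{i+1}) \cap W^{\bot} = U^{-1}(W_i) \cap W^{\bot} = K_{i+1}$. Applying Claim \ref{inclusion} at index $i$, the equality $W_{i+1} = W_i$ yields $K_{i+1} \subseteq W_i = W_{i+1}$; hence $K_{i+2} = K_{i+1} \subseteq W_{i+1}$, and Claim \ref{inclusion} at index $i+1$ then delivers $W_{i+2} = W_{i+1}$, closing the induction. (Equivalently, one may argue by minimal counterexample: if $j \geq d$ were least with $W_j \subsetneq W_{j+1}$, then $j > d$, so $W_{j-1} = W_j$, whence $K_{j+1} = K_j \subseteq W_{j-1} = W_j$ and Claim \ref{inclusion} forces $W_{j+1} = W_j$, a contradiction.)

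The only genuine obstacle here is bookkeeping: keeping the index shift between $K_{i+1}$ and $W_i$ straight and invoking Claim \ref{inclusion} at the correct index; no idea beyond Claims \ref{w-k} and \ref{inclusion} is required. I would close by recording the payoff for the Dimension Lemma itself: Claim \ref{w-max} exhibits a strictly increasing chain $W_0 \subsetneq W_1 \subsetneq \cdots \subsetneq W_d$ of subspaces of the $N$-dimensional space $V$, so $\dim W_i \geq i$ for $i \leq d$ and hence $d \leq N$, while the stabilization $W_i = W_{i+1}$ for all $i \geq d$ gives $W_{max} = \bigcup_{i \in \nat} W_i = W_d$ with $d \in [0,N]_{\integer}$, which is exactly the assertion of Lemma \ref{Yao98-dimension}.
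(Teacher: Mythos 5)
Your proof is correct and follows essentially the same route as the paper: the first half from the minimality of $d$ together with the standing inclusion $W_i\subseteq W_{i+1}$, and the second half by induction on $i\geq d$ using the identity $K_{i+2}=U^{-1}(W_{i+1})\cap W^{\bot}=U^{-1}(W_i)\cap W^{\bot}=K_{i+1}$ and Claim \ref{inclusion} at the appropriate indices. The minimal-counterexample variant and the closing remark on deducing Lemma \ref{Yao98-dimension} are harmless additions, but the core argument matches the paper's.
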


\begin{proof}
Clearly, $W_i\subseteq W_{i+1}$ holds for any index $i\in\nat$. The first part of the claim is trivial because $d$ is the minimal number satisfying $W_i=W_{i+1}$. The second part of the claim can be proven by induction on $i\geq d$. The basis case $W_{d}=W_{d+1}$ is true because of the definition of $d$. Take any index $i>d$ and assume that $W_{i}=W_{i+1}$ holds. This assumption is equivalent to $K_{i+1}\subseteq W_i$ by Claim \ref{inclusion}.
We want to verify that $K_{i+2}\subseteq W_{i+1}$. For this assertion, from $W_{i}=W_{i+1}$, we derive $K_{i+2} = U^{-1}(W_{i+1})\cap W^{\bot} = U^{-1}(W_i)\cap W^{\bot} = K_{i+1}$. Since $K_{i+1}\subseteq W_{i}$ by the induction hypothesis,  we immediately obtain $K_{i+2}\subseteq W_{i}$, which implies $W_{i+1}=W_{i+2}$ by Claim \ref{inclusion}.
\end{proof}

Claim \ref{w-max} implies that $W_d=W_i$ for any index $i\geq d$. Hence, we obtain $W_d=W_{max}$. How large is this $d$? Note that $dim(W_0)\geq1$ since $W=W_{0}$ and $W$ is nonempty. It thus follows by Claim \ref{w-max} that $dim(W_i)<dim(W_{i+1})$ for any $i<d$ and $dim(W_i)=dim(W_d)$ for any $i\geq d$. Therefore, we conclude that $d\leq dim(W_d)\leq N$.

This completes the proof of Lemma \ref{Yao98-dimension}.
\end{proofof}

\subsection{Running-Time Bounds of QFAs}\label{sec:runtime-2qfa}

We have given in Section \ref{sec:absolute-halt} a linear upper bound of the running time of  absolutely-halting 2qfa's. In general, not all computation paths of bounded-error 2qfa's may terminate. Even though, we can claim that it is sufficient to focus  only on their computation paths that actually terminate {\em within   exponential time} and to ignore all the other computation paths in order to recognize languages with bounded-error probability.

To state this claim formally, we need to define a restricted form of 2qfa's. Here, we shall treat any computation path that does not enter a halting state within $t(n)$ steps as ``unhalting'' and such a computation path is conveniently categorized as neither accepting nor rejecting. More precisely,
a {\em $t(n)$ time-bounded} 2qfa $M$ is a variant of 2qfa that satisfies the following condition: we force $M$ to stop applying its transition matrix after exactly $t(n)$ steps (unless it halts earlier) and, after this point, we ignore any computation step taken along any computation path by viewing such a computation path as ``unhalting.''

\begin{theorem}\label{complete-halt-time-bound}
Any language in $\twobqfa_{\algebraic}$ can be  recognized by a certain $2^{O(n)}$ time-bounded 2qfa with bounded-error probability.
\end{theorem}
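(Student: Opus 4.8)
Let $L \in \twobqfa_{\algebraic}$ be recognized by an $\algebraic$-amplitude 2qfa $M = (Q,\Sigma,\delta,q_0,Q_{acc},Q_{rej})$ with error bound $\varepsilon < 1/2$. Fix $\varepsilon' \in (\varepsilon, 1/2)$ and set $\mu = (1/2 - \varepsilon') > 0$. The goal is to show that there is a function $t(n) = 2^{O(n)}$ such that, if we simply cut off the computation of $M$ at step $t(n)$ and declare every still-running path ``unhalting,'' the resulting $t(n)$-time-bounded 2qfa still separates $L$ from $\overline L$ with bounded error (error bound $\varepsilon'$, say). Equivalently, I must bound the ``tail mass'' of the computation, namely $1 - p_{M,acc}^{\le t}(x) - p_{M,rej}^{\le t}(x)$ where the superscript denotes probability accumulated in the first $t$ steps, by something like $\mu/2$ for all $x$ of length $n$, whenever $t \ge t(|x|)$. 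If I can do that, then for $x \in L$ we still collect acceptance probability $\ge (1-\varepsilon) - \mu/2 \ge 1 - \varepsilon'$ within $t(n)$ steps, and symmetrically for $x \notin L$; the $t(n)$-time-bounded machine is then exactly the device the theorem asks for.

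**The heart of the argument is a quantitative decay estimate for the surviving amplitude.** Work in the configuration space $V = \complex^N$ with $N = |Q|(n+2)$, the unitary $U = U_\delta^{(x)}$, the halting subspace $W = \mathrm{span}\{\ket{q}\ket{h} : q \in Q_{acc}\cup Q_{rej}\}$, and the ``survival'' operator $U_W = U\Pi_{non}$ (in the notation of the Dimension Lemma, $P_{W^\bot} = \Pi_{non}$). The quantity $1 - p_{M,acc}^{\le t}(x) - p_{M,rej}^{\le t}(x)$ is precisely $\|U_W^{\,t}\,w_0\|^2$ with $w_0 = \ket{q_0}\ket 0$. So the theorem reduces to: $\|U_W^{\,t}\| \to 0$ as $t \to \infty$ with a rate I can control as a function of $N$ and of the bit-complexity of the algebraic amplitudes. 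Here is where the Dimension Lemma enters, but applied to the \emph{orthogonal complement} $W_{max}^{\bot}$ rather than to a terminating computation. Decompose $V = W_{max} \oplus W_{max}^{\bot}$. On $W_{max}$ the operator $U_W$ is eventually nilpotent (by Lemma~\ref{Yao98-dimension}, $U_W^{N}$ kills it outright). On $W_{max}^{\bot}$, I claim $U_W$ restricts to an operator of norm strictly less than $1$: if some unit vector $v \in W_{max}^\bot$ had $\|U_W v\| = 1$, then $\Pi_{non} v = v$ (so $v \in W^\bot$) and $U_W v$ is again a unit vector; iterating, $\|U_W^k v\| = 1$ for all $k$, and since $M$ recognizes a language with bounded error its non-halting probability is at most $\varepsilon < 1/2 < 1$ — wait, that gives $\le \varepsilon$ only in the limit, not $<1$ at each step, so I instead argue as follows. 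The subspace $\mathcal{N} = \{v : \|U_W^k v\| = \|v\| \text{ for all } k\}$ is the maximal subspace on which $U_W$ acts isometrically; it is $U_W$-invariant, and on it $U_W$ is unitary, hence its elements never decay and never halt, so $w_0$'s projection onto $\mathcal{N}$ must be $0$ (else $M$ would have non-halting probability bounded away from $0$ uniformly — but actually bounded-error only forbids non-halting probability $\ge 1/2$, which still forces the $\mathcal N$-component of $w_0$ to have squared norm $< 1/2$, and that is already enough for a weaker-but-sufficient conclusion; to get genuine exponential decay of the \emph{whole} tail I decompose off $\mathcal N$ and show $U_W$ on $V \ominus \mathcal N$ has spectral radius, indeed norm after finitely many steps, bounded by some $\rho < 1$). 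Concretely: let $\rho = \max\{\|U_W^{N} u\| : u \in V\ominus\mathcal N,\ \|u\|=1\} < 1$; then $\|U_W^{\,mN} w_0'\| \le \rho^{m}\|w_0'\|$ where $w_0'$ is the component of $w_0$ off $\mathcal N$, which decays geometrically.

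**The main obstacle — and the reason the hypothesis $K = \algebraic$ matters — is getting $1 - \rho$ (or the relevant spectral gap) bounded below by an explicitly computable quantity, so that $t(n)$ comes out $2^{O(n)}$ rather than merely finite.** The slick route is: $\rho < 1$ is equivalent to $I - (U_W^N)^\dagger U_W^N$ being positive definite on $V\ominus\mathcal N$; its smallest nonzero eigenvalue is $1 - \rho^2$, and this eigenvalue is a root of the characteristic polynomial of a matrix whose entries are polynomials (of degree $\le 2N$) in the algebraic amplitudes $\alpha_1,\dots,\alpha_e$ and their conjugates, with integer coefficients of size $2^{O(N)}$. Applying the Stolarsky-type lower bound from Lemma~\ref{lemma-Sto74} (exactly as in the proof of Corollary~\ref{amplitude-reduction}) to this eigenvalue yields $1 - \rho^2 \ge c^{O(N)} = c^{O(n)}$ for a fixed constant $c \in (0,1)$ depending only on the field $\rational(\alpha_1,\dots,\alpha_e)$. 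Then $\rho \le 1 - c^{O(n)}/2$, so $\rho^m \le \mu/2$ as soon as $m \ge c^{-O(n)}\log(2/\mu) = 2^{O(n)}$, giving $t(n) = mN = 2^{O(n)}$. The bookkeeping to make ``polynomials in the amplitudes with $2^{O(N)}$-size integer coefficients'' precise — tracking how characteristic-polynomial coefficients blow up, and confirming that the relevant eigenvalue is genuinely nonzero on the right subspace — is the part that needs care; everything else is either the Dimension Lemma or a routine $\varepsilon$-accounting. Finally I assemble: the $t(n)$-time-bounded 2qfa that runs $M$ for $t(n) = 2^{O(n)}$ steps and ignores everything after accepts $L$ with error $\le \varepsilon' < 1/2$, which is the claimed bounded-error $2^{O(n)}$-time-bounded recognition.
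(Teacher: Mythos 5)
Your proposal follows essentially the same route as the paper's proof of Lemma \ref{exp-time-bound} (from which Theorem \ref{complete-halt-time-bound} is derived): split the configuration space via the Dimension Lemma into an eventually-halting part, a never-halting isometric part, and a geometrically decaying part, then invoke the Stolarsky-type bound (Lemma \ref{lemma-Sto74}) on a nonzero algebraic quantity measuring the spectral gap to turn ``decay rate $<1$'' into ``decay rate $\leq 1-c^{-O(n)}$,'' whence $2^{O(n)}$ steps suffice. Your formulation via the maximal isometric subspace $\mathcal{N}$ and the operator norm of $U_W^{N}$ on $\mathcal{N}^{\bot}$ is a minor (and arguably cleaner) variant of the paper's eigenvalue decomposition of the block matrix $A$ into the stationary space $D_{sta}$ and the undetermined space $D_{und}$, and the bookkeeping you defer --- expressing the gap in the polynomial form required by Lemma \ref{lemma-Sto74} --- is treated at the same level of detail in the paper itself.
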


Unfortunately, we cannot expand the scope of Theorem \ref{complete-halt-time-bound} to $\complex$-amplitude 2qfa's or unbounded-error 2qfa's because the theorem is derived from the following lemma, which heavily relies on Lemma \ref{lemma-Sto74}.

\begin{lemma}\label{exp-time-bound}
Let $M$ be any $\algebraic$-amplitude 2qfa with a set $Q$ of inner states with error probability at most $\varepsilon$, where $\varepsilon\in[0,1/2]$. Let $\varepsilon'=(1-2\varepsilon)/4$.
There exist a constant $c>0$ and a $c^{|Q|(n+2)}$ time-bounded 2qfa $N$  that satisfy the following: for any input $x$, (i) $M$ accepts (resp., rejects) $x$ with probability at least $1-\varepsilon$ iff $N$ accepts (resp., rejects) $x$ with probability at least $1-\varepsilon'$.
\end{lemma}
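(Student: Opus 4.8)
The reverse implication of (i) is essentially free. If $N$ is built by running $M$ as a $T$ time-bounded 2qfa, then $N$ merely discards probability mass, so $p_{N,acc}(x)\le p_{M,acc}(x)$ and $p_{N,rej}(x)\le p_{M,rej}(x)$ for every $T$; since $M$ has error at most $\varepsilon$ and $1-\varepsilon'>\varepsilon$, any $x$ with $p_{N,acc}(x)\ge 1-\varepsilon'$ must lie in the language recognized by $M$, whence $p_{M,acc}(x)\ge 1-\varepsilon$ (symmetrically for rejection). So the real content is the forward implication, and for it it is enough to choose $T=T(n)$ so that the probability mass $M$ has not yet committed to a halting state by step $T$, over and above its limiting non-halting probability $p_{M,non}(x)$, stays below a constant $\varepsilon''=\varepsilon''(\varepsilon,\varepsilon')>0$: then $p_{N,acc}(x)>p_{M,acc}(x)-\varepsilon''$ and $p_{N,rej}(x)>p_{M,rej}(x)-\varepsilon''$, and taking $\varepsilon''$ small compared with the slack between $1-\varepsilon'$ and $M$'s guarantees finishes (i).

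Fix $x$ with $|x|=n$ and put $N_0=|Q|(n+2)$, the dimension of the configuration space, $A=\Pi_{non}U_{\delta}^{(x)}$ (a contraction on $V=\complex^{N_0}$), $\ket{\phi_t}=A^t\ket{q_0}\ket{0}$, and $a_t=\langle\phi_t|\phi_t\rangle$. Since $a_{t-1}-a_t=p_{M,acc,t}(x)+p_{M,rej,t}(x)\ge 0$, the sequence $(a_t)_t$ is nonincreasing with a limit $p_{M,non}(x)$, and a telescoping sum gives $\sum_{i>T}p_{M,acc,i}(x)\le a_T-p_{M,non}(x)$ (likewise for rejection), so everything reduces to bounding $a_T-p_{M,non}(x)$. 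For this I invoke the structure of the finite-dimensional contraction $A$ — the operator analyzed in the Dimension Lemma, the decomposition being obtained by the same iterated-subspace analysis: $V=H_u\oplus H_u^{\bot}$ with both summands $A$-invariant, $A|_{H_u}$ unitary, and $A|_{H_u^{\bot}}$ of spectral radius $\rho_x<1$. Writing $\ket{\phi_1}=u+w$ with $u\in H_u$ and $w\in H_u^{\bot}$, invariance plus orthogonality give $a_t=\|u\|^2+\|A^{t-1}w\|^2$ for $t\ge 1$; hence $p_{M,non}(x)=\|u\|^2$ and $a_t-p_{M,non}(x)=\|A^{t-1}w\|^2$, which decays like $\rho_x^{t}$ up to a factor polynomial in $t$ of degree at most $N_0$ (the Jordan structure of $A|_{H_u^{\bot}}$, with coefficients controlled since $\|A\|\le 1$).

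The crux, and what I expect to be the main obstacle, is an \emph{effective} spectral-gap estimate: a constant $c_0>1$, depending only on the finite amplitude set of $M$ and not on $x$, with $\rho_x\le 1-c_0^{-N_0}$. Each eigenvalue $\lambda$ of $A$ with $|\lambda|<1$ is a root of $\det(zI-A)$, a degree-$N_0$ polynomial in $z$ whose coefficients are integer polynomials of bounded degree in $M$'s amplitudes with all exponents at most $N_0$; passing from $\lambda$ to $|\lambda|^2=\lambda\overline{\lambda}$ (say via a resultant) keeps one inside the field generated by those amplitudes together with their complex conjugates, with degree and height still polynomially bounded in $N_0$. As $|\lambda|^2\ne 1$, the Liouville-type estimate of Lemma~\ref{lemma-Sto74} bounds $|1-|\lambda|^2|$ from below by a quantity of the shape $c_0^{-O(N_0)}$, which is the desired gap. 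This is exactly where linearity in $n$ must be preserved: the relevant algebraic degree is governed by $N_0=|Q|(n+2)$, so the final time bound is $c^{|Q|(n+2)}=2^{O(n)}$.

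Given the gap, $\|A^{t-1}w\|^2\le t^{O(N_0)}\rho_x^{t}\le t^{O(N_0)}(1-c_0^{-N_0})^{t}$, which drops below any prescribed constant $\varepsilon''$ once $t=T:=c^{N_0}$ with $c$ chosen large enough in terms of $c_0$ — the rapid decay overwhelms the polynomial-in-$t$ factor for all large $n$, and the finitely many small $n$ are handled by enlarging $c$. Taking $N$ to be $M$ run as a $c^{|Q|(n+2)}$ time-bounded 2qfa, the bookkeeping of the first paragraph yields (i). Here $N$ genuinely leaves mass unhalted — it must, since a $2^{O(n)}$-time absolutely halting 2qfa would contradict Theorem~\ref{abs-halt-vs-lin-time} — which is precisely why $p_{N,acc}$ and $p_{N,rej}$ are the truncated sums used above.
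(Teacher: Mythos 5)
Your overall strategy is the same as the paper's: truncate $M$ at time $T=c^{|Q|(n+2)}$, split the evolution operator into a norm-preserving part and a strictly contracting part, and force the in-flight mass below a constant by bounding the contracting part's spectral radius away from $1$ via the algebraic-separation bound of Lemma \ref{lemma-Sto74}. Your operator-theoretic setup is in fact cleaner than the paper's: you work directly with the contraction $A=\Pi_{non}U_{\delta}^{(x)}$ and the unitary/completely-non-unitary splitting $V=H_u\oplus H_u^{\bot}$, and you correctly insert a $t^{O(N_0)}$ Jordan factor, whereas the paper passes to $(U_{\delta}^{(x)}\Pi_{non})^{d'+1}$, restricts to $W_{max}^{\bot}$, and then asserts that invertibility of the resulting block makes it unitarily diagonalizable, which is unjustified. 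Your telescoping bookkeeping $\sum_{i>T}p_{M,acc,i}(x)\le a_T-p_{M,non}(x)=\|A^{T-1}w\|^2$ is also correct.

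The genuine gap sits exactly where you locate the crux, and your sketch of how to cross it does not work as written. Lemma \ref{lemma-Sto74} applies only to numbers of the form $\sum_{k}a_{k}\prod_{i}\alpha_i^{k_i}$ with $a_k\in\integer$ over a \emph{fixed} generating set $\{\alpha_1,\ldots,\alpha_e\}$. An eigenvalue $\lambda$ of $A$ is a root of $\det(zI-A)$ but is \emph{not} an element of the ring generated by the amplitudes, and neither is $|\lambda|^2$: the resultant yields a polynomial \emph{having} $\lambda\overline{\lambda}$ as a root, with coefficients in that ring, which is not the same thing; and if you instead adjoin $\lambda$ to the generators, the field degree $h$ in Lemma \ref{lemma-Sto74} grows with $n$ and the bound degrades. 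The standard repair is to apply Lemma \ref{lemma-Sto74} to the lowest-order nonzero coefficient of the shifted resultant polynomial and then lower-bound its root $1-|\lambda|^2$ by that coefficient divided by the product of the remaining roots (each of modulus at most $2$); even then the exponents occurring in those coefficients are quadratic in $N_0$, so recovering the stated $c^{|Q|(n+2)}$ rather than $c^{O(n^2)}$ needs extra care. (To be fair, the paper's own treatment of this step, which declares $1-|\lambda_{max}|$ to be "in a polynomial form" by reasoning about amplitudes of configurations rather than eigenvalues, is at least as incomplete.) A second, smaller issue: your reduction of (i) to "take $\varepsilon''$ small compared with the slack between $1-\varepsilon'$ and $M$'s guarantee" tacitly assumes $\varepsilon'>\varepsilon$, which fails for $\varepsilon>1/6$; truncation can only deliver $p_{N,acc}(x)\ge 1-\varepsilon-\varepsilon'=\tfrac12+\varepsilon'$, so the threshold $1-\varepsilon'$ in (i) is unreachable for such $\varepsilon$ — this looks like a defect of the statement itself, which the paper's proof likewise never verifies.
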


Here, let us derive Theorem \ref{complete-halt-time-bound} from Lemma \ref{exp-time-bound}.

\vs{-2}
\begin{proofof}{Theorem \ref{complete-halt-time-bound}}
Let us consider a language $L$ in $\twobqfa_{\algebraic}$ and take an $\algebraic$-amplitude 2qfa $M$ that recognizes $L$ with error probability at most $\varepsilon\in[0,1/2)$. Lemma \ref{exp-time-bound} provides a constant $c>0$ and another 2qfa, say, $N$ such that (i) $N$ is $c^{|Q|(n+2)}$ time-bounded, where $Q$ is a set of $M$'s inner states, and (ii) $p_{M,e}(x)\geq 1-\varepsilon$ iff $p_{N,e}(x)\geq 1-\varepsilon'$, for each type $e\in\{acc,rej\}$ and for every input $x$. This implies that, since $\varepsilon'\in[0,1/2)$, $L$ can be recognized by $N$ with bounded-error probability.
\end{proofof}

To prove Theorem \ref{complete-halt-time-bound}, we need to verify the correctness of Lemma \ref{exp-time-bound}. In the following proof of the lemma, we shall stick to the same terminology introduced in Section \ref{sec:absolute-halt}.
An underlying idea of the proof is to show how to estimate the running time of a given 2qfa by evaluating  \emph{eigenvalues} of its time-evolution matrix.

\vs{-2}
\begin{proofof}{Lemma \ref{exp-time-bound}}
First, take any 2qfa $M=(Q,\Sigma,\delta,q_0,Q_{acc},Q_{rej})$ with $\algebraic$-amplitudes with error probability at most $\varepsilon\in[0,1/2]$.
Fix $n$, the length of inputs, and let $N=|Q|(n+2)$, the total number of configurations of $M$ on inputs of length $n$.
For simplicity, let $V=\complex^{N}$ be the configuration space of $M$ on inputs of length $n$.
Hereafter, we arbitrarily fix $x$ in $\Sigma^n$ and write $U$ for a unique transition matrix $U_{\delta}^{(x)}$ that dictates a single move of $M$ on the input $x$. Let us recall three notations $W_{acc}$, $W_{rej}$, and $W_{non}$ from Section \ref{sec:quantum-finite-automata}.
By setting  $W=W_{acc}\oplus W_{rej}$ and $W^{\bot}=W_{non}$, we obtain $U_{W}$, $W_{i}$, and $W_{max}$ as in Section \ref{sec:absolute-halt}.
We assume that the initial inner state $q_0$ is a \emph{non-halting state}, because, if $q_0$ is a halting state, the lemma is trivially true because $M$ is already $1$ time-bounded. In what follows, we assume that the initial superposition of $M$ is in $W^{\bot}$. By Lemma \ref{Yao98-dimension}, there exists a number $d'\in[0,N]_{\integer}$ such that $W_{max}=W_{d'}$; in other words, any element $v\in W_{max}$ is mapped into $W$ within $d'+1$ steps.
For simplicity, we set $\tilde{U}_{W} = U_{W}^{d'+1}$.

Here, we assume that $\dim(W_{max})<N$ and let $m$ denote the dimension of $W_{max}^{\bot}$. Without loss of generality, we assume that any vector $v$ in $W_{max}^{\bot}$ can be expressed as an $N$ dimensional column vector of the form $v=(w,0,\ldots,0)^{T}$, where $w$ has $m$ entries of the form  $(w_1,w_2,\ldots,w_m)$. This assumption helps us express any superposition of configurations of $M$ as a vector $v$ of the form $(w_1,w_2,\ldots,w_{N})^T$ in $V$, where the last $N-m$ entries ``correspond'' to $W_{max}$.
Hence, $\tilde{U}_{W}$ is written as
$
\tilde{U}_{W} =
\begin{pmatrix}
A & O \\
B & O
\end{pmatrix},
$
where $A$ is an $m\times m$ matrix and $B$ is an $(N-m)\times m$ matrix (namely, $A$ is a linear map from $\complex^{m}$ to $\complex^{m}$ and  $B$ is a linear map from $\complex^{m}$ to $\complex^{N-m}$).

Given a vector $v=(w,0,\ldots,0)^{T}\in W_{max}^{\bot}$, we obtain $\tilde{U}_{W}(v) = (Aw,Bw)^{T}$. For each index $k\in\nat^{+}$, it follows that  $\tilde{U}^{k}_{W}(v) = (A^kw,BA^{k}w)^{T}$. Note that $(A^kw,0,\ldots,0)^T\in W_{max}^{\bot}$ and  $(0,\ldots,0,BA^kw)^{T}\in W_{max}$. Thus, $\tilde{U}_{W}$ must map $(0,\cdots,0,BA^kw)^T$ into $W$.
In other words, $(0,\ldots,0,BA^kw)^T$ is mapped by $M$ into $W$ within
$N+1$ steps.

Next, we shall argue that $A$ is diagonalizable in $\complex$.
Let $nullity(A)$ denote the dimension of the null space $Null(A)=\{w\in\complex^{m}\mid (w,0,\ldots,0)^T\in W_{max}^{\bot},  Aw=0\}$.
For our purpose, we intend to verify the equality $nullity(A)=0$, which is essentially equivalent to $Null(A)=\{0\}$ by way of contradiction. Toward a contradiction, assume that there is a non-zero element $w$ in $Null(A)$.
For the vector $v=(w,0,\ldots,0)^T$, since $Aw=0$, we obtain $\tilde{U}_{W}(v)=(0,\ldots,0,Bw)\in W_{max}$, which implies that $v\in W_{max}$, a contradiction against $v\in W_{max}^{\bot}$.
Therefore, we conclude that $Null(A)=\{0\}$. Since $rank(A) + nullity(A) = dim(\complex^{m}) =m$, the rank of $A$ equals $m$. This means that $A$ has its inverse $A^{-1}$ and, consequently, $A$ is diagonalizable in $\complex$.

Let $\{\lambda_1,\ldots,\lambda_m\}$ denote a set of all eigenvalues of $A$ and let $\{v_1,\ldots,v_m\}$ be a set of their associated \emph{unit-length}  eigenvectors (i.e., $\|v_i\|=1$ for any index $i\in[m]$). For convenience, we assume that those eigenvalues are sorted in increasing order according to their absolute values. Take the maximal index $i_0$ such that $|\lambda_i|<1$ holds for all $i\leq i_0$ and $|\lambda_i|=1$ for all $i>i_0$. Since $A$ is diagonalizable in $\complex$, find an appropriate unitary matrix $P$ satisfying
\[
A = P^{\dagger}
\begin{pmatrix}
\lambda_1 & & & \\
 & \lambda_2 &  & O\\
 O & & \ddots & \\
 & & & \lambda_m
 \end{pmatrix}
 P.
\]
Let the \emph{undetermined space}  $D_{und}$ be $\mathrm{span}\{v_1,v_2,\ldots,v_{i_0}\}$ and let the \emph{stationary space}  $D_{sta}$ be $\mathrm{span}\{v_{i_0+1},v_{i_0+2},\ldots,v_{m}\}$. Obviously, $\complex^{m} = D_{und}\oplus D_{sta}$ holds. Note that  if $w\in D_{sta}$ then $\|Aw\| = \|w\|$, implying $Bw=0$. This means that, once $w$ falls into $D_{sta}$, $\tilde{U}_W((w,0,\ldots,0)^T)$ is also in $D_{sta}\otimes\{0\}^{m-i_0}$.
In contrast, when $w\in D_{und}$, since $w$ is of the form $\sum_{1\leq j\leq i_0}\alpha_j v_j$ for certain coefficients $\alpha_1,\ldots,\alpha_{i_0}$, it follows that $Aw = \sum_{j}\alpha_j\lambda_j v_j$. Define  $\lambda_{max}=\max_{1\leq j\leq i_0}\{\lambda_j\}$.
Since $\lambda_i$'s are sorted, we obtain   $|\lambda_{max}|<1$, from which we conclude that $\|Aw\|^2 \leq |\lambda_{max}|^2\sum_{j}|\alpha_j|^2 = |\lambda_{max}|^2\|w\|^2$ (since $\|v_j\|=1$). Therefore, $\|Aw\|\leq |\lambda_{max}|\|w\|$ follows. This fact implies that $\|A^kw\|\leq |\lambda_{max}|^k \|w\|$ for any number $k\geq1$. From this inequality, it follows that $\lim_{k\rightarrow \infty}\|A^kw\| \leq \lim_{k\rightarrow \infty}|\lambda_{max}|^k\|w\| =0$.

For notational convenience, let $D_{sta}^* = D_{sta}\otimes\{0\}^{N-m}$ and $D_{und}^*=D_{und}\otimes\{0\}^{N-m}$. Note that $W_{max}^{\bot}
= D_{und}^*\oplus D_{sta}^*$. Next, we define $D=D_{sta}^*\oplus W_{max}$. It follows that $D^{\bot}=D_{und}^*$ since $V=W_{max}^{\bot}\oplus W_{max}$. Now, let $P_{D^{\bot}}$ express a unique projection onto $D^{\bot}$ and let $U_{D}$ be the operation $UP_{D^{\bot}}$. Define $U^{0}_{D}(w)=w$ and $U^{i+1}_{D}(w) = U_{D}(U^{i}_{D}(w))$ for each $i\in\nat$. Finally, define $D_i=\{w\in V\mid U^{i+1}_{D}(w)=0\}$ for every $i\in\nat$ and let $D_{max}=\{w\in V\mid \exists i\in\nat\,[U^{i+1}_{D}(w)=0]\}$. By the Dimension Lemma (Lemma \ref{Yao98-dimension}),we choose an appropriate index $d\in[0,m]_{\integer}$ for which  $D_{max}=D_{d}$ holds.

To make the rest of this proof simple, we rearrange the coordinate system for $V$ to match the order of $\{v_1,v_2,\ldots,v_m\}$.  We modify $M$ to define $\tilde{M}$ so that $\tilde{M}$ applies $\tilde{U}_{W}$ (instead of $U_{W}$) in a single step. For the sake of simplicity, let
$\tilde{A} =
\begin{pmatrix}
A & O \\
O & O
\end{pmatrix}$ and $\tilde{B} =
\begin{pmatrix}
O & O \\
B & O
\end{pmatrix}$.
Let us determine a series of vectors $w_0,w_1,w_2,\ldots$, which are generated by running $\tilde{M}$ starting with its initial configuration $w_0$.

Let $w_0$ express  the initial configuration $\qubit{q_0}\qubit{0}$ of $\tilde{M}$ on $x$. Since $V = D_{und}^*\oplus D_{sta}^*\oplus W_{max}$, $w_0$ can be expressed as $x_0+y_0+z_0$ for appropriate vectors $x_0\in D_{und}^*$, $y_0\in D_{sta}^*$, and $z_0\in W_{max}$.
We note that $x_0+y_0\in W_{max}^{\bot}$ and $\|w_0\|^2 = \|x_0\|^2 + \|y_0\|^2+\|z_0\|^2$.
Here, we do not need to consider $y_0$ or $z_0$ because $y_0$ will not terminate  and $z_0$ will terminate at the next step of $\tilde{M}$; it thus suffices to pay our attention to $x_0$.
At the next step of $\tilde{M}$, we apply $\tilde{U}_{W}$ to $x_0$. Let $w_1= \tilde{U}_{W}(x_0)$. This vector $w_1$ equals $\tilde{A}x_0 + \tilde{B}x_0$ and is written as $x_1 + y_1 + z_1$, where  $\tilde{A}x_0 = x_1 + y_1$  and $\tilde{B}x_0=z_1$ for certain vectors $x_1\in D_{und}^*$ and $y_1\in D_{sta}^*$. For the same reason as before, we must zero in only to $x_1$. More generally, at Step $i$, we obtain $w_i = \tilde{U}_{W}(x_{i-1})$ and $w_i$ must have the form $x_i+y_i+z_i$ for three vectors $x_i\in D_{und}^*$, $y_i\in D_{sta}^*$, and $z_i=\tilde{B}x_{i-1}$ satisfying $\tilde{A}x_{i-1}=x_i+y_i$.

Since all $x_i$'s are in $D_{und}^*$ ($=D^{\bot}$), the above process of generating $w_i$ from $x_{i-1}$ is the same as applying $U_{D}$ ($=UP_{D^{\bot}}$) to $x_{i-1}$; that is, $w_i= U_{D}(x_{i-1}) = x_i+ (y_i+z_i)$ for all $i\geq1$. Since $D_{max}=D_{d}$, we obtain $U_{D}(x_{d})=0$. Since $d\leq m \leq N$, the above computation of $\tilde{M}$ must end. In terms of $M$, each $z_i$ requires at most $N+1$ steps of $M$ in order to be mapped to $0$.

Let $\varepsilon' = \frac{1}{2}(\frac{1}{2}-\varepsilon)>0$. Note that $\varepsilon'$ is  a constant because so is $\varepsilon$.
Since our 2qfa $M$ halts with probability at least $1-\varepsilon$,
it must hold that, for any sufficiently large natural number $k$,
$\|\tilde{A}^kv\|^2 = \|A^kw\|^2 \leq (|\lambda_{max}|^{k}\|w\|)^2  \leq |\lambda_{max}|^{2k} \leq \varepsilon'$
for all vectors $v=(w,0,\ldots,0)^T\in W_{max}^{\bot}$ with $w\in\complex^m$. The last inequality implies that $k\leq (\log{\varepsilon'})/(2\log{|\lambda_{max}|})$.

Here, we need to find a polynomial upper-bound of the value $|\lambda_{max}|$. For this purpose, let $\alpha = 1-|\lambda_{max}|$ and let $T$ denote the set of all amplitudes used by $M$.  To apply Lemma \ref{lemma-Sto74}, we want to assert that $\alpha$ can be expressed as a certain form of polynomial. In  the case of quantum Turing machines, we refer the reader to \cite{Yam03}.
Recall that our amplitudes are all drawn from $\algebraic$.  Let $S=\{\alpha_1,\ldots,\alpha_e\}$ denote the maximal subset of $T$ that is algebraically independent, where  $e$ ($\in\nat^{+}$) satisfies $e\leq |Q||\Sigma||D|$. Define $F=\rational(S)$ and let $G$ be a field generated by all elements in $\{1\}\cup(T-S)$ over $F$. We write  $\{\beta_0,\beta_1,\ldots,\beta_{h-1}\}$ for a basis of $G$ over $F$ with $\beta_0=1$ and define $T'=T\cup\{\beta_i\beta_j\mid i,j\in\integer_h\}$. Take any common denominator $u$ such that, for every $\gamma\in T'$, $u\gamma$ is of the form $\sum_{t}a_{t}\left( \prod_{i=1}^{e}\alpha_{i}^{t_i} \right)\beta_{t_0}$, where $t=(t_0,t_1,\ldots,t_e)$ ranges over $\integer_h\times\integer^m$ and $a_{t}\in\integer$.
It is possible to choose a number $a\in\nat^{+}$ for which the amplitude
of any configuration of $M$ at time $k$ on $x$, when multiplied by $u^{2k-1}$, must have the form $\sum_{t}a_{t}\left(\prod_{i=1}^{e}\alpha_i^{t_i}  \right)\beta_{t_0}$, where $t =(t_0,t_1,\ldots,t_e)$ ranges over $\integer_h\times\left( \integer_{[2ak]}\right)^e$ and $a_{t}\in\integer$. Therefore, $\alpha$ is written in a polynomial form.

Since $\alpha\neq0$, we conclude by Lemma \ref{lemma-Sto74} that, for an appropriate choice of $c>0$, $|\alpha|\geq c^{-N}$ holds; in other words,  $1-|\lambda_{max}|\geq c^{-N}$ or equivalently $|\lambda_{max}|\leq 1- c^{-N}$ holds. The last inequality implies that $\log{|\lambda_{max}|^{-1}} \geq \log(1-c^{-N})^{-1} \geq c^{-N}$.
It therefore follows that $k\leq (\log{\varepsilon'})/(2\log{|\lambda_{max}|}) = (\log{(\varepsilon')^{-1}})/(2\log{|\lambda_{max}|^{-1}}) \leq c'c^{N}$ for another appropriate constant $c'>0$. This obviously yields the lemma.
\end{proofof}

\section{Non-Recursive Languages}\label{sec:non-recursive}

The use of \emph{unrestricted amplitudes} often endows underlying qfa's with enormous computational power, and consequently it causes the qfa's to recognize even non-recursive languages. In what follows, we wish to discuss what type of 2qfa's recognizes non-recursive languages when arbitrary amplitudes are allowed. In our study, however, we shall pay our attention only to qfa's that halt absolutely with various accepting criteria.

We start with a simple claim that all languages in $\twoeqfa_{\complex}(\abshalt)$ are recursive even if all amplitudes used by underlying 2qfa's are not recursive. For notational convenience, we write $\mathrm{REC}$ for the family of all {\em recursive languages}.

\begin{proposition}\label{EQFA-recursive}
$\twoeqfa_{\complex}(\abshalt) \subsetneqq \mathrm{REC}$.
\end{proposition}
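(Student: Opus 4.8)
The plan is to establish the two halves of the statement separately. The strict inequality $\twoeqfa_{\complex}(\abshalt)\subsetneqq\mathrm{REC}$ requires (a) the inclusion $\twoeqfa_{\complex}(\abshalt)\subseteq\mathrm{REC}$, and (b) exhibiting a recursive language not recognized by any absolutely-halting error-free $\complex$-amplitude 2qfa. For part (a), the key observation is that by Corollary \ref{amplitude-reduction}(3) we already know $\twoeqfa_{\complex}(\abshalt)=\twoeqfa_{\algebraic\cap\real}(\abshalt)$, so it suffices to show every language recognized by an error-free 2qfa with real algebraic amplitudes is recursive. This is routine: given such a 2qfa $M$ with algebraic amplitudes, Theorem \ref{abs-halt-vs-lin-time} (via Lemma \ref{claim-Yao98}) guarantees it halts within $|Q|(n+2)+1$ steps on inputs of length $n$, so on input $x$ one can compute the finite-dimensional time-evolution operator $U_{\delta}^{(x)}$ symbolically over a number field containing all amplitudes, iterate it the bounded number of times, apply the projections, and compute $p_{M,acc}(x)$ exactly as an algebraic number; since $M$ is error-free, $p_{M,acc}(x)\in\{0,1\}$, and deciding which is a terminating algebraic computation. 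Hence $L(M)$ is recursive.

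For part (b), the strictness, the cleanest route is a counting/diagonalization argument combined with part (a) applied to a \emph{countable} amplitude set. By Corollary \ref{amplitude-reduction}(3), $\twoeqfa_{\complex}(\abshalt)=\twoeqfa_{\algebraic\cap\real}(\abshalt)$, and the set $\algebraic\cap\real$ of real algebraic numbers is countable; moreover, each 2qfa with algebraic amplitudes has a finite description (finitely many states, finitely many algebraic amplitudes specified by minimal polynomials and isolating intervals). Therefore $\twoeqfa_{\complex}(\abshalt)$ is a \emph{countable} family of languages, and in fact — because each such 2qfa gives a total decision procedure as argued in (a), uniformly in a suitable encoding of the amplitudes — it is a family indexed by a computable enumeration of machine descriptions. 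One then diagonalizes: define $L_{diag}=\{0^i\mid 0^i\notin L(M_i)\}$ where $(M_i)_i$ enumerates all such 2qfa descriptions; this language is recursive (deciding $0^i\in L_{diag}$ just runs the effective procedure of (a) on $M_i$) but differs from every $L(M_i)$, hence lies in $\mathrm{REC}\setminus\twoeqfa_{\complex}(\abshalt)$.

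The main obstacle is making the diagonalization genuinely effective, i.e.\ checking that one can \emph{recursively enumerate} the 2qfa's with algebraic amplitudes together with the decision procedure of part (a) in a uniform way. This needs two things: first, that algebraic numbers can be encoded finitely and arithmetic/comparison on them performed algorithmically (standard), and second — the more delicate point — that the well-formedness condition (unitarity of $U_{\delta}^{(x)}$ for all $x$) and the absolute-halting condition are preserved, or at worst that we only need to enumerate a set of machines that is guaranteed recursive and contains $\twoeqfa_{\complex}(\abshalt)$. In fact we do not need to decide unitarity or absolute halting: it suffices to enumerate \emph{all} finite tuples coding candidate transition functions with algebraic values, and for those that happen to be well-formed and absolutely halting the procedure of (a) computes the recognized language; by running the bounded-time simulation and outputting the majority/exact acceptance verdict we get a total recursive function of $(i,x)$ that agrees with $L(M_i)$ whenever $M_i$ is a legitimate error-free absolutely-halting 2qfa, which is all the diagonal argument requires. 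I would spell out this uniformity carefully, as it is the crux; the rest is bookkeeping. (An alternative to (b), avoiding enumeration subtleties, is a pure cardinality argument: $\twoeqfa_{\complex}(\abshalt)$ is countable by the above, while $\mathrm{REC}$ is also countable, so cardinality alone does not separate them — hence the diagonalization is genuinely needed, and I would flag this so the reader sees why the easy counting argument fails.)
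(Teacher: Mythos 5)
Your proposal is correct and follows essentially the same route as the paper: both reduce to real algebraic amplitudes via Corollary \ref{amplitude-reduction}(3) together with the worst-case linear time bound of Lemma \ref{claim-Yao98}, and both obtain strictness by diagonalizing over an effective enumeration of finite descriptions of $\algebraic\cap\real$-amplitude 2qfa's (your careful remark that one need not decide well-formedness or absolute halting within the enumeration matches the paper's ``standard diagonalization'' step). The only cosmetic difference is that the paper establishes the inclusion in $\mathrm{REC}$ by citing $\twoeqfa_{\algebraic\cap\real}[\lintime]\subseteq\mathrm{EQP}_{\complex}\subseteq\mathrm{REC}$ from Adleman et al., whereas you argue it directly by exact symbolic simulation of the linearly many steps over a number field containing the amplitudes, which amounts to the same thing.
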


\begin{proof}
From  Corollary \ref{amplitude-reduction}(3) and Theorem \ref{abs-halt-vs-lin-time}, we obtain  $\twoeqfa_{\complex}(\abshalt) = \twoeqfa_{\algebraic\cap\real}(\abshalt) = \twoeqfa_{\algebraic\cap\real}[\lintime]$.  It thus suffices to verify that (*) $\twoeqfa_{\algebraic\cap\real}[\lintime]\subsetneqq \mathrm{REC}$.

Adleman \etalc~\cite{ADH97} demonstrated that the language family $\mathrm{EQP}_{\complex}$ (error-free quantum polynomial time) with $\complex$-amplitudes is contained in $\mathrm{REC}$. It is rather clear that $\twoeqfa_{\algebraic\cap\real}[\lintime]\subseteq \mathrm{EQP}_{\complex}$. Since $\mathrm{EQP}_{\complex}\subseteq \mathrm{REC}$, we conclude that $\twoeqfa_{\algebraic\cap\real}[\lintime]\subseteq \mathrm{REC}$.

For the separation in (*), it suffices to construct a recursive language that is recognized by no $\algebraic\cap\real$-amplitude error-free 2qfa's $M$ running in worst-case $|Q|(n+2)+1$ time by Lemma \ref{claim-Yao98}. This task can be done by a standard diagonalization argument. First, we encode each $\algebraic\cap\real$-amplitude 2qfa into a certain binary string by treating amplitudes using their defining polynomials. Next, we enumerate the encodings of all 2qfa's and define $L$ to be a set of all strings $x$ such that $x$ encodes a certain $\algebraic\cap\real$-amplitude 2qfa $M=(Q,\{0,1\},\delta,q_0,Q_{acc},Q_{rej})$ and $M$ does not accept $x$ within $|Q|(|x|+2)+1$ steps. By the definition of $L$, $L$ does not belong to $\twoeqfa_{\algebraic\cap\real}[\lintime]$.
Since $L$ is recursive by its recursive construction, the desired separation follows instantly.
\end{proof}

The case of unbounded-error probability is quite different from Proposition \ref{EQFA-recursive}. Since $\mathrm{SL}_{\real}$ is known to be uncountable \cite{Rab63}, Lemma \ref{PQFA-equal-SL} immediately implies that $\onepqfa_{\complex}$
contains a non-recursive language. Corollary \ref{onePQFA-included-2PQFA} then helps us conclude that the same is true for $\twopqfa_{\complex}[\lintime]$. For completeness, we include the entire proof of this result using our terminology.

\begin{proposition}\label{QFA-non-recursive-NO1}
$\twopqfa_{\complex}[\lintime]\nsubseteq \mathrm{REC}$. More strongly,  $\onepqfa_{\complex}\nsubseteq \mathrm{REC}$.
\end{proposition}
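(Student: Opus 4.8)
The plan is to establish the stronger assertion $\onepqfa_{\complex}\nsubseteq\mathrm{REC}$ first, and then obtain $\twopqfa_{\complex}[\lintime]\nsubseteq\mathrm{REC}$ for free from the inclusion $\onepqfa_{\complex}\subseteq\twopqfa_{\complex}[\lintime]$ supplied by Corollary \ref{onePQFA-included-2PQFA}. By Lemma \ref{PQFA-equal-SL} we have $\onepqfa_{\complex}=\mathrm{SL}_{\real}$, so it suffices to produce a single non-recursive stochastic language. Since there are only countably many recursive languages over a fixed alphabet, it is enough to show that $\mathrm{SL}_{\real}$ is uncountable; this is Rabin's classical observation \cite{Rab63}, which I would reproduce in the present terminology for completeness.

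Concretely, I would fix the following $2$-state $1$pfa $M$ over $\{0,1\}$ with $\{0,1/2,1\}$-transition probabilities: inner states $q_0$ (initial) and $q_1$ (the unique accepting state); on reading $0$, $q_0$ stays in $q_0$, while $q_1$ moves to $q_0$ or to $q_1$ each with probability $1/2$; on reading $1$, $q_1$ stays in $q_1$, while $q_0$ moves to $q_1$ or stays in $q_0$ each with probability $1/2$ (the endmarkers act as the identity). Writing $p_i$ for the probability of being in $q_1$ after reading the $i$-th input symbol $a_i$, a one-line computation gives the recurrence $p_i = \tfrac{1}{2}p_{i-1} + \tfrac{1}{2}a_i$, hence $p_{M,acc}(a_1\cdots a_n)=\sum_{i=1}^{n}a_i\,2^{\,i-n-1}$, i.e. the binary fraction $0.a_n a_{n-1}\cdots a_1$. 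In particular every dyadic rational in $(0,1)$ is realized as $p_{M,acc}(w)$ for a suitable finite word $w$ (read off its terminating binary expansion in reverse order).

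For each cut point $\lambda\in(0,1)$ put $L_\lambda=\{w\mid p_{M,acc}(w)>\lambda\}$, which lies in $\mathrm{SL}_{\real}$ by definition. If $\lambda<\lambda'$, density of the dyadics furnishes a dyadic $d$ with $\lambda<d<\lambda'$ and a word $w$ with $p_{M,acc}(w)=d$; then $w\in L_\lambda$ but $w\notin L_{\lambda'}$, so $\lambda\mapsto L_\lambda$ is injective on $(0,1)$ and $\mathrm{SL}_{\real}$ has cardinality $2^{\aleph_0}$. Hence some $L_\lambda$ is non-recursive, giving $\onepqfa_{\complex}=\mathrm{SL}_{\real}\nsubseteq\mathrm{REC}$; combined with $\onepqfa_{\complex}\subseteq\twopqfa_{\complex}[\lintime]$ this also yields $\twopqfa_{\complex}[\lintime]\nsubseteq\mathrm{REC}$.

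There is no genuine obstacle here — it is a soft cardinality argument resting on Lemma \ref{PQFA-equal-SL} and Corollary \ref{onePQFA-included-2PQFA} — so the only point that needs care is the routine bookkeeping: checking that the recurrence $p_i=\tfrac{1}{2}p_{i-1}+\tfrac{1}{2}a_i$ is realized by a genuinely well-defined stochastic $1$pfa (columns summing to $1$, endmarkers handled), and that the resulting acceptance probabilities sweep out all dyadic rationals in $(0,1)$. Once that is in place, the injectivity of $\lambda\mapsto L_\lambda$, the uncountability of $\mathrm{SL}_{\real}$, the existence of a non-recursive $L_\lambda$, and the $\twopqfa_{\complex}[\lintime]$ statement all follow immediately.
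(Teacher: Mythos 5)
Your proof is correct and follows essentially the same route as the paper: establish that $\mathrm{SL}_{\real}$ is uncountable via Rabin's argument, then invoke Lemma \ref{PQFA-equal-SL} and Corollary \ref{onePQFA-included-2PQFA}. The only (immaterial) difference is in how uncountability is witnessed --- you fix a single two-state $1$pfa with $\{0,1/2,1\}$-probabilities and vary the real cut point $\lambda$, whereas the paper fixes the cut point at $1/2$ and varies the automaton, building for each real $\varepsilon$ a five-state $1$pfa with transition probability $1/2\varepsilon$ so that $p_{M,acc}(x^R)=0.x/2\varepsilon$; both are valid instantiations of the same cardinality argument, and yours is arguably cleaner since the transition probabilities stay rational.
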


\begin{proof}
In the classical case, Rabin \cite{Rab63} implicitly argued that $\mathrm{SL}_{\real}$ is uncountable; thus, it must contain a non-recursive language.  The proposition comes directly from his claim. For completeness, we include the proof of the claim.

\begin{claim}\label{SL-nonrecursive}
$\mathrm{SL}_{\real}$  is uncountable.
\end{claim}

\begin{proof}
Let $\Sigma=\{0,1\}$. Since we are allowed to take any real cut point, we choose an arbitrary real number $\varepsilon\in(0,1]$. Using this $\varepsilon$, we define $L_{\varepsilon} =\{x^R\in\Sigma^*\mid 0.x>\varepsilon\}$, where $0.x$ is the binary expansion of each real number in $[0,1)$. Note that there are uncountably many such languages $L_{\varepsilon}$. Here, we want to show that $L_{\varepsilon}\in \mathrm{SL}_{\real}$ by constructing a 1pfa $M = (Q,\Sigma,\delta,q_0,Q_{acc},Q_{rej})$ such that $p_{M,acc}(x^R) = 0.x/2\varepsilon>1/2$ iff $0.x>\varepsilon$. Define $Q=\{q_0,q_1,q_2,q_3,q_4\}$, $Q_{acc}=\{q_3\}$, and $Q_{rej}=\{q_4\}$. Our transition function $\delta$ is described by the following transition matrices: for any symbol $\alpha\in\{\cent,0,1\}$,
\[
P_{\alpha} = \lmatrices{A_{\alpha}}{O_{3\times2}}{O_{2\times3}}{I_{2\times2}} \;\; \text{and} \;\;
P_{\dollar} = \lmatrices{O_{3\times3}}{O_{3\times2}}{B}{I_{2\times2}},
\]
where $O_{k\times l}$ and $I_{k\times l}$ are respectively the $k\times l$ zero matrix and the $k\times l$ identity matrix, and
\[
A_{\cent} = \ninematrices{\frac{1}{2\varepsilon}}{0}{0}{0}{0}{0}{\frac{2\varepsilon-1}{2\varepsilon}}{1}{1}, \;\; A_{0} = \ninematrices{1}{\frac{1}{2}}{0}{0}{\frac{1}{2}}{0}{0}{0}{1},
\;\; A_{1} = \ninematrices{\frac{1}{2}}{0}{0}{\frac{1}{2}}{1}{0}{0}{0}{1}, \text{ and }
\; B = \twothreematrices{0}{1}{0}{1}{0}{1}.
 \]
By a direct calculation, we obtain $(0,0,0,1,0)P_{\dollar}P_{x_n}P_{x_{n-1}}\cdots P_{x_1}P_{\cent}(1,0,0,0,0)^T = 0.x_{n}x_{n-1}\cdots x_1$, implying $p_{M,acc}(x^R) = \frac{0.x}{2\varepsilon}$. Thus, it follows that  $p_{M,acc}(x^R)>\frac{1}{2}$ iff $0.x>\varepsilon$ iff  $x^R\in L_{\varepsilon}$. We therefore conclude that $L_{\varepsilon}$ is in $\mathrm{SL}_{\real}$.
\end{proof}

From Claim \ref{SL-nonrecursive}, we immediately conclude that  $\mathrm{SL}_{\real}$ contains a non-recursive language.
Since $\onepqfa_{\complex}\subseteq \twopqfa_{\complex}[\lintime]$ by Corollary \ref{onePQFA-included-2PQFA}, Lemma \ref{PQFA-equal-SL} implies that $\mathrm{SL}_{\real}\subseteq \twopqfa_{\complex}[\lintime]$. Therefore,  $\twopqfa_{\complex}[\lintime]$ also has a non-recursive language.
\end{proof}

In the bounded-error case, $\twobqfa_{\complex}(\abshalt)$ is situated between $\twoeqfa_{\complex}(\abshalt)$ and $\twopqfa_{\complex}(\abshalt)$. It thus natural to ask whether $\twobqfa_{\complex}(\abshalt)$ is large enough to contain a  non-recursive language. Unfortunately, we cannot answer this question; instead, we make a slightly weak claim, Proposition
\ref{QFA-non-recursive-NO2}.

\begin{proposition}\label{QFA-non-recursive-NO2}
$\twobqfa_{\complex}(2\mbox{-}head,\abshalt)\nsubseteq \mathrm{REC}$.
\end{proposition}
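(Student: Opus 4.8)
The plan is to exhibit a single two-head 2qfa $N$ with $\complex$-amplitudes, all of whose computation paths halt (indeed in worst-case linear time, consistently with Theorem~\ref{abs-halt-vs-lin-time}), that recognizes with bounded error the non-recursive language
\[
 L \;=\; \bigl\{\, 1^{32^{k-1}} \;:\; k\ge 1,\; k\in A \,\bigr\},
\]
where $A\subseteq\nat^{+}$ is a fixed non-recursive set (e.g.\ a coding of the halting problem). First note that $L$ is non-recursive: since the predicate ``$m=32^{k-1}$ for some $k$'' is decidable and then $k$ is uniquely determined, a decision procedure for $L$ would decide $A$. The idea is that $N$ uses its two heads only for a purely classical test and its single exotic amplitude only for one rotation.

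The amplitude encodes $A$ into one real number. Write the data of $A$ as base-$32$ digits $d_k\in\{8,16\}$ with $d_k=8$ iff $k\in A$, and set $\gamma=\sum_{k\ge1}d_k\,32^{-k}\in(0,1)$; the only amplitudes $N$ needs are $\cos(2\pi\gamma)$ and $\sin(2\pi\gamma)$. The point is that multiplying by $32^{k-1}$ is a shift of the base-$32$ expansion, so $\{32^{k-1}\gamma\}=0.d_k d_{k+1}\cdots$ (base $32$), whence $\{32^{k-1}\gamma\}\in[1/4,\,1/4+1/32)$ if $k\in A$ and $\{32^{k-1}\gamma\}\in[1/2,\,1/2+1/32)$ if $k\notin A$. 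Multiplying these intervals by $2\pi$ puts the angle near $\pi/2$ in the first case (so $\sin^{2}$ of it is at least $\cos^{2}(\pi/16)>0.96$) and near $\pi$ in the second (so $\sin^{2}$ of it is at most $\sin^{2}(\pi/16)<0.04$).

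On input $1^{n}$, $N$ first runs a deterministic two-head routine checking ``$n$ is a power of $32$'', rejecting with certainty if it fails. This uses the standard halving technique: one head sweeps the current block to its right end while the other keeps a marker, the heads then trade roles so the block is repeatedly bisected, and a mod-$5$ counter in the finite control insists that the number of bisections be a multiple of $5$; the block lengths form a geometric series, so this halts in $O(n)$ steps. If $n=32^{k-1}$, $N$ then makes one rightward pass over the $n$ symbols, applying the one-qubit rotation by angle $2\pi\gamma$ once per cell to an auxiliary two-state register initialized to $\ket{0}$; after the pass the register is $\cos(2\pi n\gamma)\ket{0}+\sin(2\pi n\gamma)\ket{1}$, and $N$ measures, accepting iff it sees $\ket{1}$. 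All head moves are deterministic, so every computation path has length $O(n)$ and $N$ halts absolutely; the partial isometry just described extends to a well-formed (unitary) two-head time-evolution operator since rotations are unitary. By the digit analysis, $N$ accepts $1^{32^{k-1}}$ with probability $>0.96$ when $k\in A$, rejects it with probability $>0.96$ when $k\notin A$, and rejects every $1^{n}$ with $n$ not a power of $32$ with certainty; hence $N$ recognizes $L$ with error probability below $1/2$, and $L\in\twobqfa_{\complex}(2\mbox{-}head,\abshalt)$.

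The hard part is the two-head length test: ``$n$ is a power of $32$'' cannot be recognized with a single two-way head (that would make $\{1^{32^{k-1}}\}$ regular), so it must be carried out with exactly the two heads available, in a way that halts uniformly and can be embedded into a well-formed 2qfa --- and this is precisely why the proposition is stated for the two-head model, the one-head analogue (whether $\twobqfa_{\complex}(\abshalt)$ itself contains a non-recursive language) being left open. A minor point to verify is that during the rotation pass the qubit-register states are kept non-halting (so no premature measurement occurs) and that the full transition function can be completed to a unitary on the whole configuration space, which is routine.
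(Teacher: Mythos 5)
Your proposal is correct and follows essentially the same route as the paper: encode a non-recursive set into the base-$b$ digits of a single rotation angle (the Adleman--DeMarrais--Huang trick, with your base $32$ and digits $\{8,16\}$ playing the role of the paper's $\theta=2\pi\sum_{n\in A_3}(-1)^{L(a^n)}/9n$ and its offset $R_{7\pi/18}$), use the two heads for a deterministic worst-case linear-time test that the input length lies in a fixed exponential sequence, then apply the rotation once per cell and measure. The one point you wave off as ``routine'' but that the paper treats explicitly is the \emph{reversibility} of the two-head length test --- the paper constructs a 2head-2rfa for $\{a^{3^{2m}}\}$ via a full transition table precisely because an arbitrary deterministic head routine need not embed into a unitary time-evolution operator --- so your halving routine would need to be checked to be injective on configurations before the well-formedness claim stands.
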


\begin{proof}
For this proposition, we want to verify that
$\twobqfa_{\complex}(2\mbox{-}head,\abshalt)$ is uncountable by proving that it contains all subsets of $\{a^n\mid \exists m\in\nat\,[n=3^{2m}]\}$ since there are uncountably many such subsets.
The following argument comes from the proof of \cite[Theorem 5.1]{ADH97}, which demonstrates that $\mathrm{BQP}_{\complex}\nsubseteq\mathrm{REC}$.
We start with defining $A_3$ as the
set $\{a^n\mid \exists m\in\nat\,[n=3^{2m}]\}$ and we take any language $L$ over unary alphabet $\Sigma=\{a\}$ satisfying  $L\subseteq A_{3}$.
In what follows, we conveniently use $L$ to express its \emph{characteristic function} (\ie $L(x)=1$ for all $x\in L$ and $L(x)=0$ for all $x\notin L$).
Associated with $L$, we define a real number $\theta = 2\pi \sum_{n\in A_3} (-1)^{L(a^{n})} \cdot \frac{1}{9n}$. We first claim that $A_3$ can be recognized by a certain 2head-2rfa; that is, $A_3\in\tworfa(2\mbox{-}head)$.

\begin{claim}\label{2head-2rfa-forA3}
There exists a 2head-2rfa $M$ that recognizes the language $A_3$ over the alphabet $\Sigma=\{a\}$.
\end{claim}

\begin{proof}
We shall construct a 2head-2rfa $M=(Q,\Sigma,\delta,q_0,Q_{acc},Q_{rej})$ that recognizes $A_3$. Let $Q=\{ (q_i,q_k)\mid i\in[0,3]_{\integer},k\in\{odd,even,acc\}\}$, $Q_{rej}=\{ (q_i,q_{rej})\mid i\in[0,3]_{\integer}\}\}$, $Q_{acc}=\{ (q_3,q_{acc})\}$, and $D=\{0,\pm1\}$.
Our transition function $\delta: Q\times \check{\Sigma}\times\check{\Sigma}\to Q\times D\times D$ instructs $M$ to behave as follows.
In each odd round $2j+1$ for $j\geq0$, $M$ moves the second tape head forward $3$ cells  while  the first tape head stays stationary for $3$ steps and moves back for $1$ step. Whenever one of the tape heads returns to $\cent$, $M$ switches the roles of $2$ tape heads   until at least one tape head reaches $\dollar$. More precisely, for each $i\in[3]$, an inner state $(q_i,q_{odd})$ (resp., $(q_i,q_{even})$) indicates that $M$ has already elapsed for $i$ steps in round  $2j+1$ (resp., $2j+2$) for a certain number $j\geq0$.

\begin{table}[htbp] \label{table-transition}
\begin{tabular}{|l|c|c|l|r|r|} \hline
inner state &  symb. 1 & symb. 2 & inner state & dir. 1 & dir. 2 \\ \hline
$(q_0,q_0)$ & $\cent$ & $\cent$ & $(q_1,q_{odd})$ & $+1$ & $+1$ \\
$(q_1,q_{odd})$ & $\dollar$ & $\dollar$ & $(q_{1},q_{rej})$ & $-1$ & $-1$ \\
$(q_i,q_{odd})$, $i=1,2$ & $a$ & $a$ & $(q_{i+1},q_{odd})$ & $0$ & $+1$ \\
$(q_3,q_{odd})$ & $a$ & $a$ & $(q_{1},q_{odd})$ & $-1$ & $0$ \\
$(q_{3},q_{odd})$ & $\cent$ & $a$ & $(q_{1},q_{even})$ & $+1$ & $0$ \\
$(q_{i},q_{odd})$, $i\in[3]$  & $a$ & $\dollar$ & $(q_{i},q_{rej})$ & $0$ & $-1$ \\
$(q_i,q_{even})$, $i=1,2$ & $a$ & $a$ & $(q_{i+1},q_{even})$ & $+1$ & $0$ \\
$(q_3,q_{even})$ & $a$ & $a$ & $(q_{1},q_{even})$ & $0$ & $-1$ \\
$(q_{3},q_{even})$ & $a$ & $\cent$ & $(q_{1},q_{odd})$ & $0$ & $+1$ \\
$(q_3,q_{even})$  & $\dollar$ & $a$ & $(q_{3},q_{acc})$ & $-1$ & $0$ \\
$(q_i,q_{even})$, $i\neq3$ & $\dollar$ & $a$ & $(q_{i},q_{rej})$ & $-1$ & $0$ \\
\hline
\end{tabular}
\caption{Transitions of $M$.  The first row, for example, indicates $\delta((q_0,q_0),\cent,\cent) = ((q_1,q_{odd}),+1,+1)$.}
\end{table}

Table 1 formally describes $\delta$.
Note that, after round $2j$ (resp., $2j+1$), the first (resp., second) tape head must have moved to the cell indexed $3^{2j}$ (resp., $3^{2j+1}$). Therefore, we conclude that $x$ is in $A_3$ iff the first tape head reaches $\dollar$ in a unique accepting state $(q_3,q_{acc})$.
\end{proof}

To recognize $L$ using a 2head-2qfa, say, $N$, it suffices for us to implement  the following procedure on $N$. Let $R_{\theta}$ be a \emph{rotation matrix} $\lmatrices{\cos\theta}{-\sin\theta}{\sin\theta}{\cos\theta}$.
The \emph{Hadamard transform} $H$ is defined as $\frac{1}{\sqrt{2}}\lmatrices{1}{1}{1}{-1}$.
Let $x$ be any input string of length $n$.

\s

(i) Start with the initial quantum state $\qubit{q_0,q_0}\qubit{r_1}$. Using the first register $\qubit{q_0,q_0}$, we run $M$, which is given by Claim \ref{2head-2rfa-forA3}  to check whether $n$ is of the form $3^{2m}$ for a certain $m\in\nat^{+}$. If $M$ rejects $x$, then so does $N$. Otherwise, $M$ enters a unique accepting
state $\qubit{q_{3},q_{acc}}$ after the first tape head reaches $\dollar$, while the second tape head still scans $a$.

(ii) We use the second register $\qubit{r_1}$ and move only  the first tape head. Henceforth, we shall keep the second tape head staying still.

(iii) Whenever the first tape head reads the symbol $a$, $N$ applies $R_{\theta}$ and moves the first tape head to the left.

(iv) After the first tape head reaches $\cent$, apply  $R_{7\pi/18}$ to the current quantum state of the form $\alpha\qubit{r_1}+\beta\qubit{r_2}$. Apply $V_{\cent}$, defined by $V_{\cent}\qubit{r_1}=\qubit{r_{rej}}$ and  $V_{\cent}\qubit{r_2}=\qubit{r_{acc}}$.

(v) Measure the current quantum state. If $(q_3,q_{acc})$ is observed, we accept $x$ with probability $\sin^2(n\theta+7\pi/18)$; otherwise, we reject $x$ with probability $\cos^2(n\theta+7\pi/18)$.

\s

Note that, if $n=3^{2m}$, $n\theta$ is written as
\[
n\theta = 2\pi \left( \sum_{i=0}^{m-1}(-1)^{L(a^{3^{2i}})}9^{m-i+1} + (-1)^{L(a^{n})}\frac{1}{9} + \sum_{i>m}(-1)^{L(a^{3^{2i}})}\frac{1}{9^{i+1-m}} \right).
\]
We want to assert that $x\in L$ iff $N$ accepts $x$ with probability at least $2/3$. Let $\omega_n$ denote $n\theta+\frac{7\pi}{18}\;\mathrm{mod}\; 2\pi$. If $L(a^n)=1$, then we obtain $\frac{\pi}{2}-\frac{\pi}{36} \leq \omega_n\leq \frac{\pi}{2}+\frac{\pi}{36}$. From those bounds, it follows that $\sin^2(n\theta+\frac{7\pi}{18})=\sin^2(\omega_n) \geq \cos^2(\pi/36) >\frac{2}{3}$. Similarly, if $L(a^n)=0$, then we obtain $-\frac{\pi}{36}\leq \omega_n\leq \frac{\pi}{36}$, from which we conclude that $\cos^2(n\theta+\frac{7\pi}{18}) = \cos^2(\omega_n) >\frac{2}{3}$.
\end{proof}

\section{Classical Simulations of 2QFAs}\label{sec:classical-simulation}

We shall establish a close relationship between 2qfa's and 2pfa's by seeking a simulation of the 2qfa's on multi-head 2pfa's. Since Lemma \ref{folke-complex} allows us to deal only with $\real$-amplitudes for 2qfa's, throughout this section, we shall consider 2qfa's that use real amplitudes only.

\subsection{Multi-Head Classical Finite Automata}\label{sec:multi-head-FA}

We wish to present two classical complexity upper bounds of $\twopqfa_{K}$ and $\twocequalqfa_{K}$ for a reasonable choice of amplitude set $K\subseteq \real$. Given such a subset $K$ of $\real$, the notation $\widehat{K}$ refers to the minimal set that contains $K$ and is also closed under \emph{multiplication} and \emph{addition}. In particular, we obtain $\hat{\rational}=\rational$ and $\hat{\real}=\real$.
Let us recall from Section \ref{sec:classical-cutpoint} that the bracketed notation ``$[t(n)\mbox{-}time]$'' indicates a worst-case time bound $t(n)$ of an underlying finite automata and that this notation has yielded two language families $\twoppfa_{K}(k\mbox{-}head)[\polytime]$ and $\twocequalpfa_{K}(k\mbox{-}head)[\polytime]$, which indicate reasonable upper bounds of $\twopqfa$ and $\twocequalqfa$, respectively.

\begin{theorem}\label{twoEQFA-SL}
Let $K$ be any subset of $\real$ with $\{0,1/2,1\}\subseteq K$.
\begin{enumerate}\vs{-1}
  \setlength{\topsep}{-2mm}%
  \setlength{\itemsep}{1mm}%
  \setlength{\parskip}{0cm}%

\item $\twopqfa_{K} \subseteq \mathrm{2PPFA}_{\widehat{K}}(k\mbox{-}head)[\polytime]$ for a certain index $k\geq2$.

\item  $\twocequalqfa_{K} \subseteq \mathrm{2C_{=}PFA}_{\widehat{K}}(k\mbox{-}head)[\polytime]$ for a certain index $k\geq2$.
\end{enumerate}
\end{theorem}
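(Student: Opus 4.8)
The plan is to simulate the acceptance probability of a real‑amplitude $2$qfa on a multi‑head $2$pfa and then recover the unbounded‑error (resp.\ exact‑cut‑point) verdict from a cut point (resp.\ exact cut point) of the simulator. By Lemma \ref{folke-complex} we may assume the given $2$qfa $M=(Q,\Sigma,\delta,q_0,Q_{acc},Q_{rej})$ uses only real $K$‑amplitudes. Fix an input $x$ with $|x|=n$, put $N=|Q|(n+2)$, and write $U=U_{\delta}^{(x)}$, $B=\Pi_{non}U$, $C=U^{\dagger}\Pi_{acc}U$. Since $|\phi_{i}\rangle=B^{i}|\phi_0\rangle$ and the probability that $M$ accepts at step $i$ equals $\langle\phi_{i-1}|C|\phi_{i-1}\rangle$, one has $p_{M,acc}(x)=\sum_{i\geq0}\mathrm{Tr}\!\big(C\,\mathcal{E}^{i}(\rho_0)\big)$, where $\rho_0=|\phi_0\rangle\langle\phi_0|$ and $\mathcal{E}$ is the linear map $Y\mapsto BYB^{\dagger}$ on the $N^{2}$‑dimensional real space of $N\times N$ real matrices; in the standard basis the entries of $\mathcal{E}$ are mere products of two entries of $B$, hence $\widehat K$‑combinations of products of at most two transition amplitudes of $M$.

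The first real issue is that $M$ need not halt completely, so $\mathcal{I}-\mathcal{E}$ need not be invertible. Here I would reuse the spectral analysis behind Lemma \ref{exp-time-bound}: the eigenvalue‑modulus‑one part of $\mathcal{E}$ corresponds to the ``stationary'' behaviour of $M$, on which $\Pi_{acc}U$ and $\Pi_{rej}U$ vanish, so it contributes $0$ to every $\mathrm{Tr}\!\big(C\,\mathcal{E}^{i}(\cdot)\big)$, whereas on the complementary $\mathcal{E}$‑invariant subspace the spectral radius is below $1$. This yields an exact identity $p_{M,acc}(x)=\mathrm{Tr}\!\big(C\,(\mathcal{I}-\mathcal{E}')^{-1}\rho_0'\big)$ for a genuinely invertible $O(n^{2})\times O(n^{2})$ matrix $\mathcal{I}-\mathcal{E}'$ whose entries are again fixed $\widehat K$‑combinations of products of $M$'s amplitudes (together with the constants $0,1$), and Cramer's rule then expresses $p_{M,acc}(x)$ as a $\widehat K$‑linear combination of ratios $\det\!\big((\mathcal{I}-\mathcal{E}')_{j,i}\big)/\det(\mathcal{I}-\mathcal{E}')$. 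Making this reduction fully precise — especially the treatment of non‑halting computation paths — is where the Dimension Lemma of Section \ref{sec:absolute-halt} does the work.

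Next I would implement a $\mathrm{GapL}$‑style determinant routine on a $k$‑head $2$pfa running in worst‑case polynomial time, following the clow‑sequence algorithm of Mahajan and Vinay \cite[Theorem 4]{MV97} and adapting it as the introduction foreshadows. Two kinds of change are needed. First, the target matrix is not the input but is ``presented'' only through acceptance probabilities: a given entry of $\mathcal{I}-\mathcal{E}'$ is determined by $\delta$ and by two input symbols $x_{\ell},x_{\ell'}$, so a sub‑$2$pfa parked with two heads over cells $\ell,\ell'$ can reproduce its magnitude as its own acceptance probability (flipping $\widehat K$‑biased coins), while the sign of that entry and the $\pm$ signs of clow sequences are carried combinatorially in the Gap (difference‑of‑two‑$\#\mathrm{2PFA}_{\widehat K}(k\mbox{-}head)[\polytime]$‑functions) bookkeeping, exactly as in \cite{MV97}. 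Second, the dynamic program ranges over $O(n^{2})$ indices and inner loops of length $O(n)$, which no single head on an $(n{+}2)$‑cell tape can address; a bounded number $k$ of heads suffices — some encode indices up to $(n{+}2)^{2}$ via pairs of positions and serve as loop counters, one pair reads the two symbols defining a matrix entry. The clow‑sequence recurrence has polynomial length, so the whole simulation stays within worst‑case polynomial time; its output is a Gap quantity, i.e.\ a difference of acceptance probabilities of $k$‑head polynomial‑time $2$pfa's with $\widehat K$‑transition probabilities, and standard Gap‑class manipulations (clearing the common denominator $\det(\mathcal{I}-\mathcal{E}')$ and comparing Gap quantities) reduce the test ``$p_{M,acc}(x)>\tfrac12$'' (resp.\ ``$p_{M,acc}(x)=\eta$'') to the sign test (resp.\ zero test) of a single Gap quantity $g(x)=p_{N_1,acc}(x)-p_{N_2,acc}(x)$.

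Finally one assembles the simulator: a machine $N$ that tosses a fair coin, simulates $N_1$ and accepts iff $N_1$ accepts on heads, and simulates $N_2$ and accepts iff $N_2$ does not accept on tails, has $p_{N,acc}(x)=\tfrac12+\tfrac12 g(x)$, so $p_{N,acc}(x)>\tfrac12$ iff $g(x)>0$ and $p_{N,acc}(x)=\tfrac12$ iff $g(x)=0$ (all these $N_i$ halt completely, being worst‑case polynomial time). For part (1), since $M$ recognizes $L$ with unbounded error we have $x\in L\iff p_{M,acc}(x)>\tfrac12$, so taking $g(x)=p_{M,acc}(x)-\tfrac12$ gives $L\in\mathrm{2PPFA}_{\widehat K}(k\mbox{-}head)[\polytime]$ with cut point $\tfrac12\in\widehat K\cap(0,1]$; for part (2), starting from a $2$qfa with exact nonnegative cut point $\eta\in K\cap(0,1]$ and taking $g(x)=p_{M,acc}(x)-\eta$ gives $L\in\mathrm{2C_{=}PFA}_{\widehat K}(k\mbox{-}head)[\polytime]$ with exact cut point $\tfrac12$. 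I expect the main obstacle to be precisely this finite‑automaton implementation of the determinant algorithm rather than the linear algebra: realizing the real, only‑implicitly‑given matrix entries as acceptance probabilities, routing all signs through the Gap structure, addressing an $O(n^{2})$‑size dynamic program with a constant number of heads, and keeping the simulator's final acceptance probability a legitimate probability that lands on the correct side of the (exact) cut point.
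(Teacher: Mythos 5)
Your proposal follows essentially the same route as the paper: express $p_{M,acc}(x)$ via the Neumann series of the doubled evolution operator (your superoperator $\mathcal{E}$ is the paper's $D_x\otimes D_x$ up to vectorization and the sign-splitting $\tilde{D}_x$ trick), convert it by Cramer's rule into a ratio of determinants, realize those determinants as differences of acceptance and rejection probabilities of worst-case polynomial-time $k$-head 2pfa's via the Mahajan--Vinay clow-sequence algorithm with entries produced as acceptance probabilities and signs carried in the Gap bookkeeping, and finally assemble a machine whose acceptance probability sits on the correct side of (or exactly at) the cut point $1/2$. The only notable divergence is that you explicitly route the non-completely-halting case through the Dimension Lemma's stationary/undetermined decomposition before inverting, whereas the paper simply asserts convergence of the Neumann series and invertibility of $I-\tilde{D}_x\otimes\tilde{D}_x$; your extra care there is welcome but does not change the overall argument.
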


It is important to note that we impose no restriction (such as ``complete halting'' and ``absolutely halting'') on the running time of 2qfa's when  recognizing languages in $\twopqfa_{K}$ as well as in $\twocequalqfa_{K}$ in Theorem \ref{twoEQFA-SL}.

With the help of Lemma \ref{basic-inclusion}(1--3), we obtain the following immediate corollary of Theorem \ref{twoEQFA-SL}.

\begin{corollary}\label{EQFA-Cequalpfa}
There is an index $k\geq2$ such that, for any set $K\subseteq\complex$, $\twoeqfa_{K} \subseteq \mathrm{2C_{=}PFA}_{\widehat{K}}(k\mbox{-}head)[\polytime]\cap \co\mathrm{2C_{=}PFA}_{\widehat{K}}(k\mbox{-}head)[\polytime]$ and
$\twobqfa_{K}\subseteq \twoppfa_{\widehat{K}}(k\mbox{-}head)[\polytime]\cap \co\twoppfa_{\widehat{K}}(k\mbox{-}head)[\polytime]$.
\end{corollary}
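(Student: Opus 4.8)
The plan is to obtain this as an essentially immediate consequence of Theorem~\ref{twoEQFA-SL}, feeding into it the trivial inclusions and complement-closures of Lemma~\ref{basic-inclusion} and the amplitude reduction of Lemma~\ref{folke-complex}; all the substance already lives in Theorem~\ref{twoEQFA-SL}. First I would reduce to real amplitudes: for $K\subseteq\complex$ (with $\{0,1/2,1\}\subseteq K$, as standing throughout), Lemma~\ref{folke-complex} converts each $K$-amplitude 2qfa into an $\real$-amplitude 2qfa with the same acceptance, rejection and non-halting probabilities and the same head moves, doubling only the inner states; the resulting transition values (real and imaginary parts, with signs, of the original amplitudes) lie in $\widehat{K}$ under the reading of $\widehat{K}$ appropriate for complex $K$, and literally in $\widehat{K}$ for the standard choices $\complex,\algebraic,\real,\rational$. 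I would also fix a single head count $k\ge 2$ dominating both indices supplied by Theorem~\ref{twoEQFA-SL}, using that a $k'$-head 2pfa is simulated by a $k$-head one for $k\ge k'$ (the surplus heads park on $\cent$ and never move), so that one $k$ serves both clauses and every $K$.

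For the bounded-error clause: Lemma~\ref{basic-inclusion}(1) gives $\twobqfa_{K}\subseteq\twopqfa_{K}$, so Theorem~\ref{twoEQFA-SL}(1) gives $\twobqfa_{K}\subseteq\twoppfa_{\widehat{K}}(k\mbox{-}head)[\polytime]$. Applying the complement operator to this inclusion yields $\co\twobqfa_{K}\subseteq\co\twoppfa_{\widehat{K}}(k\mbox{-}head)[\polytime]$, and since $\twobqfa_{K}=\co\twobqfa_{K}$ by Lemma~\ref{basic-inclusion}(3), intersecting the two inclusions gives $\twobqfa_{K}\subseteq\twoppfa_{\widehat{K}}(k\mbox{-}head)[\polytime]\cap\co\twoppfa_{\widehat{K}}(k\mbox{-}head)[\polytime]$.

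For the error-free clause: an error-free 2qfa halts completely and has $p_{M,acc}(x)\in\{0,1\}$, with $x\in L$ exactly when $p_{M,acc}(x)=1$, so it witnesses $L$ with exact cut point $1\in K\cap(0,1]$; combined with the complement-closure $\twoeqfa_{K}=\co\twoeqfa_{K}$ of Lemma~\ref{basic-inclusion}(3) (equivalently, read off directly from Lemma~\ref{basic-inclusion}(4)) this gives $\twoeqfa_{K}\subseteq\twocequalqfa_{K}\cap\co\twocequalqfa_{K}$. Then Theorem~\ref{twoEQFA-SL}(2) gives $\twocequalqfa_{K}\subseteq\mathrm{2C_{=}PFA}_{\widehat{K}}(k\mbox{-}head)[\polytime]$; applying the complement operator to this inclusion and intersecting produces $\twoeqfa_{K}\subseteq\mathrm{2C_{=}PFA}_{\widehat{K}}(k\mbox{-}head)[\polytime]\cap\co\mathrm{2C_{=}PFA}_{\widehat{K}}(k\mbox{-}head)[\polytime]$, which is the first assertion.

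I do not anticipate a real obstacle, since Theorem~\ref{twoEQFA-SL} carries the whole weight; what needs care is purely bookkeeping. I would verify that the head count $k$ can be chosen uniformly across both clauses and over all $K$; that the $\real$-amplitude set emitted by Lemma~\ref{folke-complex} generates nothing beyond $\widehat{K}$ under the meaning intended for complex $K$; that the passage from an inclusion to its complemented form is legitimate (it is, by monotonicity of $\co(\cdot)$); and that neither the inner-state doubling of Lemma~\ref{folke-complex} nor the absence of any a priori time bound on the 2qfa's in $\twoeqfa_{K}$ and $\twobqfa_{K}$ causes trouble --- the worst-case polynomial running time of the simulating multi-head 2pfa's is provided entirely by Theorem~\ref{twoEQFA-SL}.
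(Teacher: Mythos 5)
Your proposal is correct and matches the paper's intended derivation: the paper states this corollary without proof as an immediate consequence of Theorem~\ref{twoEQFA-SL} combined with Lemma~\ref{basic-inclusion}(1--3), which is exactly the chain of inclusions and complement-closures you spell out (your extra care about Lemma~\ref{folke-complex} and uniformizing the head count $k$ is sensible bookkeeping the paper leaves implicit).
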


From \cite[Lemmas 1--2 \& Theorem 2]{Mac97}, it follows that, for each fixed index $k\in\nat^{+}$,  $\twoppfa_{\rational}(k\mbox{-}head)[\polytime]$ is \emph{properly} contained within the language family $\mathrm{PL}$. In the quantum setting, Nishimura and Yamakami \cite[Section 1]{NY09} earlier noted class inclusions $\twobqfa_{\algebraic}\subseteq \mathrm{PL} \subseteq \mathrm{P}$ as a consequence of a result in \cite{Wat03}. By combining those two results, we instantly obtain another  corollary of Theorem \ref{twoEQFA-SL}.
This corollary gives a limitation of the recognition power of unbounded-error 2qfa's having $\rational$-amplitudes.

\begin{corollary}\label{PL-vs-2PQFA}
$\twopqfa_{\rational} \subsetneqq \mathrm{PL}$ (and thus $\twobqfa_{\rational}\subsetneqq \mathrm{PL}$).
\end{corollary}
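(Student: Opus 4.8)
The plan is to combine the classical upper bound of Theorem~\ref{twoEQFA-SL} with Macarie's strict separation of polynomial-time multi-head 2pfa's from $\mathrm{PL}$. First I would apply Theorem~\ref{twoEQFA-SL}(1) with $K=\rational$; since $\rational$ is already closed under addition and multiplication we have $\widehat{\rational}=\rational$, so there is an index $k\geq2$ with $\twopqfa_{\rational}\subseteq \twoppfa_{\rational}(k\mbox{-}head)[\polytime]$. It is worth stressing, as the remark after Theorem~\ref{twoEQFA-SL} does, that this inclusion holds for \emph{all} of $\twopqfa_{\rational}$, with no halting assumption on the underlying 2qfa's, so nothing is lost here.

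Next I would invoke \cite[Lemmas 1--2 \& Theorem 2]{Mac97}, which states that for every fixed $k\in\nat^{+}$ the family $\twoppfa_{\rational}(k\mbox{-}head)[\polytime]$ is \emph{properly} contained in $\mathrm{PL}$; applying this to the particular value of $k$ obtained in the previous step yields a language $L_0\in \mathrm{PL}\setminus\twoppfa_{\rational}(k\mbox{-}head)[\polytime]$. Chaining the two facts gives $\twopqfa_{\rational}\subseteq \twoppfa_{\rational}(k\mbox{-}head)[\polytime]\subsetneqq \mathrm{PL}$, and in particular $L_0\notin\twopqfa_{\rational}$, so the inclusion $\twopqfa_{\rational}\subseteq\mathrm{PL}$ is strict. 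For the parenthetical statement I would then use Lemma~\ref{basic-inclusion}(1) to get $\twobqfa_{\rational}\subseteq\twopqfa_{\rational}$, whence $\twobqfa_{\rational}\subsetneqq\mathrm{PL}$ as well.

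Essentially all the work has already been done elsewhere: the substantive content is the polynomial-time multi-head simulation carried out in Theorem~\ref{twoEQFA-SL} and the $\mathrm{PL}$-separation of \cite{Mac97}, and the corollary is just the composition of these two inclusions. The only points requiring a moment's care are the amplitude bookkeeping ($\widehat{\rational}=\rational$, so the simulating 2pfa stays within $\rational$-transition probabilities) and the observation that Macarie's separation is available precisely for the head count $k$ that Theorem~\ref{twoEQFA-SL} delivers --- which is unproblematic because his result is uniform over every fixed $k$. I therefore do not anticipate any real obstacle in this step; it is a short deduction rather than a new argument.
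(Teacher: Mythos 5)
Your proposal is correct and follows exactly the paper's own argument: apply Theorem~\ref{twoEQFA-SL}(1) with $K=\rational$ (using $\widehat{\rational}=\rational$) and then compose with Macarie's proper inclusion $\twoppfa_{\rational}(k\mbox{-}head)[\polytime]\subsetneqq \mathrm{PL}$. The extra care you take about the head count $k$ and the amplitude bookkeeping is sound but does not change the route.
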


\begin{proof}
Theorem \ref{twoEQFA-SL}(1) ensures  the existence of an index $k\geq2$ for which $\twopqfa_{\rational}\subseteq \twoppfa_{\rational}(k\mbox{-}head)[\polytime]$ holds. Since $\twoppfa_{\rational}(k\mbox{-}head)[\polytime]\subsetneqq \mathrm{PL}$ \cite{Mac97}, it instantly follows that $\twopqfa_{\rational}\subsetneqq \mathrm{PL}$.
\end{proof}

\begin{corollary}
$\twopqfa_{\rational} \neq \twopqfa_{\complex}$.
\end{corollary}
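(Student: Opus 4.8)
The plan is to exploit the contrast already established between the rational-amplitude case and the complex-amplitude case: over $\rational$ the family is confined to recursive languages, while over $\complex$ it is not. First I would recall that rationals are complex numbers, so $\twopqfa_{\rational}\subseteq\twopqfa_{\complex}$ is immediate from the definitions; it therefore suffices to exhibit a language separating the two, and for this I would produce a non-recursive language lying in $\twopqfa_{\complex}$ but not in $\twopqfa_{\rational}$.

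Next I would invoke Corollary \ref{PL-vs-2PQFA}, which gives $\twopqfa_{\rational}\subsetneqq\mathrm{PL}$, together with the containment $\mathrm{PL}\subseteq\mathrm{P}$ noted in Section \ref{sec:overview}. Since every language in $\mathrm{P}$ is recursive, this shows $\twopqfa_{\rational}\subseteq\mathrm{REC}$; that is, rational-amplitude unbounded-error 2qfa's recognize only recursive languages. (Alternatively, one could bypass Corollary \ref{PL-vs-2PQFA} and argue directly from Theorem \ref{twoEQFA-SL}(1) that $\twopqfa_{\rational}\subseteq\mathrm{2PPFA}_{\rational}(k\mbox{-}head)[\polytime]\subseteq\mathrm{PL}\subseteq\mathrm{P}\subseteq\mathrm{REC}$, using Macarie's result \cite{Mac97}; either route works.)

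On the complex side, I would appeal to Proposition \ref{QFA-non-recursive-NO1}, which asserts $\onepqfa_{\complex}\nsubseteq\mathrm{REC}$ and hence $\twopqfa_{\complex}[\lintime]\nsubseteq\mathrm{REC}$; since $\twopqfa_{\complex}[\lintime]\subseteq\twopqfa_{\complex}$, the larger family $\twopqfa_{\complex}$ also contains a non-recursive language $L$. Combining the two observations, $L\in\twopqfa_{\complex}\setminus\twopqfa_{\rational}$ because $L\notin\mathrm{REC}\supseteq\twopqfa_{\rational}$, which yields $\twopqfa_{\rational}\neq\twopqfa_{\complex}$ (indeed $\twopqfa_{\rational}\subsetneqq\twopqfa_{\complex}$).

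I do not anticipate any real obstacle here: the statement is a straightforward corollary of results already proved in the excerpt, the only mild subtlety being to make sure the chain $\mathrm{PL}\subseteq\mathrm{P}\subseteq\mathrm{REC}$ is invoked so that ``rational-amplitude 2qfa's are recursive'' is stated explicitly before contrasting with Proposition \ref{QFA-non-recursive-NO1}.
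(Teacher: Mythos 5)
Your argument is correct and follows essentially the same route as the paper: it combines Corollary \ref{PL-vs-2PQFA} (giving $\twopqfa_{\rational}\subsetneqq \mathrm{PL}\subseteq \mathrm{REC}$) with Proposition \ref{QFA-non-recursive-NO1} (giving $\twopqfa_{\complex}\nsubseteq\mathrm{REC}$). Your added remark that the trivial inclusion $\twopqfa_{\rational}\subseteq\twopqfa_{\complex}$ upgrades the conclusion to a proper inclusion is a small bonus the paper does not state explicitly.
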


\begin{proof}
Since $\mathrm{PL}\subsetneqq \mathrm{REC}$ holds, Corollary \ref{PL-vs-2PQFA} implies that $\twopqfa_{\rational}\subsetneqq \mathrm{REC}$. Since $\twopqfa_{\complex}\nsubseteq \mathrm{REC}$ by Proposition \ref{QFA-non-recursive-NO1}, we can derive a conclusion that $\twopqfa_{\rational} \neq \twopqfa_{\complex}$.
\end{proof}

Theorem \ref{twoEQFA-SL} is a direct consequence of the following technical lemma regarding a classical simulation of 2qfa's on multi-head 2pfa's.

\begin{lemma}\label{2-head-gap-simulation}
Let $K$ be any subset of $\real$. There exists an index $k\geq2$ that satisfies the following. Given a $K$-amplitude 2qfa $M$, there exist two $k$head-2pfa's $N_1$ and $N_2$ such that (i) $N_1$ and $N_2$ have nonnegative $\widehat{K}$-transition probabilities, (ii) $N_1$ and $N_2$ halt in worst-case $n^{O(1)}$ time, and (iii)  it holds that,  for every $x$,  $(p_{N_1,acc}(x)-p_{N_1,rej}(x)) p_{M,acc}(x) = p_{N_2,acc}(x) - p_{N_2,rej}(x)$.
\end{lemma}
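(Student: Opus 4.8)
The idea is to write $p_{M,acc}(x)$ \emph{exactly} as a quotient $\mathrm{num}(x)/\mathrm{den}(x)$, where both $\mathrm{num}$ and $\mathrm{den}$ are accept-minus-reject ``gaps'' of worst-case polynomial-time $k$head-2pfa's with nonnegative $\widehat{K}$-transition probabilities, and then to let $N_1$ (resp.\ $N_2$) compute $\mathrm{den}$ (resp.\ $\mathrm{num}$), so that $(p_{N_1,acc}(x)-p_{N_1,rej}(x))\,p_{M,acc}(x)=p_{N_2,acc}(x)-p_{N_2,rej}(x)$ is immediate. By Lemma~\ref{folke-complex} I may assume $M=(Q,\Sigma,\delta,q_0,Q_{acc},Q_{rej})$ has real amplitudes, so each $U=U^{(x)}_{\delta}$ is real orthogonal. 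Fixing $n=|x|$, let $A$ be the ``survival operator'' $P_{non}U$ restricted to the non-halting configuration space $V_{non}$ (so $\|A\|\le1$), let $\widetilde{E}=P_{non}U^{T}\Pi_{acc}UP_{non}\succeq0$, and put $\mathcal{A}=A\otimes A$, $e=\mathrm{vec}(\widetilde{E})$, $r_0=\mathrm{vec}(|\phi_0\rangle\langle\phi_0|)$. These are a real matrix and two real vectors of size $O(n^{2})$ with all entries of absolute value $\le1$, and the definition of acceptance, after vectorizing $|\phi_0\rangle\langle\phi_0|$, gives $p_{M,acc}(x)=\langle\!\langle e\,|\sum_{t\ge0}\mathcal{A}^{t}|\,r_0\rangle\!\rangle$.

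\textbf{Step 1 (analytic core): an exact determinant identity valid for arbitrary $M$.} The key fact is that $e$ is orthogonal to every eigenvector of $\mathcal{A}$ whose eigenvalue has modulus $1$. Indeed, for \emph{every} unit vector $\phi\in V_{non}$ the non-halting norm is non-increasing, so the total halting probability of $M$ started from $\phi$ telescopes to at most $1$; hence $\|\Pi_{acc}UA^{t}\phi\|\to0$ for every $\phi$, so the PSD matrices $F_t=(A^{T})^{t}\widetilde{E}A^{t}$ tend to $0$, and for any $X$ with $AXA^{T}=\mu X$, $|\mu|=1$, one gets $\mathrm{Tr}(\widetilde{E}X)=\mu^{-t}\mathrm{Tr}(F_tX)\to0$, forcing $\mathrm{Tr}(\widetilde{E}X)=0$. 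Since a power-bounded matrix has semisimple unimodular spectrum, $e$ therefore lies in the span of $\mathcal{A}$'s generalized eigenspaces for eigenvalues of modulus $<1$, on which $\mathcal{A}$ is a strict contraction; so $p_{M,acc}(x)=\langle\!\langle e|(I-\mathcal{A})^{-1}|r_0\rangle\!\rangle$ whenever $I-\mathcal{A}$ is invertible, while if $\det(I-\mathcal{A})=0$ then $\mathrm{adj}(I-\mathcal{A})\,r_0\in\ker(I-\mathcal{A})$ pairs to $0$ with $e$. In all cases Cramer's rule yields
\[
\det(I-\mathcal{A})\cdot p_{M,acc}(x)\;=\;\langle\!\langle e\,|\,\mathrm{adj}(I-\mathcal{A})\,|\,r_0\rangle\!\rangle\;=:\;\mathrm{num}(x),
\]
with \emph{no} halting hypothesis on $M$ — this is exactly what lets the argument cover $\twopqfa_{K}$ and $\twocequalqfa_{K}$ without restrictions. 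Both $\det(I-\mathcal{A})$ and $\mathrm{num}(x)$ are fixed $\pm1$-weighted sums of products of $O(n^{2})$ of $M$'s amplitudes (the entries of $\mathcal{A}$ being products of amplitude pairs), and by Hadamard's bound both have absolute value at most a power of two $D(n)=2^{n^{O(1)}}$; dividing through by $D(n)$ keeps the identity and puts both sides in $[-1,1]$.

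\textbf{Step 2 (classical realization).} It remains to build, for an absolute constant $k\ge2$, worst-case polynomial-time $k$head-2pfa's $N_1,N_2$ with nonnegative $\widehat{K}$-transition probabilities whose gaps are $\det(I-\mathcal{A})/D(n)$ and $\mathrm{num}(x)/D(n)$. Two ingredients suffice. (a) A single amplitude $\alpha\in K\cap[-1,1]$, and more generally a \emph{signed product} $\prod_{l}\alpha_{l}$ of polynomially many amplitudes, is realized as the gap of a poly-time $2$pfa using only transition probabilities $(1\pm\alpha_l)/2\in\widehat{K}\cap[0,1]$ and $\tfrac12$: the gadget for $\alpha_l$ flips a coin of bias $(1+\alpha_l)/2$ and, on tails, relabels its eventual accept/reject outcome, which multiplies the running gap by $\alpha_l$; chaining these \emph{sequentially} multiplies gaps without adding any heads, and an overall sign is one final relabelling. (b) The determinant of an $O(n^{2})\times O(n^{2})$ matrix whose entries are products of amplitude pairs is, by the clow-sequence algorithm of \cite[Theorem~4]{MV97}, a $\pm1$-weighted sum over a logspace-enumerable family of combinatorial objects, each contributing the product of $O(n^{2})$ entries; to realize $\det(I-\mathcal{A})/D(n)$ one draws a uniformly random object (padded to the power of two $D(n)$ with $n^{O(1)}$ fair coins) — ``indexing'' and ``stepping'' through an object is logspace and hence done with a constant number of two-way heads doubling as work tape and as the pointers that read the relevant tape symbols $x_i,x_j$ — and then applies (a) to that object's signed entry-product. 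This gives $N_1$; for $N_2$ one expands $\mathrm{num}(x)=\sum_{i,j}e_i\,\mathrm{adj}(I-\mathcal{A})_{ij}\,(r_0)_j$ in the same way, noting that each $e_i$ and $(r_0)_j$ is itself a short $\widehat{K}$-expression in $M$'s amplitudes. Both machines run in $n^{O(1)}$ worst-case time and use nonnegative $\widehat{K}$-transition probabilities, establishing (i)--(iii).

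\textbf{Main obstacle.} Step~1, in particular the orthogonality of $e$ to $\mathcal{A}$'s modulus-one spectrum, is conceptually the subtle point — it is precisely what keeps the determinant identity exact when $M$ fails to halt completely — but it is technically short. The real labour is Step~2: re-engineering the \cite{MV97} determinant procedure so that (1) the real matrix $I-\mathcal{A}$ is never materialized but its entries are synthesized as signed $\widehat{K}$-weighted probabilistic moves, (2) every intermediate quantity remains a legitimate sub-probability of a $\widehat{K}$-transition $2$pfa, and (3) the clow-sequence enumeration, the products of $\Theta(n^{2})$ amplitudes, and the $D(n)$-normalization all fit within a \emph{constant} number of heads and worst-case polynomial time. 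This accounting, and the gadget bookkeeping it demands, is where the difficulty lies.
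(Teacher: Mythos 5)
Your proposal follows the same overall route as the paper's proof: linearize the quadratic acceptance probability by tensor-squaring the one-step operator, express $p_{M,acc}(x)$ as a quotient $\mathrm{adj}/\det$ of $I-\mathcal{A}$ via Cramer's rule, and then realize both determinants as acceptance-minus-rejection gaps of constant-head, worst-case polynomial-time 2pfa's by a probabilistic implementation of the Mahajan--Vinay clow-sequence algorithm, with a common normalization factor that cancels in the quotient. (The paper handles signs by splitting $D_x$ into nonnegative parts $D_x^{+},D_x^{-}$ and doubling the index set to $CONF_*$, which is equivalent to your accept/reject-relabelling gadget; your Step 2 is in substance the paper's Lemma \ref{det-simulation}.) The one point where you genuinely diverge---and improve on the paper---is the justification of the determinant identity when $M$ need not halt completely. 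The paper simply asserts that $\|\tilde{D}_x\otimes\tilde{D}_x\|<1$ and hence that $\sum_{k}(\tilde{D}_x\otimes\tilde{D}_x)^k=(I-\tilde{D}_x\otimes\tilde{D}_x)^{-1}$; this is exactly what can fail for the machines the lemma is meant to cover, since a 2qfa with a persistent non-halting cycle gives $\Pi_{non}U^{(x)}_{\delta}$ a unimodular eigenvalue and can make $I-\mathcal{A}$ singular. Your Step 1---using the telescoping halting probability to show that $e$ annihilates the unimodular spectral subspace of $\mathcal{A}$, so that the bilinear form $\langle e,\sum_{t}\mathcal{A}^{t}r_0\rangle$ converges and equals $\langle e,\mathrm{adj}(I-\mathcal{A})r_0\rangle/\det(I-\mathcal{A})$ when the determinant is nonzero, while both sides of the product identity vanish when it is zero---is precisely the missing argument that makes condition (iii) hold with no halting hypothesis. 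One small caution on Step 2: verify that the biases $(1\pm\alpha_l)/2$ and the complementary probabilities $1-p$ you invoke really lie in $\widehat{K}$; this needs $1/2\in K$ and either closure of $\widehat{K}$ under negation or an explicit split of each amplitude into its positive and negative parts as the paper does, but this is the same implicit bookkeeping the paper relies on.
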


Concerning a quantum function $f\in \#\twoqfa_{K}$ generated by an appropriately chosen 2qfa $M$, if we apply Lemma \ref{2-head-gap-simulation} to $M$, then we obtain two appropriate $k$head-2pfa's $N_1$ and $N_2$ satisfying Conditions (i)--(iii) of the lemma. By setting quantum functions $g_1$, $g_2$, $h_1$, and $h_2$ as  $g_1(x)=p_{N_1,acc}(x)$, $g_2(x)=p_{N_1,rej}(x)$, $h_1(x)=p_{N_2,acc}(x)$, and $h_2(x)=p_{N_2,rej}(x)$ for all inputs $x$, it immediately follows that $(g_1(x)-g_2(x))f(x) = h_1(x)-h_2(x)$.
By the definition of $\#\twopfa_{K}(k\mbox{-}head)[\polytime]$ given in Section \ref{sec:classical-cutpoint}, we conclude that $g_1$, $g_2$, $h_1$, and $h_2$ all belong to $\#\twopfa_{\widehat{K}}(k\mbox{-}head)[\polytime]$. This conclusion
immediately  leads to the following corollary concerning the quantum functions in $\#\twoqfa_{K}$.

\begin{corollary}
Let $K\subseteq\real$ with $K\neq\setempty$. There exists an index $k\geq2$ such that, for any function $f\in \#\twoqfa_{K}$, there exist four functions $g_1,g_2,h_1,h_2\in   \#\twopfa_{\widehat{K}}(k\mbox{-}head)[\polytime]$ satisfying $(g_1(x)-g_2(x))f(x) = h_1(x)-h_2(x)$ for every input $x$.
\end{corollary}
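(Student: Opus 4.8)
The plan is to read the corollary off Lemma~\ref{2-head-gap-simulation} together with the definitions of $\#\twoqfa_{K}$ and of the function class $\#\twopfa_{\widehat{K}}(k\mbox{-}head)[\polytime]$ from Section~\ref{sec:classical-cutpoint}. Fix $K\subseteq\real$ with $K\neq\setempty$ and let $k\geq2$ be the index supplied by Lemma~\ref{2-head-gap-simulation}; note that it depends only on $K$, which is what makes the quantifier order in the corollary work. Given an arbitrary $f\in\#\twoqfa_{K}$, by definition of $\#\twoqfa_{K}$ there is a $K$-amplitude 2qfa $M$ with $f(x)=p_{M,acc}(x)$ for all $x$. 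Applying Lemma~\ref{2-head-gap-simulation} to $M$ produces two $k$head-2pfa's $N_{1}$ and $N_{2}$ that have nonnegative $\widehat{K}$-transition probabilities, halt in worst-case $n^{O(1)}$ time, and satisfy $(p_{N_{1},acc}(x)-p_{N_{1},rej}(x))\,p_{M,acc}(x)=p_{N_{2},acc}(x)-p_{N_{2},rej}(x)$ for every input $x$.

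Next I would set $g_{1}(x)=p_{N_{1},acc}(x)$, $g_{2}(x)=p_{N_{1},rej}(x)$, $h_{1}(x)=p_{N_{2},acc}(x)$, and $h_{2}(x)=p_{N_{2},rej}(x)$, so that the desired identity $(g_{1}(x)-g_{2}(x))f(x)=h_{1}(x)-h_{2}(x)$ is exactly condition (iii) above. It then only remains to verify that each of $g_{1},g_{2},h_{1},h_{2}$ lies in $\#\twopfa_{\widehat{K}}(k\mbox{-}head)[\polytime]$. For $g_{1}$ and $h_{1}$ this is immediate, since $N_{1}$ and $N_{2}$ are worst-case polynomial-time $k$head-2pfa's with $\widehat{K}$-transition probabilities. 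For $g_{2}$ and $h_{2}$ I would use the usual device of interchanging $Q_{acc}$ and $Q_{rej}$: the machines $N_{1}'$ and $N_{2}'$ obtained from $N_{1}$ and $N_{2}$ by this swap are again $k$head-2pfa's with the same $\widehat{K}$-transition probabilities and the same worst-case time bound, and satisfy $p_{N_{1}',acc}=g_{2}$ and $p_{N_{2}',acc}=h_{2}$; hence $g_{2},h_{2}\in\#\twopfa_{\widehat{K}}(k\mbox{-}head)[\polytime]$ as well. This yields the four functions required by the corollary.

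Phrased this way, the corollary contains nothing beyond Lemma~\ref{2-head-gap-simulation}, so the genuinely hard step is the proof of that lemma rather than of the corollary itself. The route I would take there is the one announced in Section~\ref{sec:overview}: use a resolvent-type identity for the truncated time-evolution operator $U_{\delta}^{(x)}\Pi_{non}$ (expanding $(I-A)^{-1}=\sum_{k\geq0}A^{k}$ on the controlled part of the dynamics) to write $p_{M,acc}(x)$ as a combination of determinants of matrices of dimension $O(n)$, then realize the individual entries of those matrices as acceptance probabilities of small auxiliary 2pfa's, and finally implement the $\mathrm{GapL}$ determinant algorithm of Mahajan and Vinay on a $k$head-2pfa, interleaving the computation of these probabilistically presented entries with the determinant computation. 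The main obstacle, and where I would expect the bulk of the effort to lie, is that the target matrix is a \emph{real} matrix available only implicitly through probabilities (not an explicit integer input, as in \cite{MV97}), that the whole construction has to run in worst-case polynomial time, and that every transition probability must remain nonnegative and inside $\widehat{K}$; the two machines $N_{1},N_{2}$ and the differences $p_{N_{i},acc}-p_{N_{i},rej}$ in the statement are precisely the bookkeeping device that lets one encode the resulting signed and possibly non-rational values. None of this is needed for the corollary once Lemma~\ref{2-head-gap-simulation} is in hand.
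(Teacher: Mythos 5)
Your proposal matches the paper's own derivation exactly: apply Lemma \ref{2-head-gap-simulation} to the 2qfa witnessing $f$, set $g_1,g_2,h_1,h_2$ to the acceptance/rejection probabilities of $N_1,N_2$, and read off the identity from condition (iii). Your extra remark that $g_2$ and $h_2$ require swapping $Q_{acc}$ and $Q_{rej}$ is a small point the paper leaves implicit, and it is handled correctly.
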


Assuming the validity of Lemma \ref{2-head-gap-simulation},
let us prove Theorem \ref{twoEQFA-SL} below.

\vs{-2}
\begin{proofof}{Theorem \ref{twoEQFA-SL}}
Here, we intend to prove only the first containment $\twopqfa_{K}\subseteq \twoppfa_{\widehat{K}}(k\mbox{-}head)[\polytime]$ since the second one is in essence similar to the first one. Take an arbitrary language $L$ in $\twopqfa_{K}$, witnessed by a certain $K$-amplitude 2qfa, say, $M$; that is, for any input $x$, if $x\in L$, then $p_{M,acc}(x)>1/2$, and otherwise $p_{M,rej}(x)\geq1/2$.
By Lemma \ref{2-head-gap-simulation}, there are two appropriate $k$head-2pfa's $N_1$ and $N_2$ that satisfy Conditions (i)--(iii) of the lemma.
Let us define a new $k$head-2pfa $N$ that behaves in the following way.
On input $x$, from the initial inner state $q_0$, $N$ enters another inner state $q_2$ with probability $1/2$, and $q_1$ and $q_3$ with probability $1/4$ each. Starting in $q_1$, $N$ simulates $N_1$ on $x$, whereas, from $q_2$, $N$ simulates $N_2$ and then flips its outcome (i.e., either accepting or rejecting states). From $q_3$, $N$ enters $q_{acc}$ and $q_{rej}$ with equal probability $1/2$.

It follows that $p_{N,acc}(x) = \frac{1}{4}p_{N_1,acc}(x)+ \frac{1}{2}p_{N_2,rej}(x) + \frac{1}{8}$ and $p_{N,rej}(x) = \frac{1}{4}p_{N_1,rej}(x)+ \frac{1}{2}p_{N_2,acc}(x) +\frac{1}{8}$.
From these equalities and by the definition of $N$, we obtain $p_{N,acc}(x) - p_{N,rej}(x) = \frac{1}{4}(p_{N_1,acc}(x)-p_{N_1,rej}(x)) - \frac{1}{2}(p_{N_2,acc}(x)-p_{N_2,rej}(x))$.
Assume that $x\in L$. Since $p_{M,acc}(x)>1/2$, from the equality $(p_{N_1,acc}(x)-p_{N_1,rej}(x)) p_{M,acc}(x) = p_{N_2,acc}(x)-pp_{N_2,rej}(x)$, it follows that $2(p_{N_2,acc}(x)-p_{N_2,rej}(x))<p_{N_1,acc}(x)-p_{N_1,rej}(x)$. Thus, we obtain $p_{N,acc}(x)-p_{N,rej}(x)>0$. Since $N$ halts absolutely, this is equivalent to $p_{N,acc}(x)>1/2$. Similarly, if $x\notin L$, then we obtain $p_{N,rej}(x)\geq 1/2$.
We therefore conclude that $L$ belongs to $\twoppfa_{\widehat{K}}(k\mbox{-}head)[\polytime]$, as requested.
\end{proofof}

At last, we return to  Lemma \ref{2-head-gap-simulation} and present its proof.

\vs{-2}
\begin{proofof}{Lemma \ref{2-head-gap-simulation}}
Let $\Sigma$ be any alphabet and let $M =(Q,\Sigma,\delta,q_0,Q_{acc},Q_{rej})$ be any $\real$-amplitude 2qfa with acceptance probability $p_{M,acc}(x)$ and rejection probability $p_{M,rej}(x)$ on input $x\in\Sigma^*$.

For convenience, assuming that $Q=\{q_0,q_1,\ldots,q_c\}$ with a constant $c\in\nat^{+}$, we define $Q_{acc}=\{q_{i} \mid i\in A\}$ and $Q_{rej} =\{q_{i}\mid i\in R\}$ for certain index sets $A$ and $R$.
For later use, we set $\delta_{+1}(q,\sigma,p,h) =  \delta(q,\sigma,p,h)$ if $\delta(q,\sigma,p,h)>0$, and $0$ otherwise. Similarly, if $\delta(q,\sigma,p,h)<0$, then we set $\delta_{-1}(q,\sigma,p,h)$ as  $-\delta(q,\sigma,p,h)$; otherwise, we set it as $0$. Clearly, it follows that $\delta(q,\sigma,p,h) = \delta_{+1}(q,\sigma,p,h) - \delta_{-1}(q,\sigma,p,h)$.

Hereafter, let $x$ denote an arbitrary input string of length $n$ in $\Sigma^*$.  Let $CONF_n=Q\times [0,n+1]_{\integer}$ and set $N=|CONF_n|$. First, we review how to evaluate the acceptance probability $p_{M,acc}(x)$ of $M$ on $x$.
Recall a transition matrix $U_{\delta}^{(x)}$, which is
an $N\times N$  real matrix induced from $\delta$ on the input $x$.
Note that, for any $(q,\ell),(p,m)\in CONF_n$, the $((q,\ell),(p,m))$-entry of  $U^{(x)}_{\delta}$ matches $\delta(q,x_{\ell},p,m-\ell)$ if $|m-\ell|\leq1$, and $0$ otherwise.

We denote by $P_{non}$ the projection operator onto the space spanned by  non-halting configurations. Let $D_x=U^{(x)}_{\delta}P_{non}$, which precisely describes a single step of $M$ on the input $x$ if its inner state is not a halting state.
It is easy to see that the acceptance probability of $M$ on $x$ at time $k$ equals
\begin{eqnarray*}
\sum_{j\in A} \sum_{\ell\in[0,n+1]_{\integer}} |\bra{q_j,\ell} D_x^k \ket{q_0,0}|^2
&=& \sum_{j\in A} \sum_{\ell\in[0,n+1]_{\integer}} \bra{q_j,\ell}D_x^k\ket{q_0,0}  \bra{q_j,\ell}D_x^k\ket{q_0,0} \\
&=&
\sum_{j \in A} \sum_{\ell\in[0,n+1]_{\integer}} \bra{(q_j,\ell)}\bra{(q_j,\ell)}(D_x^k\otimes D_x^k)\ket{(q_0,0)}\ket{(q_0,0)}.
\end{eqnarray*}
The last vector $y_{ini}=\qubit{q_0,0}\qubit{q_0,0}$ is associated with the initial configuration of $M$ and the vector $y_{acc,j,\ell} = \qubit{(q_{j},\ell)}\qubit{(q_j,\ell)}$ is associated with the accepting state $q_{j}$ in $Q_{acc}$.  Since $(D_x\otimes D_x)^k = D_x^k\otimes D_x^k$ for any index  $k\in\nat$, the total acceptance probability $p_{M,acc}(x)$ of $M$ on $x$ exactly matches
\begin{equation}\label{eqn:p-acc}
p_{M,acc}(x) = \sum_{k=0}^{\infty} \left( \sum_{j\in A} \sum_{\ell\in[0,n+1]_{\integer}} y_{acc,j,\ell}^{T} (D_x\otimes D_x)^k y_{ini}\right) =  y_{acc}^{T} \left(\sum_{k=0}^{\infty}(D_x^k\otimes D_x^k)\right) y_{ini},
\end{equation}
where $y_{acc} = \sum_{j\in A}\sum_{\ell\in[0,n+1]_{\integer}}y_{acc,j,\ell}$.

We are now focused on the operator $\sum_{k=0}^{\infty}(D_x^k\otimes D_x^k)$ in Eq.(\ref{eqn:p-acc}).
First, we express $D_x$ as a difference $ D_x^{+} - D_x^{-}$ of two nonnegative real matrices $D_x^{+}$ and $D^{-}_x$. For this purpose, we define $D_x^{+}[i,j] = D_x[i,j]$ if $D_x[i,j]>0$, and $D_x^{+}[i,j] =0$ otherwise. Similarly, let $D_x^{-}$ be defined as $D_x^{-}[i,j] = - D_x[i,j]$ if $D_x[i,j]<0$, and  $D_x^{-}[i,j] =0$ otherwise.
In addition, we define $\tilde{D}_x = \lmatrices{D_x^{+}}{D_x^{-}}{D_x^{-}}{D_x^{+}}$.
To specify each entry in $\tilde{D}_x$, we conveniently use an index set $Q\times[0,n+1]_{\integer}\times\{\pm1\}$ so that,  for any two indices $(p,m,b),(q,\ell,a)\in Q\times[0,n+1]_{\integer}\times\{\pm1\}$,  the entry $\tilde{D}_x[(p,m,b),(q,\ell,a)]$ equals $\delta_{ab}(q,x_{\ell},p,m-\ell)$ if $|m-\ell|\leq1$, and $0$ otherwise.

At this point, to make our argument readable, we intend to modify the current coordinate system used for $\tilde{D}_x$,
simply by mapping each point $((p,m,b),(q,\ell,a))$ to a new point $((p,q),(m,\ell),(b,a))$ in $Q^2\times[0,n+1]_{\integer}^2\times\{\pm1\}^2$ and by reassigning to each $(((p_1,p_2),(m_1,m_2),(b_1,b_2)),((q_1,q_2),(\ell_1,\ell_2),(a_1,a_2)))$-entry of $\tilde{D}_x\otimes \tilde{D}_x$ the value of   $\delta_{a_1b_1}(q_1,x_{\ell_1},p_1,m_1-\ell_1) \delta_{a_1b_2}(q_2,x_{\ell_2},p_2,m_2-\ell_2)$ if $|m_1-\ell_1|,|m_2-\ell_2|\leq1$; $0$ otherwise.To express this new coordinate system, we write $CONF_*$ for $Q^2\times[0,n+1]_{\integer}^2\times\{\pm1\}^2$.
For simplicity, we regard each element in $CONF_*$ as a new configuration
of $M$.
Hereafter, $\tilde{D}_x$ is treated as a matrix whose index set is $CONF_*$, and thus $\tilde{D}_x\otimes \tilde{D}_x$ is viewed as an $N'\times N'$ nonnegative real matrix, where $N'=|CONF_*|$.

For each index $a\in\{\pm1\}$, we expand $y_{ini}$ to $\tilde{y}_{ini,a} = \qubit{(q_0,q_0),(0,0),(a,a)}$ and $y_{acc}$ to $\tilde{y}_{acc,a} = \sum_{j\in A} \sum_{\ell\in[0,n+1]_{\integer}} \qubit{(q_{acc,j},q_{acc,j}),(\ell,\ell),(a,a)}$.
It then follows from Eq.(\ref{eqn:p-acc}) that
\begin{equation}\label{eqn:expand-p-acc}
p_{M,acc}(x) = \tilde{y}_{acc,+1}^{T}(\sum_{k=0}^{\infty} (\tilde{D}_x\otimes \tilde{D}_x)^k) \tilde{y}_{ini,+1} -  \tilde{y}_{acc,-1}^{T}(\sum_{k=0}^{\infty}(\tilde{D}_x\otimes \tilde{D}_x)^k) \tilde{y}_{ini,-1}.
\end{equation}
In a similar way, we define $\tilde{y}_{rej,+}$ and $\tilde{y}_{rej,-}$ to characterize $p_{M,rej}(x)$.

We need to enumerate all configurations in $CONF_*$ by introducing an appropriate ordering. Let $conf_1= ((q_1,q_2),(\ell_1,\ell_2),(a_1,a_2))$ and $conf_2 = ((p_1,p_2),(m_1,m_2),(b_1,b_2))$ be any two configurations in $CONF_*$. For convenience, we set $h_1=((q_1,q_2),(a_1,a_2))$ and $h_2=((p_1,p_2),(b_1,b_2))$. Here, we assume an appropriate ordering on $Q^2\times\{\pm1\}^2$. Now, we write $conf_2\leq conf_1$ iff (1) $h_2<h_1$, (2) $h_2=h_1$ and $m_1<\ell_1$, or  (3) $h_2=h_1$, $m_1=\ell_1$, and $m_2\leq \ell_2$. Moreover, we write $conf_2<conf_1$ exactly when  $conf_2\leq conf_1$ and $conf_2\neq conf_1$. This relation $<$  forms a linear ordering.
Using this ordering, we enumerate all elements in  $CONF_*$ as $\{i_1,i_2,\ldots,i_{N'}\}$.
In what follows, we intend to identify each element $conf$ in $CONF_*$ with a number $i$ so that $conf$ is the $i$th element in $CINF_*$; with this convention, we slightly abuse the notation by writing $(-1)^{conf}$ to mean $(-1)^{i}$ as long as this expression is clear from the context.

Since the infinite sum $\sum_{k=0}^{\infty}(\tilde{D}_x\otimes \tilde{D}_x)^k$ converges and $\|\tilde{D}_x\otimes \tilde{D}_x\|<1$ holds, it follows that $I-\tilde{D}_x\otimes \tilde{D}_x$ is invertible and that  $(I-\tilde{D}_x\otimes \tilde{D}_x)^{-1} = \sum_{k=0}^{\infty}(\tilde{D}_x\otimes \tilde{D}_x)^k$. From the last equality, it follows that $\tilde{y}_{acc,a}^{T} (I-\tilde{D}_x\otimes \tilde{D}_x)^{-1} \tilde{y}_{ini,a} = \sum_{j\in A}\sum_{\ell\in[0,n+1]_{\integer}} (I-\tilde{D}_x\otimes \tilde{D}_x)^{-1}[\hat{j}_{a,\ell},i_{0,a}]$, where $i_{0,a}$ is $((q_0,q_0),(0,0),(a,a))$ and $\hat{j}_{a,\ell}$ is $((q_{acc,j},q_{acc,j}),(\ell,\ell),(a,a))$ for indices $a\in\{\pm1\}$ and $\ell\in[0,n+1]_{\integer}$.

We therefore establish the equation $p_{M,acc}(x) = \sum_{a\in\{\pm1\}} \sum_{j\in A}\sum_{\ell\in[0,n+1]_{\integer}} (I-\tilde{D}_x\otimes \tilde{D}_x)^{-1}[\hat{j}_{a,\ell},i_{0,a}]$.
By Laplace's formula (\ie $C^{-1}$ equals the adjoint of $C$ divided by $det(C)$), we conclude that
\begin{equation}\label{eqn:det}
p_{M,acc}(x) = \sum_{a\in\{\pm1\}} \sum_{j\in A} \sum_{\ell\in[0,n+1]_{\integer}} \frac{(-1)^{i_{0,a}+\hat{j}_{a,\ell}}det[(I-\tilde{D}_x\otimes \tilde{D}_x)_{i_{0,a},\hat{j}_{a,\ell}}]}{det(I-\tilde{D}_x\otimes \tilde{D}_x)},
\end{equation}
where the notation ``$C_{i,j}$'' expresses a submatrix obtained from matrix $C$ by deleting row $i$ and column $j$.
Hence, it follows from Eq.(\ref{eqn:det}) that $p_{M,acc}(x)>1-\varepsilon$ iff $\sum_{a\in\{\pm1\}} \sum_{j\in A} \sum_{\ell\in[0,n+1]_{\integer}} (-1)^{i_{0,a}+\hat{j}_{a,\ell}}det[(I-\tilde{D}_x\otimes \tilde{D}_x)_{i_{0,a},\hat{j}_{a,\ell}}] > (1-\varepsilon) det(I-\tilde{D}_x\otimes \tilde{D}_x)$.

Next, we state our key lemma. In the lemma, $f_1$ and $f_2$ denote two special functions  defined by $f_1(x)=(\frac{1}{4|Q|^2}2^{-2\ceilings{\log_2(n+2)}})^{8|Q|^2(n+2)^2-1}$ and $f_2(x) = \frac{1}{2|A|}2^{-2\ceilings{\log_2(n+2)}} f_1(x)$ for every $x\in\Sigma^*$.

\begin{lemma}\label{det-simulation}
There exist a number $k\in\nat^{+}$ and a $k$head-2pfa $N_1$ such that $det[I-\tilde{D}_x\otimes\tilde{D}_x] = f_1(x) [p_{N_1,acc}(x)-p_{N_1,rej}(x)]$ for all $x$. Similarly, a certain $k$head-2pfa $N_2$ satisfies that  $\sum_{a\in\{\pm1\}} \sum_{j\in A} \sum_{\ell\in[0,n+1]_{\integer}} (-1)^{i_{0,a}+\hat{j}_{a,\ell}} det[(I-\tilde{D}_x\otimes\tilde{D}_x)_{i_{0,a},\hat{j}_{a,\ell}}] = f_2(x) [p_{N_2,acc}(x)-p_{N_2,rej}(x)]$ for all $x$. Moreover, $f_1$ in the first part can be replaced with $f_2$ by slightly modifying $N_1$ into another machine $N'_1$.
\end{lemma}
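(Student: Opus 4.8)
The plan is to realize the integer/rational computation of determinants of $I-\tilde{D}_x\otimes\tilde{D}_x$ (and of the cofactor sums) by a multi-head 2pfa, following the $\mathrm{GapL}$-style algorithm for integer determinants of \cite[Theorem~4]{MV97}, but adapted so that the matrix entries — which are products $\delta_{a_1b_1}(\cdots)\delta_{a_2b_2}(\cdots)$ of real amplitudes — are themselves realized as \emph{differences of acceptance probabilities} of auxiliary probabilistic subroutines rather than being read off an input tape. First I would fix the explicit ordering on $CONF_*$ already set up in the proof (so rows and columns are indexed by $\{1,\ldots,N'\}$, $N'=O(n^2)$), and recall the classical combinatorial formula $\det(C)=\sum_{\sigma}\operatorname{sgn}(\sigma)\prod_i C[i,\sigma(i)]$ rewritten via the clow-sequence (closed ordered walk) expansion of Mahajan--Vinay, which is what makes the determinant computable by a logspace-style device: a clow sequence of length $N'$ through the ordered vertex set can be generated by a finite-control machine that remembers only the current head position, the current clow head, and a sign bit, while the ``counters'' for vertex indices up to $N'=O(n^2)$ are maintained by the $k$ tape heads along the length-$n$ input (each head giving a counter of range $O(n)$, and a constant number of heads giving range $O(n^{O(1)})$; an $O(\log n)$-bit counter needs only $O(1)$ heads). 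Thus the skeleton of $N_1$ is: nondeterministically (probabilistically, with fair coins) guess a clow sequence, verify via the head counters that it is a legal closed walk of total length $N'$, accumulate its sign into the accept/reject distinction, and — crucially — at each edge traversal ``charge'' the probability by the corresponding matrix entry.

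The key technical step is the second one: implementing the multiplicative weight $C[i,j]=\bigl(I-\tilde{D}_x\otimes\tilde{D}_x\bigr)[i,j]$ as a probability. Each such entry is either $\delta_{ij}$ (a $0/1$ constant, from the identity) minus a product of two amplitude-valued quantities, or directly a product of two amplitudes. Since the amplitudes lie in $K$ and $\widehat{K}$ is closed under addition and multiplication, a fixed $O(1)$-state probabilistic gadget with $\widehat{K}$-transition probabilities can, on demand, enter an ``accept'' state with probability proportional to the positive part $\delta_{+1}$ of a given entry and a ``reject'' state with probability proportional to its negative part $\delta_{-1}$; composing two such gadgets in tensor fashion (using two extra heads to keep the two $\tilde{D}_x$-factors' positions straight) realizes the product, and the recorded sign bit from the clow expansion combines with the $\pm$ sign of the gadget to decide whether the whole branch contributes to $p_{N_1,acc}$ or to $p_{N_1,rej}$. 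Because every edge traversal multiplies the surviving probability mass by a factor of absolute value at most $\frac{1}{4|Q|^2}2^{-2\ceilings{\log_2(n+2)}}$ (a bound on $|C[i,j]|$ coming from $\|\tilde D_x\|<1$ and the normalization of directions/head positions), and a length-$N'$ clow sequence performs $N'-1 = 8|Q|^2(n+2)^2-1$ traversals, the total probability mass carried by the ``genuine'' branches is exactly $f_1(x)$ times the signed count $\det(C)$; the leftover probability is split evenly between a fresh accepting and a fresh rejecting state so that it cancels in $p_{N_1,acc}(x)-p_{N_1,rej}(x)$. This yields $\det(C)=f_1(x)\bigl(p_{N_1,acc}(x)-p_{N_1,rej}(x)\bigr)$, and the $n^{O(1)}$ worst-case time bound is immediate since every branch runs a length-$N'=O(n^2)$ clow walk with $O(1)$ work per step.

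For the cofactor sum, the machine $N_2$ is the same construction applied to the submatrices $(I-\tilde D_x\otimes\tilde D_x)_{i_{0,a},\hat j_{a,\ell}}$: we additionally (probabilistically, with uniform weight $\tfrac{1}{2|A|}2^{-2\ceilings{\log_2(n+2)}}$ over the $|A|\cdot(n+2)$ choices of $j\in A$ and $\ell$, and the two choices of $a\in\{\pm1\}$) pick which cofactor to evaluate, prepend the row/column deletion into the clow-walk vertex ordering, and fold the parity $(-1)^{i_{0,a}+\hat j_{a,\ell}}$ into the sign bit; the extra uniform branching factor is exactly what turns $f_1$ into $f_2=\frac{1}{2|A|}2^{-2\ceilings{\log_2(n+2)}}f_1$. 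The final sentence of the lemma — that $f_1$ in the first part can be replaced by $f_2$ — is handled by padding $N_1$ with a dummy uniform $(2|A|\cdot 2^{2\ceilings{\log_2(n+2)}})$-way branch that merely splits probability evenly between a new accept and a new reject state, scaling the gap down by the same factor without changing its sign, giving $N'_1$. I expect the main obstacle to be the bookkeeping in the tensor-product / coordinate change on $CONF_*$: one must verify carefully that the $k$ heads suffice to index both $\tilde D_x$-factors' positions (each in $[0,n+1]_\integer$) \emph{and} run the clow-walk counter up to $N'$ \emph{and} track the $(p_1,p_2)$, $(a_1,a_2)$ components of the current configuration, all with a constant number of heads and constant control — this is the place where an off-by-one in the range of a head counter, or an oversight in the clow-sequence sign convention, would break the exact identity. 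Everything else is a routine, if lengthy, simulation argument.
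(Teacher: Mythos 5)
Your overall strategy is the same as the paper's: implement the Mahajan--Vinay clow-sequence expansion of the determinant on a multi-head 2pfa, using the heads as $O(\log n)$-range counters, generating the vertices of the clow walk uniformly at random, realizing the entries of $I-\tilde{D}_x\otimes\tilde{D}_x$ as transition probabilities whose signs are folded into the accept/reject decision, neutralizing all dead branches by splitting their mass evenly between accepting and rejecting states, and obtaining the cofactor sum (and the passage from $f_1$ to $f_2$) by an extra uniform choice of $(a,j,\ell)$. The paper carries this out via four explicit subroutines (a $4|Q|^2(n+2)^2$-step counter, head copying, uniform generation of a configuration in $CONF_*$ with probability exactly $\tfrac{1}{4|Q|^2}2^{-2\ceilings{\log_2(n+2)}}$, and comparison of two configurations in the chosen linear order), which is precisely the bookkeeping you defer to the end.

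There is, however, one genuinely wrong step in your accounting of where $f_1$ comes from. You assert that each edge traversal scales the probability by ``a factor of absolute value at most $\tfrac{1}{4|Q|^2}2^{-2\ceilings{\log_2(n+2)}}$, a bound on $|C[i,j]|$.'' This is false: the diagonal entries $1-\delta_{+1}(\cdot)\delta_{+1}(\cdot)$ of $I-\tilde{D}_x\otimes\tilde{D}_x$ can be arbitrarily close to $1$, and in any case an upper \emph{bound} on the entries could never yield the \emph{exact} identity $\det(C)=f_1(x)(p_{N_1,acc}(x)-p_{N_1,rej}(x))$. The factor $\tfrac{1}{4|Q|^2}2^{-2\ceilings{\log_2(n+2)}}$ is not a property of the matrix at all; it is the exact probability with which the uniform-generation subroutine produces any one fixed element of $CONF_*$ (the power of two, rather than $1/(n+2)^2$, being an artifact of building each head position from fair coin flips, with overshooting branches discarded into the neutral accept/reject split). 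The branch for a clow sequence $C$ then carries probability $(\text{generation probability})^{m}\cdot|weight(C)|$, where $m$ is the number of uniform generations performed; one must check that $m$ is the same on every surviving branch and equals $8|Q|^2(n+2)^2-1$, whereas your count $N'-1=4|Q|^2(n+2)^2-1$ is off by a factor of two because the algorithm performs two uniform generations per step of the clow walk (one to propose the next vertex, one to decide whether to close the current clow or open a new one with a larger head). Since your construction does contain the fair-coin guessing step, the machine you describe is essentially the right one, but the verification of the exact identity must be rerouted through the generation probabilities rather than through a bound on the matrix entries.
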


Our proof of this lemma is based on a dextrous implementation of the Mahajan-Vinay algorithm \cite{MV97} on multi-head 2pfa's to compute two special  determinants.
For readability, however, the proof of Lemma \ref{det-simulation} is postponed until Section \ref{sec:proof-of-Lemma}.
It is now easy to complete the proof of Lemma \ref{2-head-gap-simulation}.

\vs{-2}
\begin{proofof}{Lemma \ref{2-head-gap-simulation}}
Take $k$head-2pfa's $N'_1$ and $N_2$ given in Lemma \ref{det-simulation}. By Eq.(\ref{eqn:det}), it follows that $p_{M,acc}(x) = (p_{N_2,acc}(x) - p_{N_2,rej}(x))/(p_{N'_1,acc}(x)-p_{N'_1,rej}(x))$, as requested.
\end{proofof}

\subsection{Proof of Lemma \ref{det-simulation}}\label{sec:proof-of-Lemma}

We still need to prove Lemma \ref{det-simulation} for the completion of the proof of Lemma \ref{2-head-gap-simulation}. For this purpose, we must probabilistically ``generate'' the determinants of two real matrices $I-\tilde{D}_{x}\times \tilde{D}_{x}$ and $(I-\tilde{D}_{x}\times \tilde{D}_{x})_{i_{0,a},\hat{j}_{a,\ell}}$.
To carry out this task, we utilize
an elegant GapL-algorithm of
Mahajan and Vinay \cite{MV97}, who demonstrated in the proof of \cite[Theorem 4]{MV97} how to compute the determinant of an integer matrix using ``closed walk (clow).''
We intend to implement their algorithm on $k$head-2pfa's. For our implementation, however, we need to make various changes to the original GapL-algorithm. Such changes are necessary because a target matrix of their algorithm is an integer matrix and is also given as ``input''; on the contrary, in our case, our target matrix is a real matrix and , moreover, we must produce  ``probabilities'' that express the desired determinants of a given integer matrix.  For this purpose, we produce a probabilistic computation tree whose accepting/rejecting computation paths contribute to the calculation of the determinant of the matrix.
We continue using the notation given in Section \ref{sec:multi-head-FA}.
Additionally, we introduce a basic notion of ``clow sequences'' in terms of our transition amplitudes. For convenience, we use the notation $T$ to express either $I-\tilde{D}_x\otimes \tilde{D}_x$ or  $(I-\tilde{D}_x\otimes \tilde{D}_x)_{i_{0,a},\hat{j}_{a,\ell}}$ and we write $\overline{N}$ for the dimension of $T$.
A {\em clow} over $T$ is a sequence $(c_1,c_2,\ldots,c_m)$ of length $m$ ($m\leq \overline{N}^2$) such that (i) each element $c_i$ is taken from the index set $CONF_*$ of $T$ and (ii) $c_1<c_i$ holds for all indices $i\in[2,m]_{\integer}$. The first element $c_1$ is called a \emph{clow head}. The {\em weight} of this clow is $\prod_{i=1}^{m} T[c_i,c_{i+1}]$, where $c_{m+1}=c_1$. A {\em clow sequence} over $A$  is a sequence $C=(C_1,C_2,\ldots,C_k)$ of clows (where $C_i=(c^{(i)}_1,c^{(i)}_2,\ldots,c^{(i)}_{m_i})$ with clow head $c^{(i)}_1$)
with a strictly increasing sequence of clow heads: $c^{(1)}_1<c^{(2)}_1<\cdots<c^{(k)}_1$. The {\em weight} of this clow sequence $C$ is the product of the weights of all the clows in $C$ and is denoted $weight(C)$. The {\em sign} of $C$ is $sgn(C)=(-1)^{\overline{N}^2+k}$.

The aforementioned result of Mahajan and Vinay \cite{MV97} helps us
calculate the determinant of $T$ using clow sequences over $T$.

\begin{lemma}\label{MV97-application}
Let $T$ be either  $I-\tilde{D}_{x}\times \tilde{D}_{x}$ and $(I-\tilde{D}_{x}\times \tilde{D}_{x})_{i_{0,a},\hat{j}_{a,\ell}}$. It then holds that $det(T)= \sum_{C\in CLOW(T)}sgn(C)weight(C) = \sum_{C\in CLOW(T)\wedge sgn(C)=0}weight(C)  - \sum_{C\in CLOW(T)\wedge sgn(C)=1}weight(C)$, where $CLOW(T)$ is the set of all clow sequences of length $\overline{N}^2-2$ over $T$, where $\overline{N}$ is the dimension of $T$.
\end{lemma}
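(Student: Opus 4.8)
The plan is to derive Lemma~\ref{MV97-application} as a direct specialization of the combinatorial ``clow-sequence'' characterization of the determinant of Mahajan and Vinay~\cite{MV97}, which holds for an arbitrary square matrix over a commutative ring and therefore applies to our matrix $T$, whose entries lie in $\real$ (indeed in $\widehat{K}$). Recall that the Leibniz formula $det(T)=\sum_{\sigma}sgn(\sigma)\prod_{c}T[c,\sigma(c)]$ sums over permutations $\sigma$ of the index set of $T$, equivalently over \emph{cycle covers}, whereas the clow-sequence sum $\sum_{C}sgn(C)weight(C)$ contains these cycle-cover contributions together with many ``spurious'' ones coming from clow sequences that revisit a vertex or share a vertex between two distinct clows. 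Accordingly the argument splits into two parts: (1) identify the cycle-cover clow sequences inside $CLOW(T)$ and match their signs and weights with the Leibniz terms; (2) show that the spurious clow sequences cancel in pairs.

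For part (1), every permutation $\sigma$ of the (linearly ordered) index set of $T$ decomposes uniquely into disjoint cycles; writing each cycle starting from its $<$-least element yields a clow whose vertices are pairwise distinct, and the family of these clows, listed in increasing order of their (necessarily distinct) heads, is a clow sequence over $T$ (of the length prescribed in the statement) whose $weight$ equals $\prod_{c}T[c,\sigma(c)]$. One then checks that $sgn(C)$, as defined by the parity of $\overline{N}^2+k$, reproduces $sgn(\sigma)$: a permutation of a set of $\overline{N}$ points with $r$ cycles has sign $(-1)^{\overline{N}-r}$, and $(-1)^{\overline{N}^2+r}=(-1)^{\overline{N}+r}=(-1)^{\overline{N}-r}$ since $\overline{N}^2\equiv\overline{N}\pmod 2$. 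Conversely, any clow sequence whose vertex set has no repetitions is exactly the cycle decomposition of some permutation, so the ``simple'' clow sequences contribute precisely $det(T)$.

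For part (2), which I expect to be the main obstacle, one constructs a \emph{sign-reversing, weight-preserving involution} $\Phi$ on the set of non-simple clow sequences. Traversing a given clow sequence in the canonical order determined by the linear order $<$ on $CONF_*$ fixed in Section~\ref{sec:multi-head-FA}, locate the first vertex $v$ at which a repetition is detected; according to whether the second occurrence of $v$ lies within the same clow or serves as the head of a later clow, let $\Phi$ either \emph{split} the offending clow at $v$ into two clows or \emph{merge} the two clows sharing $v$ into one. The verification that $\Phi$ is well defined and involutive --- in particular that the two operations are mutually inverse and that the ``first repeated vertex'' is unambiguously determined --- is the delicate point; granting it, $\Phi$ leaves the multiset of traversed arcs $\{(c_i,c_{i+1})\}$, hence $weight$, unchanged, while changing the number of clows by exactly one, so $sgn$ flips. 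Hence $\sum_{C\text{ non-simple}}sgn(C)weight(C)=0$, and combining with part (1) gives $det(T)=\sum_{C\in CLOW(T)}sgn(C)weight(C)$; separating the terms with even exponent ($sgn(C)=0$) from those with odd exponent ($sgn(C)=1$) yields the stated difference form.

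Finally, I would apply the foregoing to each of the two named matrices. The matrix $I-\tilde{D}_x\otimes\tilde{D}_x$ is already square on the ordered index set $CONF_*$, so nothing extra is needed. For the minor $(I-\tilde{D}_x\otimes\tilde{D}_x)_{i_{0,a},\hat{j}_{a,\ell}}$, whose rows and columns are indexed by $CONF_*$ with one (distinct) element deleted from each, I would first identify the row index set with the column index set via the unique order-preserving bijection, turning it into a genuine square matrix over a linearly ordered set to which the clow-sequence identity applies verbatim; the clow definition given just before the lemma is then read with respect to this common ordered index set. All remaining steps are routine bookkeeping of signs and lengths.
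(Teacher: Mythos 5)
The paper does not prove this lemma at all: it is invoked verbatim as the clow-sequence characterization of the determinant from \cite{MV97} (the sentence preceding the lemma says exactly this), so your proposal is an attempt to reprove the cited black box rather than to reconstruct an argument the paper actually gives. Your two-part strategy --- simple clow sequences are precisely the cycle covers and reproduce the Leibniz expansion with the correct sign (your parity computation $(-1)^{\overline{N}^2+k}=(-1)^{\overline{N}+k}$ is right), while the non-simple ones are cancelled by a sign-reversing, weight-preserving involution --- is indeed the standard proof of that theorem, and your treatment of the minor via the order-preserving identification of its row and column index sets is the right way to make the graph-theoretic statement apply to $(I-\tilde{D}_x\otimes\tilde{D}_x)_{i_{0,a},\hat{j}_{a,\ell}}$. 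One bookkeeping point you should not wave at: a cycle cover of the $\overline{N}$-vertex graph has total length $\overline{N}$, so for your part (1) to land inside $CLOW(T)$ the clow sequences must have total length equal to the dimension of $T$; the ``length $\overline{N}^2-2$'' in the statement cannot be taken literally (it is evidently a slip in the paper), and your parenthetical ``of the length prescribed in the statement'' silently papers over this.

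The genuine gap is in your involution $\Phi$. The dichotomy ``the second occurrence of $v$ lies within the same clow or serves as the head of a later clow'' does not exhaust the non-simple clow sequences: a vertex can recur as a \emph{non-head} vertex of a later clow (e.g.\ two clows $(c,v)$ and $(c',v)$ with $c<c'<v$), and your $\Phi$ is undefined there. Moreover, ``the first vertex at which a repetition is detected'' in a left-to-right traversal does not mesh with the split/merge pair being mutually inverse: after a split, the newly created clow is inserted somewhere in the middle of the head-ordered sequence, and the first repetition of the image need not be rediscovered at the same place, so involutivity genuinely fails for the map as described. The Mahajan--Vinay involution is set up differently precisely to avoid both problems: one takes the least $i$ such that $(C_{i+1},\dots,C_k)$ is a collection of pairwise vertex-disjoint simple cycles, walks along $C_i$ from its head, and stops at the first vertex that either closes a simple cycle inside $C_i$ (split) or belongs to one of those later cycles (merge, at an arbitrary vertex of that cycle, not its head). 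With that definition the two cases are mutually exclusive, the operations are inverse to each other, the edge multiset (hence the weight) is preserved, and the clow count changes by one, flipping $sgn$. You should either import that construction explicitly or simply cite \cite{MV97} for the identity, as the paper does.
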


Hereafter, we aim at constructing the desired multi-head 2pfa's $N_1$ and $N_2$ that ``generate'' two probabilities associated with $det[I-\tilde{D}_x\otimes \tilde{D}_x]$ and $\sum_{a\in\{\pm1\}} \sum_{j\in A} \sum_{\ell\in[0,n+1]_{\integer}} (-1)^{i_{0,a}+\hat{j}_{a,\ell}}det[(I-\tilde{D}_x\otimes \tilde{D}_x)_{i_{0,a},\hat{j}_{a,\ell}}]$, respectively, following a series of technical lemmas.

In Lemmas \ref{counter}--\ref{two-conf-compare}, we shall design four subroutines, which can be properly implemented on multi-head 2pfa's. To improve readability, we shall describe those subroutines in an informal procedural manner and their actual implementations on multi-head 2pfa's are left to the avid reader.

We begin with a simple subroutine, implementing an internal counter, say, $Count$, which enters a designated inner state exactly after $4|Q|^2(n+2)^2$ steps elapse.

\begin{lemma}\label{counter}
There is a 2head-2pfa, implementing deterministically an internal counter $Count$, that takes input of length $n$ and enters a special state $q_{done}$ when exactly $4|Q|^2(n+2)^2$ steps elapse.
\end{lemma}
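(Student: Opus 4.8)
The plan is to build a deterministic two-head 2pfa whose head positions act as the digits of a mixed-radix counter, so that the machine cycles through all configurations of a ``virtual'' product machine and rings a bell after exactly $4|Q|^{2}(n+2)^{2}$ steps. First I would observe that $4|Q|^{2}(n+2)^{2} = (2|Q|(n+2))^{2}$, and that $2|Q|(n+2) = 2|Q|\cdot|{[0,n+1]_{\integer}}|$ is exactly the number of classical configurations of a one-head 2qfa with inner-state set of size $2|Q|$ on an input of length $n$. Thus the target count is $N_0^2$ where $N_0=2|Q|(n+2)$ is a quantity the machine can ``measure'' by sweeping one head from $\cent$ to $\dollar$ while a bounded-size inner register cycles through $2|Q|$ marks.

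The main construction proceeds in two nested loops realized by the two tape heads. The inner loop uses the first head together with a mod-$2|Q|$ counter kept in the finite control: starting at $\cent$, each step either advances the mod-$2|Q|$ counter or (when that counter wraps) advances the first head one cell to the right; when the first head reaches $\dollar$ with the counter in its last residue, exactly $N_0$ steps have elapsed and the inner loop has ``fired once''. The outer loop uses the second head analogously: each time the inner loop fires, the machine advances the second head (plus its own mod-$2|Q|$ control) by one notch and resets the first head back to $\cent$ (the reset sweep itself costs cells' worth of steps, so the bookkeeping must be done carefully — see below). After the second head has traversed all $N_0$ notches, exactly $N_0\cdot N_0 = 4|Q|^2(n+2)^2$ productive steps have been counted, at which point the machine enters $q_{done}$. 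All transitions are deterministic, use only $\{0,1\}$-probabilities, and the head never falls off the tape because moves are guarded by the endmarker symbols $\cent,\dollar$; hence this is a legitimate $\twopfa$ (indeed a 2dfa) with two heads.

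The one genuine subtlety — and the step I expect to be the main obstacle — is making the total step count land on the nose rather than merely $\Theta(|Q|^2 n^2)$. Naively, resetting the first head from $\dollar$ back to $\cent$ between outer-loop iterations consumes $n+1$ extra steps per iteration, which inflates the count. I would handle this by not physically resetting: instead the inner loop runs ``there and back,'' i.e.\ the first head sweeps right for one pass and then left for the next pass, with the mod-$2|Q|$ control distinguishing the two directions, so that every step is a productive counting step and no step is wasted on repositioning. One then chooses the exact bijection between (first-head position, first-head control residue, second-head position, second-head control residue) and the integers in $[0,4|Q|^2(n+2)^2-1]_{\integer}$, and defines $\delta$ to be the successor map of that bijection, with the successor of the maximal element being the transition into $q_{done}$. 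A short case check on the $\cent$ and $\dollar$ boundary transitions (and on the direction-flip of each head) confirms that exactly $4|Q|^2(n+2)^2$ steps elapse before $q_{done}$ is entered and that the machine is well-defined and deterministic, which completes the proof. The detailed transition table is routine and is left to the reader, as announced.
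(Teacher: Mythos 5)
Your construction is correct and takes essentially the same approach as the paper's: both use the two tape heads as nested loop counters with the finite control supplying the constant factor, the only difference being that the paper puts the entire factor $4|Q|^2$ into the inner sweep (idling $4|Q|^2-1$ steps at each cell while the second head counts $n+2$ repetitions), whereas you split it as $2|Q|\times 2|Q|$ between the two loops. Your boustrophedon sweep to avoid spending uncounted steps on repositioning the first head is a welcome extra precision, since the paper's proof asserts that each sweep-plus-return takes exactly $4|Q|^2(n+2)$ steps without explicitly accounting for the return trip.
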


\begin{proof}
Assume that an input of length $n$ is given on an input tape. The first head starts at cell $0$ and moves to the right with idling for $4|Q|^2-1$ steps at each cell until it finishes scanning $\dollar$. When it stops, the head returns to cell $0$. This process takes exactly $4|Q|^2(n+2)$ steps. We repeat the process for $n+2$ times. This repetition can be counted by moving the second head from cell $0$ to cell $n+1$. In the end of the whole process, we enter a designated inner state $q_{done}$ and halt.
\end{proof}

It is also easy to move the desired number of heads to the current position of head $1$.

\begin{lemma}\label{how-to-copy}
Let $t\in\nat^{+}$ and $h\in[0,n+1]_{\integer}$. There is a $(t+1)$-head 2pfa $M$  that deterministically works as follows. On input of length $n$, $M$ starts with head 1 stationed at cell $h$. The machine $M$ moves heads $2\sim t$ (\ie from head $2$ to head $t$) from cell $0$ to cell $h$ and it returns head 1 to cell $h$ and head $t+1$ to cell $0$. For later reference, we call head $t+1$ a \emph{working head}.
\end{lemma}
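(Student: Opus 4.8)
The idea is a simple two--phase \emph{shuttle} that never needs to ``read'' a head's cell index; it only needs to detect when a designated head scans the left endmarker $\cent$, which is directly available to a 2pfa through the symbol currently scanned. Recall also that a $(t+1)$head-2pfa advances all $t+1$ heads simultaneously at each step, each head moving in a direction drawn from $D=\{0,\pm1\}$, so any coordinated multi-head motion we wish to perform costs a single step. The precondition is that head $1$ sits on cell $h$ and heads $2,\ldots,t+1$ all sit on cell $0$; the working head $t+1$ is used merely as a temporary register for the value $h$.

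\textbf{Phase 1 (transfer $h$ onto the working head).} From the initial configuration, repeatedly move head $1$ one cell to the left and head $t+1$ one cell to the right, keeping heads $2,\ldots,t$ stationary. After exactly $h$ steps head $1$ scans $\cent$, and at that moment head $t+1$ has been displaced to cell $h$. On detecting $\cent$ under head $1$, proceed to the next phase; if $h=0$ this phase is vacuous. \textbf{Phase 2 (broadcast $h$ back to heads $1,\ldots,t$).} Now head $t+1$ rests on cell $h$ and heads $1,\ldots,t$ all rest on cell $0$. Repeatedly move head $t+1$ one cell to the left while simultaneously moving each of heads $1,\ldots,t$ one cell to the right. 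After exactly $h$ further steps head $t+1$ scans $\cent$, and each of heads $1,\ldots,t$ has moved $h$ cells to the right of cell $0$, hence now occupies cell $h$. On detecting $\cent$ under head $t+1$, halt the subroutine. The case $t=1$ is covered verbatim by this description, the set of heads $2,\ldots,t$ being empty.

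\textbf{Correctness and the only point of care.} At termination heads $1,\ldots,t$ all occupy cell $h$ and head $t+1$ occupies cell $0$, exactly as claimed; no coin is ever flipped, so $M$ is deterministic, and the subroutine runs in $2h\le 2(n+1)$ steps. The finitely many auxiliary inner states needed to record ``which phase we are in'' are absorbed into $Q$, and the switch between phases is triggered purely by the scanned symbol $\cent$, which is what makes the procedure deterministic and realizable as a 2pfa transition function. There is no real obstacle here beyond routine bookkeeping; the single thing to verify is that no head ever runs off the tape, which holds because $0\le h\le n+1$ by hypothesis, so in Phase 1 head $1$ travels from cell $h$ exactly to cell $0$ and head $t+1$ from cell $0$ exactly to cell $h\le n+1$, and Phase 2 is the mirror image.
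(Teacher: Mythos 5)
Your proposal is correct and uses essentially the same shuttle technique as the paper: transfer the distance $h$ onto the working head by moving head $1$ left while the working head moves right until $\cent$ is detected, then reverse to broadcast $h$ to the remaining heads while returning the working head to cell $0$. The paper merely moves heads $2\sim t$ rightward already in the first phase (alongside head $t+1$) rather than in the second, which is an immaterial variation.
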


\begin{proof}
Let $t'=t+1$ for convenience. Assume that head 1 is initially located at cell $h\in[0,n+1]_{\integer}$ and that all the other $t$ heads are stationed at call $0$. The desired $t'$-head 2pfa $M$ behaves in the following manner. We first reset heads $2\sim t'$ to cell $0$ and move them to the right simultaneously for the same number of steps that require us to move head 1 back to cell $0$ from cell $h$. Heads $2\sim t'$ are now positioned at cell $h$. In a similar way, we use head $t+1$ to make head $1$ return to cell $h$. As a result, head $t+1$ comes back to cell $0$.
Finally, we enter a designated inner state $q_{done}$ and halt.
\end{proof}

An important subroutine is to generate all possible configurations in $CONF_*$ with equal probability, which is roughly $1/4|Q|^2(n+2)^2$.

\begin{lemma}\label{equiprob-generation}
There is a 4head-2pfa $M$ that, on any input of length $n$, generates all configurations in $CONF_*$ using heads $1\sim 2$ (i.e., from head $1$ to head $2$) with equal probability  $\frac{1}{4|Q|^2}2^{-2\ceilings{\log_2(n+2)}}$ and halts in worst-case $O(n)$ time. During a run of $M$, we may have rejecting computation paths. In the end, heads $3\sim 4$ must return to cell $0$.
\end{lemma}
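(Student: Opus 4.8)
The plan is to build a four-head 2pfa $M$ which, started with all four heads on cell $0$, on each computation path either enters $Q_{rej}$ and halts or finishes in a configuration encoding some $c=((q_1,q_2),(\ell_1,\ell_2),(b_1,b_2))\in CONF_*$ — meaning $(q_1,q_2,b_1,b_2)$ is stored in the finite control, head $1$ rests on cell $\ell_1$, head $2$ on cell $\ell_2$, and heads $3,4$ are back on cell $0$ — and such that every $c\in CONF_*$ is produced with the common probability $\rho:=\tfrac1{4|Q|^2}2^{-2\ceilings{\log_2(n+2)}}$, the remaining probability going to rejecting paths. Writing $m:=\ceilings{\log_2(n+2)}$, I would realise $\rho$ as a product of six independent factors: $\tfrac1{|Q|}$ for each of $q_1,q_2$, $\tfrac12$ for each of $b_1,b_2$, and $2^{-m}$ for each of $\ell_1,\ell_2$. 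The inner states and signs are handled first, in a pass that never moves a head: in the finite control $M$ picks $q_1,q_2$ (each uniformly in $Q$) and $b_1,b_2$ (each uniformly in $\{\pm1\}$) and records them, a sign costing one fair coin and a $q_i$ being obtained from $\ceilings{\log_2|Q|}$ fair coins with a rejection whenever the resulting value leaves $[0,|Q|-1]_{\integer}$ (this matches the stated constant exactly once $|Q|$ is a power of two, which one may assume after padding $Q$ with inert dead states). This pass takes $\order{1}$ steps.

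The substance is placing heads $1$ and $2$. I would use a deterministic-termination subroutine that, for a designated head $H$, puts $H$ on a cell drawn uniformly from $\{0,1,\dots,2^{m}-1\}$ and rejects exactly when that index exceeds $n+1$; the head accumulates $\ell=\sum_{j=0}^{m-1}c_j2^{\,j}$ for fresh fair coins $c_0,\dots,c_{m-1}$. A working head $W_1$ keeps the current power of two $2^{\,j}$ (initially on cell $1$) and a second working head $W_2$ is scratch: the primitive ``add the position of $W_1$ to $H$'' advances $H$ and $W_2$ together until $W_2$ meets $W_1$ — rejecting if $H$ reaches $\dollar$ first — then returns $W_2$ to cell $0$, while ``double $W_1$'' first walks $W_2$ up to $W_1$'s cell and then slides $W_1$ right and $W_2$ left in lockstep until $W_2$ hits $\cent$. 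Both primitives touch only $H,W_1,W_2$ and use only coincidence-of-heads and head-at-endmarker tests, in the spirit of Lemmas \ref{counter}--\ref{how-to-copy}. The loop halts at the first $j$ for which doubling $W_1$ would leave the tape; since $2^{m-1}\le n+1<2^{m}$ this is after bit $c_{m-1}$, so exactly $m$ coins are spent, each $\ell\le n+1$ arises along a unique coin sequence (hence with probability $2^{-m}$, as the partial sums never decrease), and each $\ell>n+1$ forces a rejection at the first partial sum above $n+1$. Phase $1$ runs this with $H=$ head $1$ and $W_1,W_2=$ heads $3,4$ (head $2$ idle on cell $0$), then returns heads $3,4$ to cell $0$; Phase $2$ repeats it with $H=$ head $2$ and the same heads $3,4$, leaving head $1$ — now frozen on $\ell_1$ — untouched, which is possible precisely because the primitives never involve more than three heads; a last reset brings heads $3,4$ back to cell $0$.

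For correctness: each of the six choices consumes its own fresh coins, so they are independent and every $c\in CONF_*$ is reached with probability $\tfrac1{|Q|^2}\cdot\tfrac14\cdot 2^{-2m}=\rho$, with all other probability on rejecting paths; on every non-rejecting path heads $3,4$ finish on cell $0$ and heads $1,2$ on $\ell_1,\ell_2$. For the time bound, processing bit $c_j$ is a constant number of head sweeps each of length $\le 2^{\,j}$, i.e.\ $\order{2^{\,j}}$ steps, so each of Phases $1,2$ costs $\sum_{j=0}^{m-1}\order{2^{\,j}}=\order{2^{m}}=\order{n}$, the resets of the working heads add $\order{n}$, and the first pass is $\order{1}$; hence $M$ halts in worst-case $\order{n}$ time using only its four heads.

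The real difficulty is the \emph{worst-case} linear bound. The textbook ways to sample a uniform cell — letting the head do a symmetric random walk, or rejection-sampling with retries — terminate only with probability $1$ and have unbounded worst-case length, so one is forced into the deterministic-termination, binary-expansion design above; and carrying it out while heads $1,2$ stay reserved as outputs and heads $3,4$ must be returned to cell $0$ leaves essentially no spare head, which is why the ``add'' and ``double'' primitives have to be coded with coincidence and endmarker tests rather than with auxiliary markers. Spelling out those primitives on the transition function — the part the paper leaves to the reader — is the only fiddly step; everything else is bookkeeping of the kind already illustrated by Lemmas \ref{counter} and \ref{how-to-copy}.
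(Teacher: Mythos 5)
Your overall strategy is the same as the paper's: sample $(q_1,q_2,b_1,b_2)$ in the finite control with probability $1/4|Q|^2$, then place each of heads $1$ and $2$ by drawing $\ceilings{\log_2(n+2)}$ fair coins that determine a binary expansion of the target cell, rejecting on overflow, so that every cell in $[0,n+1]_{\integer}$ gets probability exactly $2^{-\ceilings{\log_2(n+2)}}$ and the whole phase costs a geometric series summing to $O(n)$ steps. The only real difference is low-level: the paper builds the position MSB-first by repeatedly doubling the accumulated value while ping-ponging it between the target head and a working head (with head $4$ mirroring the value to detect overflow at $\dollar$), whereas you build it LSB-first by keeping the accumulated value on the target head and adding successive powers of two stored on a working head. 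Both realizations give the same distribution and the same worst-case bound; your variant has the advantage that the target head is written only once and never has to be relocated, and your uniqueness/overflow analysis (non-decreasing partial sums) is cleaner than the paper's.

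There is, however, one concrete flaw in your implementation: your ``add the position of $W_1$ to $H$'' primitive advances $H$ and $W_2$ together \emph{until $W_2$ meets $W_1$}, i.e., it assumes a head-coincidence test. A multi-head 2pfa cannot sense that two heads occupy the same cell --- each head only reports the symbol it scans --- and the paper's subroutines (Lemmas \ref{counter}--\ref{how-to-copy}) deliberately use nothing beyond simultaneous movement and endmarker detection; so your appeal to ``coincidence-of-heads tests, in the spirit of Lemmas \ref{counter}--\ref{how-to-copy}'' mischaracterizes the available primitives. The fix is routine: to add $W_1$'s position $p$ to $H$, move $W_1$ left while moving $H$ and $W_2$ right in lockstep until $W_1$ reads $\cent$ (rejecting if $H$ is pushed past $\dollar$), which leaves $W_2$ at $p$ and $W_1$ at $0$, and then swap the roles of $W_1$ and $W_2$; doubling is handled the same way with a two-for-one stride. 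With that repair the head budget and the $O(n)$ bound still go through. A smaller point: your padding of $Q$ to a power of two yields the constant $\frac{1}{4}2^{-2\ceilings{\log_2|Q|}}$ rather than the stated $\frac{1}{4|Q|^2}$; this is harmless for the downstream use in Lemma \ref{det-simulation} provided $f_1$ and $f_2$ are adjusted accordingly, but you should say so (the paper itself silently assumes a $1/4|Q|^2$-way branch is a legal transition).
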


\begin{proof}
We shall describe the behavior of the desired 4head-2pfa $M$, which uses heads $1\sim 4$, where head $4$ is particularly used to
keep ``time'' in order to produce equal probability.
In the end, we ``free'' heads $3\sim 4$ by returning them to call $0$.

\ms

1. Start with all heads located at cell $0$. Generate all indices $((q,p),(a,b))$ in $Q^2\times\{\pm1\}^2$  using $M$'s inner states with equal probability $1/4|Q|^2$ without moving any head.

2. As the initial setup, we flip a fair coin $c\in\{0,1\}$ and move head $1$ for $c$ step and head $4$ for one step, both to the right. Set $s=1$. In Stages $3$--$4$, after each stage, we must check if head $4$ reaches $\dollar$. If this is the case, then we return head $4$ to cell $0$ and then advance to Stage 5.

3. Flip a coin $c\in\{0,1\}$. Assuming that head $4-s$ is at cell $0$ and head $s$ is located at cell $i$, we intend to move head $4-s$ to cell $2i+c$. While head $s$ scans non-$\cent$ symbol, we repeat the following procedure (*); however, if head $4$ reaches $\dollar$ before (*) ends, then we reject the input immediately and halt.
\begin{quote}\vs{-2}
(*) Idle head $s$ for 2 steps and move it to the left. Move heads $4-s$ and $4$ for $2$ steps, both to the right.
\end{quote}\vs{-2}

4. Unless head $4-s$ scans $\dollar$ and $c=1$, we move head $4-s$ for $c$ step and head $4$ for one step, both to the right. Otherwise, we reject the input immediately and halt. Update $s$ to be $4-s$ and go back to Stage 3.

5. If $s=1$, then do nothing at this stage. Assuming that $s=3$ and head $1$ is stationed at cell $0$, we move head $3$ to the left and head $1$ to the right simultaneously until head $3$ reaches $\cent$.

6. We repeat Stages 1--5 using heads $2\sim 3$ (instead of heads $1\&3$). When the procedure ends without rejecting the input, we enter a designated inner state $q_{done}$.
\end{proof}

Recall the linear ordering on $CONF_*$ defined in Section \ref{sec:multi-head-FA}. Given two configurations in $CONF_*$, we can determine which one precedes the other according to this linear ordering.

\begin{lemma}\label{two-conf-compare}
There is a 7head-2pfa $M$ that deterministically works as follows. Let $conf_1,conf_2\in CONF_*$. On any input of length $n$, $M$ starts with both a configuration $conf_1$ using heads $1\sim 2$ and a configuration $conf_2$ using heads $3\sim 4$. The machine $M$ enters $q_1$ if $conf_1\geq conf_2$, and $q_2$ if $conf_1< conf_2$. Moreover, $M$ recovers the given two configurations when it enters either $q_1$ or $q_2$.
\end{lemma}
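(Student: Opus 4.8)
A configuration $conf=((q_1,q_2),(\ell_1,\ell_2),(a_1,a_2))\in CONF_*$ splits into its \emph{inner part} $h=((q_1,q_2),(a_1,a_2))\in Q^2\times\{\pm1\}^2$, which for both $conf_1$ and $conf_2$ together is a constant amount of data that can be held in the finite control, and its \emph{positional part} $(\ell_1,\ell_2)\in[0,n+1]_{\integer}^2$, recorded by the positions of heads $1,2$ for $conf_1$ and of heads $3,4$ for $conf_2$. The three defining clauses of the linear order are precisely a lexicographic comparison: first compare $h_1$ with $h_2$; on a tie compare $\ell_1$ with $m_1$ (the first positional coordinates); on a further tie compare $\ell_2$ with $m_2$, the last comparison using ``$\geq$'' to output $q_1$. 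The plan is to carry out these three comparisons in order, using heads $5,6,7$ as scratch, while restoring heads $1$--$4$ to where they started.

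The first comparison is purely internal: since a strict total order on the finite set $Q^2\times\{\pm1\}^2$ is fixed in advance, a finite table built into the transition function decides in one step whether $h_1>h_2$ (enter $q_1$), $h_1<h_2$ (enter $q_2$), or $h_1=h_2$ (continue); no head moves.

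For the positional comparisons I would build one deterministic subroutine $Cmp(i,j)$ that compares the cell occupied by head $i$ with that occupied by head $j$, reports ``$>$'', ``$=$'', or ``$<$'', and leaves heads $i$ and $j$ in their original cells. First copy head $i$'s position onto working head $5$: simultaneously walk head $i$ leftward and heads $5$ and $7$ rightward, one cell per step, until head $i$ reaches $\cent$; then simultaneously walk head $i$ rightward and head $7$ leftward until head $7$ reaches $\cent$. Now head $5$ sits at head $i$'s original cell, head $i$ is restored, and head $7$ is back at cell $0$ and free. Repeat the identical trick, again using head $7$ as mirror, to copy head $j$'s position onto head $6$; heads $i$ and $j$ are then both restored. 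Finally ``race'' heads $5$ and $6$ by walking both leftward in lockstep: if head $5$ hits $\cent$ strictly before head $6$ then head $i<$ head $j$; if head $6$ wins then head $i>$ head $j$; if they reach $\cent$ at the same step then the cells are equal. Reset heads $5,6$ to cell $0$ and return the verdict. This copy-and-race is a routine variant of Lemma~\ref{how-to-copy}, and at most five heads ($i,j,5,6,7$) are ever in motion at once, so $7$ heads suffice.

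The machine $M$ then runs the finite-control test on $h_1,h_2$; if inconclusive it runs $Cmp(1,3)$ on $\ell_1,m_1$ and enters $q_1$ on ``$>$'', $q_2$ on ``$<$''; if still inconclusive it runs $Cmp(2,4)$ on $\ell_2,m_2$ and enters $q_1$ on ``$>$'' or ``$=$'' and $q_2$ on ``$<$''. Since each stage restores heads $1$--$4$ and each call of $Cmp$ leaves heads $5,6,7$ at cell $0$, the two input configurations are recovered exactly when $q_1$ or $q_2$ is entered; the computation is deterministic and halts in $O(n)$ steps. The only point needing care is the head budget together with the restoration bookkeeping — ensuring each copy is undone by its mirror head before that head is reused and that no comparison disturbs a head holding data needed later — and beyond that the construction is elementary finite-automaton programming, so I do not anticipate a genuine obstacle.
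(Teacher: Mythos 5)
Your proposal is correct and follows essentially the same route as the paper: compare the finite inner parts $h_1,h_2$ in the control first, then lexicographically compare the head positions by copying them onto scratch heads $5$ and $6$ (via the mirror-head trick of Lemma \ref{how-to-copy}, with head $7$ as the working head) and racing them leftward toward $\cent$, with the final coordinate compared using ``$\leq$'' rather than ``$<$''. The paper's proof is organized into the same three stages with the same restoration bookkeeping, so there is nothing substantive to add.
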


\begin{proof}
The desired 7head-2pfa $M$ is described below. Let $conf_1= ((q_1,q_2),(\ell_1,\ell_2),(a_1,a_2))$ and $conf_2 = ((p_1,p_2),(m_1,m_2),(b_1,b_2))$ be any two given configurations in $CONF_*$.

\ms

1. Firstly, compare $h_1=((q_1,q_2),(a_1,a_2))$ and $h_2=((p_1,p_2),(b_1,b_2))$ without moving heads. If $h_2<h_1$, then enter $q_1$; if $h_2>h_1$, then enter $q_2$. Hereafter, we assume that $h_1=h_2$.

2. Secondly, we want to compare $(\ell_1,\ell_2)$ and $(m_1,m_2)$ in Stages $3$--$6$. Heads $1\sim 2$ are at present stationed at cells $\ell_1$ and $\ell_2$, respectively, and heads $3\sim 4$ are respectively at cells $m_1$ and $m_2$.

3. We intend to check whether $m_1<\ell_1$. Run a machine given by Lemma \ref{how-to-copy} to move heads $5\sim6$ to the positions of heads $1\&3$, respectively, with the help of working head 7.

4. Move heads $5\sim 6$ to the left simultaneously step by step. If head 6 reaches $\cent$ before head 5 does, then return heads $5\sim 6$ back to cell $0$ and enter $q_1$. If head 5 reaches $\cent$ before head 6 does, then return heads $5\sim 6$ to cell $0$ and enter $q_2$.

5. Here, we assume that heads $5\sim 6$ reach cell $0$ at the same time.
Copy the positions of heads $2\&4$ using heads $5\sim 6$, respectively, with the help of working head 7.

6. We intend to check whether $m_2\leq \ell_2$. Move heads 5 and 6 to the left simultaneously step by step. If head 6 reaches $\cent$ before or at the same time head 5 does, then return heads $5\sim 6$ back to cell $0$ and enter $q_1$. If head 5 reaches $\cent$ before head 6 does, then return heads $5\sim 6$ to cell $0$ and enter $q_2$.
\end{proof}

Hereafter, we shall give the desired algorithms generating two values of  $det[I-\tilde{D}_x\otimes\tilde{D}_x]$ and $\sum_{a\in\{\pm1\}}\sum_{j\in A}\sum_{\ell\in[0,n+1]_{\integer}} (-1)^{i_{0,a}+\hat{j}_{a,\ell}} det[(I-\tilde{D}_x\otimes\tilde{D}_x)_{i_{0,a},\hat{j}_{a,\ell}}]$ in a probabilistic manner. We begin with describing the algorithm for $det[I-\tilde{D}_x\otimes\tilde{D}_x]$. We note that
the matrix $I-\tilde{D}_x\otimes\tilde{D}_x$ satisfies the following.
Let $conf_1=((p_1,p_2),(m_1,m_2),(b_1,b_2))$ and $conf_2=((q_1,q_2),(\ell_1,\ell_2),(a_1,a_2))$ in $CONF_*$.
In the case of $conf_1\neq conf_2$,
$(I-\tilde{D}_x\otimes \tilde{D}_x)[conf_1,conf_2]$ equals $-\delta_{a_1b_1}(q_1,x_{\ell_1},p_1,m_1-\ell_1) \delta_{a_2b_2}(q_2,x_{\ell_2},p_2,m_2-\ell_2)$ if $|m_1-\ell_1|,|m_2-\ell_2|\leq 1$, and $0$ otherwise.
In contrast,  $(I-\tilde{D}_x\otimes \tilde{D}_x)[conf_1,conf_1]$ equals $1-\delta_{+1}(q_1,x_{\ell_1},q_1,0) \delta_{+1}(q_2,x_{\ell_2},q_2,0)$.
To clarify the transition probabilities produced by an application of a single move of $M$, we intentionally write $p[conf_2\leftarrow conf_1]$ in place of
$(I-\tilde{D}_x\otimes\tilde{D}_x)[conf_1,conf_2]$.

To produce the probability expressing $det[I-\tilde{D}_x\otimes\tilde{D}_x]$, recall Lemma \ref{MV97-application}, in which $det[I-\tilde{D}_x\otimes\tilde{D}_x]$ is calculated as $\sum_{C\in CLOW(T)\wedge sgn(C)=0} weight(C) - \sum_{C\in CLOW(T)\wedge sgn(C)=1} weight(C)$, where $T=I-\tilde{D}_x\otimes \tilde{D}_x$. To evaluate these two summations, we need to generate each clow sequence $C$ over $T$ with equal probability and produce its weight $weight(C)$ probabilistically. Finally, if $sgn(C)=0$, then we enter accepting states; otherwise, we enter rejecting states.

\ms
\n{\sc Algorithm for} $det[I-\tilde{D}_x\otimes\tilde{D}_x]$:
\s

1. Set $Count$, a counter, given in Lemma \ref{counter} to be $0$.
Run a 4head-2pfa given in Lemma \ref{equiprob-generation} to generate all possible  configurations in $CONF_*$
with equal probability $\frac{1}{4|Q|^2}2^{-2\ceilings{\log_2(n+2)}}$. Call by $conf_0$ a resulted configuration, which corresponds to a clow head, and let  $conf_0 =((q_1,q_2),(\ell_1,\ell_2),(a_1,a_2))$. We set $sign=0$ and define  $conf_1$ to be $conf_0$.

2. Increment the counter by $1$. Here, we want to generate each clow sequence over $T$ with equal probability by executing Stages 2a--2d as long as   $Count$ is less than  $4|Q|^2(n+2)^2$.

2a. Run a 4head-2pfa given in Lemma \ref{equiprob-generation} to generate all configurations $conf_2$  with equal probability $\frac{1}{4|Q|^2}2^{-2\ceilings{\log_2(n+2)}}$ and run a 7head-2pfa given in Lemma \ref{two-conf-compare} to check if $conf_2\geq conf_0$. If not, then
we enter both accepting states and rejecting states with equal probability $1/2$ and halt to eliminate this case. Otherwise, let  $conf_2=((p_1,p_2),(m_1,m_2),(b_1,b_2))$.

2b. Enter two different inner states, say, $q'$ and $q_{done}$ with probabilities $p[conf_2\leftarrow conf_1]$ and $1-p[conf_2\leftarrow conf_1]$, respectively. In inner state $q_{done}$, accept and reject $x$ with equal probability $1/2$ and halt so that this does not contribute to the calculation of the desired determinant. In inner state $q'$, on the contrary, we determine whether  $conf_0\geq conf_2$ by running a 7head-2pfa given in Lemma \ref{two-conf-compare}. Update $sign$ to be $1-sign$ if $conf_1=conf_2$; do nothing otherwise. The last case is needed because $(I-\tilde{D}_x\otimes\tilde{D}_x)[conf_2,conf_1]$ is not positive and it should not be included.

2c. In the case of $conf_2> conf_0$, we
reset $conf_1$ to be $conf_2$. Run a 4head-2pfa given in Lemma \ref{equiprob-generation} to generate all configurations $conf$ with equal probability $\frac{1}{4|Q|^2}2^{-2\ceilings{\log_2(n+2)}}$. If $conf$ is of the form $((q_0,q_0),(0,0),(+1,+1))$, then we should clear this configuration and go to Stage 2; otherwise, we both accept and reject $x$ with equal probability $1/2$ and halt.

2d. In the case of $conf_2\leq conf_0$, run a 4head-2pfa given in Lemma \ref{equiprob-generation} to generate all configurations $conf_3$ with probability $\frac{1}{4|Q|^2}2^{-2\ceilings{\log_2(n+2)}}$.
Using Lemma \ref{two-conf-compare}, we check if $conf_3>conf_0$.
If $conf_3> conf_0$, then we reset $conf_0$ to be this $conf_3$. Reset $sign$ to be $1-sign$ and go back to Stage 2.
Otherwise, accept and reject $x$ with equal probability $1/2$ and halt.

3. After Stage 2, the counter must have hit $4|Q|^2(n+2)^2$ by finishing the generation of each clow sequence. Enter two different inner states, say, $q''$ and $q_{done}$ with probabilities $p[conf_0\leftarrow conf_1]$ and $1- p[conf_0\leftarrow conf_1]$, respectively. In inner state $q_{done}$,  enter accepting states and rejecting states with equal probability and halt. In inner state $q''$, reset $sign$ to be $1-sign$ if $conf_0=conf_1$; do nothing otherwise.

4. If $sign=1$, then accept; otherwise, reject.

\ms

To produce $\sum_{a\in\{\pm1\}} \sum_{j\in A}\sum_{\ell\in[0,n+1]_{\integer}} (-1)^{i_{0,a}+\hat{j}_{a,\ell}} det[(I-\tilde{D}_x\otimes\tilde{D}_x)_{i_{0,a},\hat{j}_{a,\ell}}]$ as a  probability, we first  describe how to produce $det[(I-\tilde{D}_x\otimes\tilde{D}_x)_{i_{0,a},\hat{j}_{a,\ell}}]$ for fixed indices $j\in A$, $\ell\in[0,n+1]_{\integer}$, and $a\in\{\pm1\}$.

\ms
\n{\sc Algorithm for} $det[(I-\tilde{D}_x\otimes\tilde{D}_x)_{i_{0,a},\hat{j}_{a,\ell}}]$:
\s

The desired algorithm executes Stages $1$--$4$ of the algorithm for $det[I-\tilde{D}_x\otimes\tilde{D}_x]$ except for the following points.
Initially, we are given 2 configurations $i_{0,a}$ and $\hat{j}_{a,\ell}$ in $CONF_*$. By Lemma \ref{how-to-copy}, we can freely copy those configurations without changing the original ones.
In Stages 2b and 3 while generating probabilities $p[conf_2\leftarrow conf_1]$ and $p[conf_0\leftarrow conf_1]$, we first check whether $conf_1=i_{0,a}$ or $conf_2=\hat{j}_{a,\ell}$ since $(I-\tilde{D}_x\otimes\tilde{D}_x)_{i_{0,a},\hat{j}_{a,\ell}}$ contains neither $i_{0,a}$-row nor $\hat{j}_{a,\ell}$-column. If so, accept and reject $x$ with equal probability $1/2$ and halt. Otherwise, we follow the original stages.

\ms
\n{\sc Algorithm for} $\sum_{a\in\{\pm1\}} \sum_{j\in A}\sum_{\ell\in[0,n+1]_{\integer}} (-1)^{i_{0,a}+\hat{j}_{a,\ell}} det[(I-\tilde{D}_x\otimes\tilde{D}_x)_{i_{0,a},\hat{j}_{a,\ell}}]$:
\s

1. Generate $(a,j)\in \{\pm1\}\times A$ using inner states with equal probability $1/2|A|$.

2. Following Stages 2--5 of the algorithm given in Lemma \ref{equiprob-generation}, we generate configurations with head stationed at each cell $\ell\in[0,n+1]_{\integer}$ with equal probability $2^{-2\ceilings{\log_2(n+2)}}$.

3. Run the algorithm for $det[(I-\tilde{D}_x\otimes\tilde{D}_x)_{i_{0,a},\hat{j}_{a,\ell}}]$ except that, before Stage 4, we reset $sign$ to be $sign + i_{0,a}+\hat{j}_{a,\ell}\;\;\mathrm{mod}\;2$ to include the extra term of $(-1)^{i_{0,a}+\hat{j}_{a,\ell}}$, where we identify the elements $i_{0,a}$ and $\hat{j}_{a,\ell}$ with their associated numbers as described before.

\ms

It is tedious but not difficult to check whether the above algorithms correctly compute the intended determinants in Lemma \ref{2-head-gap-simulation}. Therefore, we have completed the proof of the lemma.
\end{proofof}

\section{Challenging Questions}

Throughout this paper, we have extensively studied the exotic behaviors of constant-space quantum computation. Because of their simplicity and the continuation of our early study \cite{NY04b,NY09,NY15,VY14,Yam14}, we have modeled such computation using measure-many 2-way quantum finite automata (or 2qfa's), which were first considered in \cite{KW97} as a quantum-mechanical extension of 2-way probabilistic finite automata (or 2pfa's).
In the past two decades since the introduction of quantum finite automata, we have tried to determine the precise power of quantum computation on those devices.

In this paper, we have resolved a few questions regarding (1) relationships among various acceptance criteria of 2qfa's, (2) bounds of the running time required for 2qfa's to recognize languages, (3) non-recursiveness by the choice of (transition) amplitudes of 2qfa's, and (4) efficient classical simulations of 2qfa's.
Nevertheless, there still remain numerous unsolved questions concerning their behaviors and their computational complexity.

For our future study, we wish to raise a few but important questions, which have left open in this paper.

\begin{enumerate}\vs{-1}
  \setlength{\topsep}{-2mm}%
  \setlength{\itemsep}{1mm}%
  \setlength{\parskip}{0cm}%

\item Strengthen Lemma  \ref{basic-inclusion}(1\&3--4),  Corollary \ref{amplitude-reduction}(2), and Theorem \ref{twoEQFA-SL} by proving that  each of the class inclusions stated in them is actually a proper inclusion.

\item In Theorem \ref{twoEQFA-SL} and Lemma \ref{2-head-gap-simulation}, we have not determined the exact value of positive integer $k$. Determine
    the minimal positive integer $k$ that satisfies Theorem \ref{twoEQFA-SL} and Lemma \ref{2-head-gap-simulation}.

\item We have shown in Proposition \ref{QFA-non-recursive-NO2} that $\twobqfa_{\complex}(2\mbox{-}head,\abshalt)$ contains a non-recursive language. Can we reduce $2$ tape heads in $\twobqfa_{\complex}(2\mbox{-}head,\abshalt)$ to a single tape head (namely, $\twobqfa_{\complex}(\abshalt)\nsubseteq \mathrm{REC}$)?

\item Explore more relationships among language families, such as $\twopqfa_{K}$, $\twobqfa_{K}$, $\twocequalqfa_{K}$, and quantum interactive proof systems of Nishimura and Yamakami \cite{NY04b,NY09,NY15}.

\item As noted in Section \ref{sec:quantum-finite-automata}, unbounded-error and exact-error 1qfa's are no more powerful than their classical counterparts. On the contrary, 2qfa's are quite different in power from 2pfa's. Through Section \ref{sec:classical-simulation}, we have tried to characterize $\twopqfa_{K}$ and $\twocequalqfa_{K}$ in terms of classical computation models. Give the precise characterizations of $\twopqfa_{K}$, $\twobqfa_{K}$, and $\twocequalqfa_{K}$ using appropriate classical models.

\item A multi-head model of qfa's has been briefly discussed in Sections \ref{sec:quantum-finite-automata} and \ref{sec:non-recursive} but little is known for this special model except for an early study of \cite{ABF+99}. When we turn our eyes to a classical case, we already know that multi-head 2pfa's with cut points precisely characterize $\mathrm{PL}$ \cite{Mac97}. Does a similar characterization hold also for multi-head 2qfa's?

\item Can Corollary \ref{amplitude-reduction}(1--2) be extended to $\complex$? Prove or disprove that $\twopqfa_{\complex}=\co\twopqfa_{\complex}$. The same question is still open for $\mathrm{PQP}_{\complex}$, which is a polynomial-time counterpart of $\twopqfa_{\complex}$. See \cite{Yam03} for the $\mathrm{PQP}_{\complex}=\co\mathrm{PQP}_{\complex}$ problem.

\item Many constructions of 2qfa's may be boiled down to appropriate manipulations of quantum functions defined by 2qfa's. Lemma \ref{quantum-functions} has briefly discussed properties of those quantum functions. Explore more properties and develop a theory of quantum functions based on 2qfa's.
\end{enumerate}

\let\oldbibliography\thebibliography
\renewcommand{\thebibliography}[1]{%
  \oldbibliography{#1}%
  \setlength{\itemsep}{0pt}%
}
\bibliographystyle{plain}

\end{document}